\newtheorem{theorem}{Theorem}
\newtheorem{corollary}{Corollary}
\newtheorem{proposition}{Proposition}
\newtheorem{definition}{Definition}
\newcommand{\set}[1]{\{#1\}} 
\newcommand{\CommentLine}[1]{
    \State $\triangleright$ {#1}
}
\DeclareMathOperator*{\E}{\mathbb{E}}
\title{Using Skip Graphs for Increased NUMA Locality}
\author{Samuel Thomas, Ana Hayne, Jonad Pulaj, Hammurabi Mendes\\ Davidson College, NC, USA\\ \texttt{sathomas,anhayne,jopulaj,hamendes}@davidson.edu}
\date{}
\begin{document}

\maketitle

\begin{abstract}
We present a data partitioning technique performed over skip graphs that promotes significant quantitative and qualitative improvements on NUMA locality in concurrent data structures, as well as reduced contention.
We build on previous techniques of thread-local indexing and laziness, and, at a high level, our design consists of a partitioned skip graph, well-integrated with thread-local sequential maps, operating without contention. As a proof-of-concept, we implemented map and relaxed priority queue ADTs using our technique. Maps were conceived using lazy and non-lazy approaches to insertions and removals, and our implementations are shown to be competitive with state-of-the-art maps. We observe a 6x higher CAS locality, a 68.6\% reduction on the number of remote CAS operations, and a increase from 88.3\% to 99\% CAS success rate when using a lazy skip graph as compared to a control skip list (subject to the same codebase, optimizations, and implementation practices). Qualitatively speaking, remote memory accesses are not only reduced in number, but the larger the NUMA distance between threads, the larger the reduction is.

We consider two alternative implementations of relaxed priority queues that further take advantage of our data partitioning over skip graphs: (a) using ``spraying'', a well-known random-walk technique usually performed over skip lists, but now performed over skip graphs; and (b) a custom protocol that traverses the skip graph deterministically, marking elements along this traversal. We provide formal arguments indicating that the first approach is more \emph{relaxed}, that is, that the span of removed keys is larger, while the second approach has smaller contention. Experimental results indicate that the approach based on spraying performs better on skip graphs, yet both seem to scale appropriately.
%
%We also explore the design of an additional skip graph variant, called \emph{sparse skip graph}, with benefits for low-contention/low-update settings, and provide an optional mechanism for handling non-uniform work distributions.
\end{abstract}

\section{Introduction}

The increasing availability of computing cores on shared memory machines makes concurrent data structure design a critical factor for application performance. Non-blocking\cite{progress}, linearizable \cite{linearizability} structures are particularly appealing, since they can effectively replace sequential or blocking (lock-based) structures without compromising the semantics expected by the programmers. However, the design landscape for concurrent structures is changing: NUMA architectures emerge as a set of computing/memory ``nodes'' linked by an interconnect, making memory accesses within the same NUMA node cheaper than those made across different NUMA nodes.
 
Under the usual assumption that threads are pinned to cores, we adopt the definition of \emph{local memory} accesses as those operating in memory initially allocated by the current thread (under first-touch NUMA policy), and \emph{remote} accesses as those accesses that are non-local. Note that our definitions are conservative, as data initially allocated by two different threads could indeed be located within the same NUMA node. Our goal is to increase NUMA \emph{locality} -- the ratio of local over total memory accesses, and research is very active in this area. Some approaches \cite{cphash, numastack, numask} focus on redesigning data structures with NUMA awareness, which is effective as we have full ability to exploit the structure's internal features for the task. Unfortunately, complete redesigns can pose significant development and research efforts, unsuitable for non-specialists. On the other hand, approaches like~\cite{blackbox} allow sequential structures to be ``plugged-in'' and benefit from NUMA-aware concurrency, based on replicating the dataset among nodes, batching local operations, and coordinating batches as to minimize inter-node traffic.

\textbf{Our contributions.} We present a data partitioning technique based on using \emph{skip graphs}~\cite{skipgraphs,skipnets} in order to promote significant quantitative and qualitative improvements in NUMA locality in concurrent data structures, as well as reduced contention. By qualitative improvement, we mean that remote memory accesses are not only reduced in number, but the larger the NUMA distance between threads, the larger the reduction is (Sec.~\ref{Sec-Evaluation}). At a high level, skip graphs can be viewed as multiple skip lists that overlap in a structured way in multiple levels, so we partition sub-components of those skip graphs among threads, taking advantage of its structure, in a way that makes the dataset to be operated with much higher NUMA locality. Skip graphs are expensive data structures, so our technique is made viable in practice by using thread-local indexing and well-documented laziness principles \cite{numask,nohotspot} into our design.
As a proof-of-concept of our data partitioning technique, we implemented maps and relaxed priority queues \cite{pqrelaxed3, pqrelaxed4,pqrelaxed2,pqrelaxed1}. Maps have been implemented with and without using laziness techniques, but always using thread-local indexing similarly to \cite{numask}. We are competitive with state-of-the-art maps \cite{numask,nohotspot,rotating}: in some cases, we see 80\% increased performance, while in others, we see similar performance to the faster running implementation. As part of our NUMA locality assessment, we observe a 6x higher CAS locality, a 68.6\% reduction on the number of remote CAS operations, and an increase from 88.3\% to 99\% of CAS success rate when using a lazy skip graph map implementation, as compared to our \emph{control} -- a skip list subject to the same codebase, optimizations, and implementation practices. Memory access patterns are visualized on Sec.~\ref{Sec-Evaluation}, and the qualitative improvement introduced with our technique is evident. 
 
We also contribute with \emph{relaxed priority queues} \cite{pqrelaxed3,pqrelaxed4,pqrelaxed2}, which return an element among the $k$ smallest elements in a set, rather than the absolute smallest. We not only use our data partitioning technique, which brings increased NUMA locality and reduced contention, but we consider a couple of algorithmic choices that further harness key structural features of the skip graph: (a) using \emph{spraying} \cite{pqrelaxed1}, a well-known random-walk technique usually performed over skip lists, but now performed over skip graphs; and (b) a custom protocol that traverses the skip graph deterministically, marking elements along this traversal. In either case, our NUMA locality improvements, discussed above, still apply, and we do verify experimentally that both approaches perform about 5x-9x faster than our control, which is a spray performed over a skip list, but one implemented with the same codebase and practices as the approaches (a) and (b) above. Additionally, we provide a \emph{formal argument} indicating that our custom protocol (b) is slightly less relaxed (Theorem.~\ref{thr:range_spray_sg}), that is, that the span of removed keys is slightly smaller (but still the same asymptotically), but our approach is subject to smaller contention: Theorems~\ref{thr:cas_slspray} and~\ref{thr:cas_sgspray} indicate that each node is subject to contention from only two threads in our custom algorithm.

Skip graphs as defined in \cite{skipgraphs,skipnets} have expensive insertions and removals compared to skip lists, related to the expected number of levels nodes insert/remove themselves to/from. Our skip graphs, however, can have a reduced height as we pair them with \emph{thread-local indexes} (used also in \cite{numask}). Furthermore, we also employ \emph{laziness} \cite{numask,nohotspot} in our variant implementations (maps and relaxed priority queues), as we build internal skip graph links only when necessary (and, symmetrically, we destroy internal links only when deemed appropriate). Using these techniques not only grants the performance benefits previously documented in the literature \cite{numask,nohotspot}, but make skip graphs \emph{viable} in the first place, which promotes \emph{further} benefits on increased NUMA locality. Among the most crucial benefits is the qualitative improvement on the memory access pattern among threads, discussed and visualized on Sec.~\ref{Sec-Evaluation}. We anticipate that qualitative improvement to become increasingly significant in future NUMA systems, as we conjecture increasingly larger hierarchical cost structures for remote memory accesses in upcoming NUMA systems.

%We also explore the design of an additional skip graph variant, called \emph{sparse skip graph}, with benefits for low-contention/low-update settings (Sec.~\ref{Sec-ImplementationDetails}). And finally, we provide an \emph{optional} mechanism for handling non-uniform work distributions, based on a background thread that essentially promotes uniformization of thread-local indexes across worker threads by donating inserted elements across those threads. Details are discussed on page \pageref{discussion:homogeneity}.
%
We proceed with an overview and background in Sec.~\ref{Sec-ArchitectureOverviewBackground}, and related work on Sec.~\ref{Sec-RelatedWork}. Design and implementation are further discussed in Sec.~\ref{Sec-ImplementationDetails}, with evaluation in Sec.~\ref{Sec-Evaluation} and conclusion in Sec.~\ref{Sec-Conclusion}.

\section{Architecture Overview and Background}
\label{Sec-ArchitectureOverviewBackground}

Our \emph{layered structure} consists of many \emph{local structures}, which are thread-local, sequential, navigable indexes (e.g., a C++ \texttt{map}), one assigned per thread, as well as a single \emph{shared structure}, which is a skip graph (or a slight variant thereof).
The local structures are used to ``jump'' to positions in the shared structure near to where insertion, removal, and contain operations will complete, which certainly contributes to reducing remote memory accesses. Once in the shared structure, our data partitioning scheme over the skip graph promotes a further reduction in remote memory accesses due to its internal structure.
% Local structures are operated without contention since they are thread-local, and the shared structure is operated concurrently by all threads.
%Local structures also perform a \emph{speculative} role in the sense that if an element is found locally, for instance in a contains operation, we do not even have to search globally in the shared structure.
We say \emph{nodes} store \emph{elements}, although we use these terms interchangeably. When necessary, we further distinguish between \emph{local nodes} or \emph{shared nodes}, as we refer to nodes belonging to a local structure or to the shared structure, respectively. Each local structure's job is to map elements inserted by their owning thread to the corresponding shared node in the skip graph.
Inserting an element $e$ adds a shared node \lstinline{s} to the skip graph, and creates a mapping $e \rightarrow \mathtt{s}$ in the local structure of the inserting thread. A subsequent removal of element $e$ will (i) \emph{logically delete} the shared node \lstinline{s} in the skip graph, (ii) cause a physical cleanup to occur in the shared structure and (iii) cause the thread that contains the mapping $e \rightarrow \mathtt{s}$ in its local structure to physically cleanup that association upon detection. Steps (ii) and (iii) can happen in any order. Discussion of insertion and removal definitions can be found in Sec.~\ref{Sec-ImplementationDetails}.

\textbf{Skip Graphs.} Fig.~\ref{fig:overall_design} depicts a skip graph, and performs the role of the shared structure. It is composed of multiple singly-linked lists at different levels (Fig.~\ref{fig:overall_design}). Starting from level zero, each level $i$ contains $2^i$ lists. All elements belong to the level-0 list, labeled as ``$\lambda$'', the empty string. The level-0 list is partitioned into two level-1 lists, labeled ``0'' and ``1''. Each level-1 list is further partitioned into two level-2 lists: the level-1 list labeled ``0'' (resp. ``1'') is partitioned in two level-2 lists, labeled ``00'' and ``01'' (resp. ``10'' and ``11''). The specific partitioning of elements (described ahead) is \emph{not} done in a probabilistic way as in the original skip graph: we have a \emph{partitioning scheme} that assigns threads to levels.
%\footnote{The original definition uses doubly-linked lists, so we describe a skip graph \emph{variant}.}
\begin{figure}[htb]
	\centering
	\includegraphics[width=.8\textwidth]{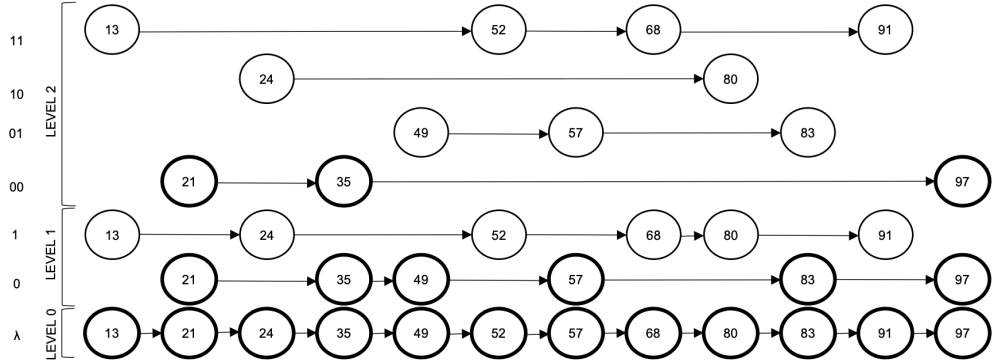}
	\caption{A skip graph can be seen as a set of skip lists sharing their levels (one of them is highlighted). It can also be seen as a set $2^i$ linked lists at each level $i$. With $T$ threads, $T/2^i$ of them work in each level-$i$ linked list, reducing contention and increasing locality.}
	\label{fig:overall_design}
\end{figure}

A skip graph can be seen as a collection of skip lists sharing their levels. In Fig.~\ref{fig:overall_design}, we denote by $H = (\lambda, 0, 00)$ a skip list formed by the highlighted nodes, one among four other skip lists that together constitute the skip graph. The element 35 is in the maximum level of $H = (\lambda, 0, 00)$. We refer to any skip list within the skip graph as \emph{shared skip list}, and any of the individual linked lists within the skip graph as a \emph{shared linked list}. All elements belong to one shared skip list in all of its levels, so we can \emph{always} perform a skip list search starting from that node's top level, traversing only the skip list that the node belongs to in its top level. For instance, in Fig.~\ref{fig:overall_design}, we can search for $83$ starting at $52$, following the shared skip list $(\lambda, 1, 11)$ through the following path: 52 $\rightarrow$ 68, $\downarrow$, 68 $\rightarrow$ 80, $\downarrow$, 80 $\rightarrow$ 83. Hence, despite a richer structure, it is important to notice that skip graph searches are skip list searches.

\textbf{Data Partitioning.} Even though the skip graph is shared by all threads, we limit where each thread operates on it, configuring a \emph{partition scheme}. We consider a system of $T$ threads $\set{t_0 \ldots t_{T-1}}$. The skip graph's maximum level is denoted by $\mathrm{MaxLevel}$, and we have that $\mathrm{MaxLevel} = \lceil \log(T) \rceil - 1$. Such a ``low'' $\mathrm{MaxLevel}$ does not guarantee logarithmic searches on a skip graph standing by itself, but our local structures compensate this fact by ``jumping'' to positions in the skip graph near where operations take place. Each thread $T_i$ will have a \emph{membership vector} $M_i$, a sequence of $\mathrm{MaxLevel}$ bits, whose suffixes indicate the \emph{only} shared skip list that $T_i$ can operate, which we denote $L_i$ and also call the \emph{associated skip list} of $T_i$. This causes all insertions from a single thread to happen in a unique skip list within the skip graph. As an example, consider the skip graph with $\mathrm{MaxLevel} = 2$ of Fig.~\ref{fig:overall_design}. Each thread will have a 2-bit membership vector. If $T_i$ has $M_i =$ ``10'', say, then $T_i$ will always insert or remove in the skip list $L_i = (\lambda, 0, 10)$. At most two threads operate in each of the top-level shared linked lists, and at most $T/2^i$ threads operate in any particular level-$i$ list.

Note that the original definition of skip graphs assumes a membership vector \emph{per element}, but our approach has all elements inserted by a single thread to follow the \emph{thread's membership vector}. Further, membership vectors are generated according to \emph{physical NUMA features} of the machine (see Sec.~\ref{Sec-Evaluation}). We make threads that run on nearby CPUs share more linked lists than threads that run on distant CPUs in the NUMA system. For example, consider a system where we have $T = 16$ threads ($\mathrm{MaxLevel} = 3$), with 2 NUMA nodes, each with 2 CPUs, each of those with 2 cores, each of those with 2 hyperthreads. We will give the same membership vector for any two threads running on hyperthreads of the same core; membership vectors with a common 2-bit suffix if threads run on different cores, and with a common 1-bit suffix if threads run on different CPUs; and finally membership vectors with no common suffix if threads run on different NUMA nodes. Now, on the level-3 linked lists, any contention relates to core-local data (and hopefully located on the core's closest cache); on the level-2 linked lists, any contention relates to CPU-local data; on the level-1 lists, any contention relates to NUMA-local data. Not only we expect less contention in upper-level lists, because they are shared among less threads, but we expect this \emph{contention to relate to more local data}. Further, any search that traverses the skip graph, as we discussed, is a skip list search. Hence, we \emph{first} traverse core-local data, and \emph{if we go down a level, the target data cannot be located in the same core}. Similarly, we then traverse CPU-local data, then NUMA-local data, and if we ever go down a level, the target data must be found remotely, where local/remote depend on the level in question. If we ever leave, say, a NUMA node, we never come back to it. The same applies to CPU-local data, or core-local data, and this happens \textbf{as a direct consequence of our data partitioning mechanism}, based on careful attribution of membership vectors. We have an automated mechanism that generates membership vectors based on inspecting the system’s CPU/socket/domain structure, and we indeed verify less contention (substantial improvement on CAS success ratio) and more locality (visualized graphically) in Sec.~\ref{Sec-Evaluation}.
%As indicated in the introduction, the above partitioning scheme not only provides a \emph{quantitative} improvement on CAS and synchronized read locality, but also a \emph{qualitative} improvement on those metrics: the larger the distance between two NUMA nodes, the bigger the reduction in remote accesses between threads pinned to those nodes.
\label{partitioning}

\textbf{Alternative shared structure: Sparse Skip Graphs.} In order to further explore benefits and tradeoffs of skip graphs, we also created and tested a second shared structure, called a \emph{sparse skip graph} (Fig.~\ref{fig:sparseskipgraph}). This structure is a skip graph where elements are present in level $i$ \emph{of any shared skip list} with expectation $1/{2^i}$, just like in a regular skip list.

The sparse skip graph is still a set of skip lists sharing their levels, although the levels overall become more and more sparse like a skip list. The combination of skip graph partitioning and skip list refinement makes elements be present in level $i$ \emph{of a particular linked list} with expectation $1/{4^i}$. 
For instance, in Fig.~\ref{fig:overall_design}, each of the level-1 lists ``0'' and ``1'' would partition only 50\% of the elements of ``$\lambda$'', which would be selected at 50\% chance independently. So, ``0'' and ``1'' would each have about 25\% of the elements in ``$\lambda$''. Similarly, the level-2 lists ``00'' and ``01'' would partition only 50\% of the elements of ``0'', each selected at 50\% chance, independently. So, lists ``00'' and ``01'' would each have about 6.25\% of the elements in ``$\lambda$''. Importantly, in our technique, only elements that reach the top level are added to the local structures. Therefore, sparse skip graphs also cause the local structures to become more sparse. This is crucial because the local structures, besides pointing to shared nodes nearby the target destinations, should also point to maximum-level nodes from which we can start an efficient search. Hence, using sparse skip graphs gives two immediate advantages: (i) the local structures are smaller; and (ii) the insertion and removal in the shared structure requires changes in less than $\mathrm{MaxLevel}$ levels. The tradeoff is that the starting point given by the local structures is not as close to the requested element compared to regular skip graphs.

\begin{figure}[htb]
	\centering
	\includegraphics[width=4.5cm,width=.8\textwidth]{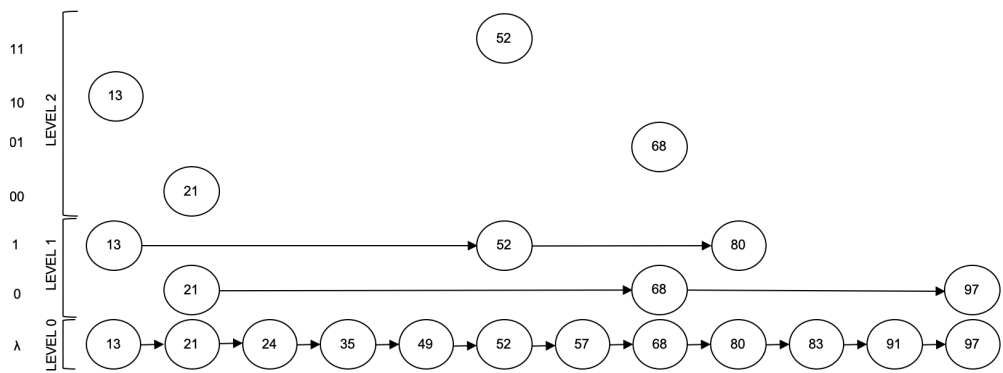}
	\caption{A sparse skip graph: elements are in level $i$ \emph{of any skip list} with expectation $1/{2^i}$.}
	\label{fig:sparseskipgraph}
\end{figure}

\label{sparseSkipGraphs}

\section{Related Work}
\label{Sec-RelatedWork}

\textbf{Data intensive applications.} Applications, particularly DBMSs, are not only concerned with \emph{data placement} (somewhat related to our goals), but also with \emph{task scheduling} (a non-goal here). Some approaches consider NUMA locality in the query evaluation and planning~\cite{numaDB2, numaDB3, numaDB4}, and others do at the OS level~\cite{numaDB1}. We have a simple approach to data placement: threads index their inserted elements locally and the dataset is effectively partitioned; our focus is on the \emph{data access pattern}. Systems such as~\cite{phoenix} are particularly concerned with data access pattern issues, as their operations are mostly uniform among threads. We do not discuss task scheduling, as we \emph{do not} rely on approaches such as delegation~\cite{delegation,qdlocking} or flat-combining~\cite{flat-combining,combining}, which \emph{do} bring concerns related to this issue because of the non-uniform work division among threads.

% TODO: Can take this whole paragraph off if we need space
%\textbf{Data placement policies.} Efficient data shuffling policies have been examined in~\cite{numaShuffle, TCMalloc}, and in~\cite{numaKernel1, numaKernel2}, the latter not only considering remote memory access \emph{latency}, but also \emph{contention}. This kind of work is more relevant to us, as improving the data access pattern in order to minimize contention is a \emph{primary goal} in NUMA settings; minimizing latency is an \emph{additional goal} for our work in particular. Techniques such as replication, interleaving, etc, mentioned in these papers, particularly in~\cite{numaKernel1,numaShuffle}, are \emph{not} used here, but look appealing for future work.

\textbf{Skip Lists and Skip Graphs.} Skip lists first appeared in~\cite{skiplists-concpugh}, although~\cite{skiplist-optimistic, skiplist-lazy, skiplist-lockfree} were most widely discussed in the literature~\cite{TAOMP}. Skip lists have also been used to implement priority queues, either exact~\cite{pqe,pqlotanshavit,pqsundelltsigas} or with a relaxed definition~\cite{pqrelaxed3,pqrelaxed4, pqrelaxed2,pqrelaxed1}. SkipNets~\cite{skipnets} are similar (if not identical) to skip graphs, proposed relatively at the same time. We consider those equivalent, and equally applicable. Skip graph variations, such as in~\cite{skipgraphs-rainbow}, typically address issues related to distributed systems, such as node size; we are aware of a single concurrent implementation in shared memory in~\cite{skipgraphs-concurrent}, although it is lock-based, in contrast with both of our lock-free variants. Our implementation relies heavily on laziness, as we postpone much of the internal work until they are absolutely needed. The ``No Hotspot'' skip list \cite{nohotspot} uses similar lazy principles, albeit with a different protocol. The ``Rotating'' skip list has a novel construction (``wheels'') meant to improve cache efficiency and locality, and also constitutes a modern, state-of-the-art implementation.

\textbf{NUMA awareness and layered design.} The work presented in~\cite{numaLock} gives a systematic approach to provide NUMA-awareness to locks. Tailor-made data structures for NUMA systems, such as~\cite{cphash, numastack} have also been developed, using (now) standard techniques such as elimination~\cite{elimination} and delegation~\cite{delegation}. We think that ``blackbox'' approaches, such as in~\cite{blackbox}, are interesting as they relieve systems programmers from ``customizing'' their data structures for NUMA, a notoriously complicated task for non-specialists (and specialists alike~\cite{TAOMP}). %While inspired by~\cite{blackbox}, we differ as (i) we do not aim to be as general by focusing on sets, maps, and priority queues; (ii) our dataset is partitioned, not replicated; and (iii) we do not rely on flat-combining~\cite{flat-combining} within NUMA nodes, so matters related to leader election and task-scheduling are not applicable to us.
NUMASK \cite{numask} is an interesting skip list that uses its higher levels as a hierarchical ``index'' to the bottom-level list, which stores the dataset. In our case, the dataset is located in a structure of its own, a multi-level skip graph. This allows for our data partitioning mechanism, designed to (i) reduce non-local NUMA traffic, and particularly avoid traversals that navigate back and forth across NUMA nodes; and (ii) reduce contention by creating areas within the shared structure where only subsets of threads operate. Our thread-local indexing, similarly, is more detached from the dataset, and could be implemented with any sequential, navigable map. In our implementation, for example, our map is actually a combination of a search tree and a hash table. Finally, our indexes are not replicated, but partitioned. Even with our optional load balancing mechanism in place, threads donate nodes but do not replicate their indexing. Apart from differences of granularity and function, the idea of separating thread-local views and shared views has been seen before in \cite{frontaldatastructure2}, although their approach is more akin to combining, as they eventually merge thread-local views into the shared structure from time to time.

\section{Implementation Details}
\label{Sec-ImplementationDetails}

Section~\ref{Sec-ArchitectureOverviewBackground} gives a design overview, and this section gives more insight into implementation and correctness, and discusses optimizations related to laziness and physical removal. We will discuss first the implementation of map ADTs using our technique, and later highlight the differences in the relaxed priority queue protocols.

\textbf{General implementation concepts.} Our skip graph has a \lstinline{head} array pointing to the first shared node in each shared linked list. Each shared node \lstinline{s} has a reference array to successors in each level-$i$ list, denoted \lstinline{s.next[i]}, and also called the \emph{level-$i$ reference} of \lstinline{s}. Each reference has a \lstinline{marked} and a \lstinline{valid} bit used for lazy removal, and the functions \lstinline{s.getMark(i)}, \lstinline{getValid(i)}, \lstinline{s.casMark(i, exp, new)}, and \lstinline{s.casValid(i, exp, new)} operate on \lstinline{s.next[i]}. The functions \lstinline{s.getMarkValid(i)} and \lstinline{s.casMarkValid(i, (mExp, vExp), (mNew, vNew))} operate on \lstinline{marked} and \lstinline{valid} bits of \lstinline{s.next[i]} simultaneously.
The mapping performed by local structures is done through maps that manage local nodes. We can ask a local node $l$ for its key with \lstinline{l.getKey()}, and for its value with \lstinline{l.getValue()}. The latter always returns a shared node \lstinline{s} with the same key as the local node $l$. Traversals in the local structure are done with \lstinline{l.getPrev()} or \lstinline{l.getNext()}. The local structure for thread $i$, a C++ \lstinline{std::map}, is denoted by \lstinline{localstructures[i]}. Each has an auxiliary hash table \lstinline{hashtable[i]}, allowing threads to consult a fast hashtable \cite{RobinHood} before consulting a slower map. Therefore, our local structures, in practice, are implemented with two complementary, sequential data structures.

\textbf{Laziness.} State-of-the-art concurrent data structures rely heavily on postponing internal work until they become absolutely needed, in the hope that they become unnecessary. Our lazy layered skip graph employs this principle as: (i) Insertions are done in the level-0 linked list first, and only complete the insertion when the node in question is requested to be in the top level in order to start a search operation. (ii) Removals are performed by ``invalidating'' a \emph{linked} node, and nodes are marked for physical removal when the threads that originally inserted those nodes find them invalidated after a minimal \emph{commission period} for which they exist in the structure. Experimentally (Sec.~\ref{Sec-Evaluation}), we found that a commission period proportional to the number of threads, say $350000 \cdot T$ cycles, for instance, performs \emph{very well} under high-contention without introducing too much overhead in low-contention (in the latter, a longer commission period could leave the data structure much larger at times). An alternative to our commission period policy would be having a background thread that started the physical removal of invalid nodes. However, in order to increase NUMA locality, and to preserve the benefits from our data partitioning technique, we want the traversing threads themselves to start this process, using a \emph{local} synchronization operation. In that context, the commission period becomes a distributed control that avoids nodes to be permanently allocated. (iii) We perform physical removal using our relink optimization (discussed below), but only when \emph{substituting} a chain of marked references with an inserting node. Although this protocol has the potential to leave too many unmarked nodes in the data structure, we verified experimentally that the actual number of traversed shared nodes per operation is less than in a skip list up to 96 threads.
\label{commission}

\textbf{Physical removal.} When a node's level-0 reference \lstinline{marked} (resp. \lstinline{valid}) bit is set, we say the shared node is \emph{marked} (resp. \emph{invalid}). The concepts of \emph{unmarked} and \emph{valid} are obvious. In the non-lazy version, we do not use the \lstinline{valid} bit, and in that case an unmarked shared node indicates presence in the abstract key set, while a marked shared node indicates a \emph{logically deleted} node.
In the lazy version, an unmarked, valid node indicates presence in the abstract set; an unmarked, invalid node indicates absence from the abstract set (logically deleted), but also that the process of physically unlinking the node has not started; a marked node can only be invalid, and the process of physically unlinking is ready to start. Each shared node \lstinline{s} has a field \lstinline{s.allocTimestamp}, set at shared node construction, used to calculate when a node's commission period has elapsed, thus making it a candidate for physical removal (see the discussion on laziness, above). The removal process happens in distinct phases. (i) A thread that traverses the data structure will mark a shared node with an expired commission period only if such node has been inserted by the thread in question. (ii) When threads remove marked nodes from their local structures, they also mark upper-level references in order to promote physical removal. As in many cases logically deleted nodes have not been inserted in upper levels, upper-level reference marking is not always necessary or complete. (iii) Finally, traversing threads perform the physical cleanup using the relink optimization discussed below.

In textbook skip lists \cite{TAOMP}, we indicate willingness to physically remove a node \lstinline{s} by marking its \lstinline{s.next[i]} references for all levels $i \ldots 0$ the node belongs to. Within a level $i$, searches performed on behalf of insertions and removals physically remove nodes with marked \lstinline{next[i]} references by employing a \emph{single} CAS per node. In contrast, both our skip lists or skip graphs remove \emph{sequences} of marked references with a single CAS per sequence, a trivial optimization that we we denote by \emph{relink optimization}. The correctness of this protocol is trivial when we consider that marked references are immutable. Algorithmic details are discussed in Appendix~\ref{App-Contains}.
\label{physicalRemoval}

\textbf{Insertion.} Algorithms~\ref{alg:insert}, \ref{alg:insertHelper}, and~\ref{alg:lazyInsert} show the insert operation for \lstinline{key}. In \lstinline{insert()} (Alg.~\ref{alg:insert}), if the thread finds a reference to a shared node with the goal key (call it \lstinline{result}), it invokes \lstinline{insertHelper()} (Alg.~\ref{alg:insertHelper}), which atomically checks \lstinline{result}'s \lstinline{marked} and \lstinline{valid} bits and finishes the operation if: (I-i) \lstinline{result} was seen as an unmarked valid node, so we linearize a failed insertion right before our check (Alg.~\ref{alg:insertHelper}, line~\ref{alg:insertHelper:linDuplicate}). (I-ii) \lstinline{result} was atomically changed from invalid to valid, so we linearize the successful insertion right at that time (Alg.~\ref{alg:insertHelper}, line~\ref{alg:insertHelper:flipped}). If \lstinline{insertHelper()} cannot finish the operation (by returning false), \lstinline{result} got marked, so (i) we clean the thread's local structure (Alg.~\ref{alg:insertHelper}, lines~\ref{alg:insertHelper:cleanup1} and~\ref{alg:insertHelper:cleanup2}); and (ii) we call \lstinline{lazyInsert()} (Alg.~\ref{alg:lazyInsert}), which will complete the insertion.
\begin{algorithm}[htb]
\caption{bool Layered::insert(K key, V value)}
\label{alg:insert}
\begin{algorithmic}[1]
	\State SharedNode result = hashtables[threadId].find(key)

	\If{result $\ne$ \textbf{null}}
		\State bool returnValue = false
		\If{SG::insertHelper(result, returnValue)} \label{alg:insert:callHelper}
			\State \Return returnValue
		\EndIf
	\EndIf

	\State SharedNode insertedSkipNode = \textbf{null}
	\If{SG::lazyInsert(key, insertedSkipNode)}
		\State localstructures[threadId].insert(key, insertedSkipNode)
		\State hashtables[threadId].insert(insertedSkipNode)
		\State \Return \textbf{true}
	\EndIf

	\State \Return \textbf{false}
\end{algorithmic}
\end{algorithm}

\begin{algorithm}[htb]
\caption{bool SG::insertHelper(SharedNode toInsert, \textbf{ref} bool returnValue)}
\label{alg:insertHelper}
\begin{algorithmic}[1]
	\While{\textbf{true}}
		\If{\textbf{not} toInsert.getMark(0)}
			\If{toInsert.getMarkValid(0) == (false, valid)} \Comment{Duplicate} \label{alg:insertHelper:linDuplicate}
				\State bool returnValue = \textbf{false}
				\State \Return \textbf{true}
			\EndIf
			\If{toInsert.casMarkValid((false, invalid), (false, valid))} \Comment{Flipped \lstinline{valid}} \label{alg:insertHelper:flipped}
				\State returnValue = \textbf{true}
				\State \Return \textbf{true}
			\EndIf
		\Else
			\State localstructures[threadId].erase(toInsert.getKey()) \label{alg:insertHelper:cleanup1}
			\State hashtables[threadId].erase(toInsert.getKey()) \label{alg:insertHelper:cleanup2}

			\State \textbf{break}
		\EndIf
	\EndWhile

	\State \Return \textbf{false}
\end{algorithmic}
\end{algorithm}

The procedure \lstinline{lazyInsert()} (Alg.~\ref{alg:lazyInsert}) starts by calling \lstinline{getStart()} (Alg.~\ref{alg:getStart}), which finds the closest preceding, unmarked shared node pointed by the local structure (line~\ref{alg:lazyInsert:getStart}). We call this node \lstinline{currentStart}. We then perform a search in line~\ref{alg:lazyInsert:lazyRelinkSearch}, starting from \lstinline{currentStart}, by calling \lstinline{lazyRelinkSearch()} (Alg.~\ref{alg:lazyInsert:lazyRelinkSearch}, page \pageref{searchProcedures}). After this search: (I-iii) If an unmarked shared node with the goal key is found (call it \lstinline{successors[0]}), a call to \lstinline{insertHelper()} (line~\ref{alg:lazyInsert:insertHelper}) will define our linearization, just like our previous discussion of Alg.~\ref{alg:insert}. If that call returns false, \lstinline{successors[0]} became marked, so we retry the search at line~\ref{alg:lazyInsert:retry}. (I-iv) If an unmarked node with the goal key is not found, we try to insert the node on line~\ref{alg:lazyInsert:CAS}. If the attempt fails, we retry the search until either (I-iv-a) we link the new shared node in line~\ref{alg:lazyInsert:CAS}, so we linearize at the point the CAS succeeded, noting that shared nodes are allocated as unmarked and valid; or (I-iv-b) find another unmarked shared node with the goal key, reverting to case (I-i). The lazy insertion only happens at level 0. In line~\ref{alg:lazyInsert:updateStart}, \lstinline{updateStart} (Alg.~\ref{alg:updateStart}, Appendix~\ref{App-ExtraInsertionAlgorithms}) makes sure that \lstinline{currentStart} is unmarked, otherwise it traverses the local structure backwards and updates \lstinline{currentStart} to the closest unmarked shared node seen. 
\begin{algorithm}[htb]
\caption{bool SG::lazyInsert(T key, \textbf{ref} SharedNode insertedSkipNode)}
\label{alg:lazyInsert}
\begin{algorithmic}[1]
	\State Array predecessors[MaxLevel], middle[MaxLevel], successors[MaxLevel]

	\CommentLine{Shared node being inserted in the bottom level only}
	\State SharedNode toInsert

	\CommentLine{Search the local structure and get the closest starting point}
	\State SharedNode currentStart = getStart(key) \label{alg:lazyInsert:getStart}

	\While{\textbf{true}}
		\If{SG::lazyRelinkSearch(key, predecessors, middle, successors, currentStart)} \label{alg:lazyInsert:lazyRelinkSearch}
			\State returnValue = false
			\If{SG::insertHelper(successors[0], returnValue)} \label{alg:lazyInsert:insertHelper}
				\State \Return returnValue
			\Else
				\State \textbf{continue} \label{alg:lazyInsert:retry}
			\EndIf
		\EndIf

		\State toInsert.setNext(0, successors[0])

		\If{\textbf{not} predecessors[0].casNext(middle[0], toInsert)} \label{alg:lazyInsert:CAS}
			\State updateStart(currentStart) \label{alg:lazyInsert:updateStart}
			\State \textbf{continue}
		\EndIf

		\State insertedSkipNode = toInsert

		\State \Return \textbf{true}
	\EndWhile
\end{algorithmic}
\end{algorithm}

The procedure \lstinline{getStart()} (Alg.~\ref{alg:getStart}) finds an unmarked shared node pointed by the local structure that closest precede \lstinline{key}. It performs an initial search at line~\ref{alg:getStart:initialSearch}, then traverses the tree backwards (line~\ref{alg:getStart:traverseBackwards}) for as long as the shared nodes pointed to by the local structure are found marked. Along this traversal, if we find a suitable candidate, but one whose insertion in the upper levels of its shared skip list has not been completed, we call \lstinline{finishInsert()} (Alg.~\ref{alg:finishInsert}, Appendix~\ref{App-ExtraInsertionAlgorithms}). This is done at line~\ref{alg:getStart:finishInsert} to \emph{try} to complete the node's insertion at upper levels. This call can fail if the node gets marked before all levels are linked, and in that case we continue our traversal. In line~\ref{alg:getStart:finishInsert}, the call to \lstinline{updateStart()} (Alg.~\ref{alg:updateStart}, Appendix~\ref{App-ExtraInsertionAlgorithms}) finds a suitable starting point for \lstinline{finishInsert()}, but does not finish insertions itself. So, \lstinline{updateStart()} is a simplified of \lstinline{getStart()} that does not finish insertions lazily, and ignores nodes that are not fully inserted.% as it traverses the local structure backwards.
\begin{algorithm}[htb]
\caption{LocalStructureIterator LocalStructure::getStart(K key)}
\label{alg:getStart}
\begin{algorithmic}[1]
	\State iterator = localstructures[threadId].getMaxLowerEqual(key) \label{alg:getStart:initialSearch}

	\While{iterator $\ne$ \textbf{null}} \label{alg:getStart:lazylocalremoval}
		\State SharedNode sharedNode = iterator.sharedNode
		\If{\textbf{not} sharedNode.getMark(0) \textbf{or} \textbf{not} sharedNode.getMark(MaxLevel)}
			\If{\textbf{not} sharedNode.inserted}
				\If{SG::finishInsert(sharedNode, updateStart(iterator))} \label{alg:getStart:finishInsert}
					\State \Return iterator \Comment{Node has just been fully inserted}
				\Else
					\CommentLine{Erase below does not invalidate the iterator}
					\State localstructures[threadId].erase(key)
					\State hashtables[threadId].erase(key)
				\EndIf
			\Else
				\State \Return iterator \Comment{Node already found fully inserted}
			\EndIf
		\Else
			\CommentLine{Erase below does not invalidate the iterator}
			\State localstructures[threadId].erase(key)
			\State hashtables[threadId].erase(key)
		\EndIf
		\State iterator = iterator.getPrev() \label{alg:getStart:traverseBackwards}
	\EndWhile

	\State \Return iterator
\end{algorithmic}
\end{algorithm}

\textbf{Removal and Contains.} Removals are similar to insertions, and are performed with three analogous algorithms. We present those in detail in Appendix~\ref{App-Contains}, and we give here only an overview. The procedure \lstinline{remove()} (Alg.~\ref{alg:remove}, Appendix~\ref{App-Remove}) uses the local hashtable to locate a shared node with the goal key. If such node is found, it then tries to linearize an operation by calling \lstinline{removeHelper()} (Alg.~\ref{alg:removeHelper}, Appendix~\ref{App-Remove}). This algorithm is similar to \lstinline{insertHelper} (Alg.~\ref{alg:insertHelper}) for the insertion. When \lstinline{removeHelper()} is called over a node, call it \lstinline{result}, it may either: (i) See \lstinline{result} as an unmarked, invalid node, so we linearize a failed removal right at that time. There cannot exist another unmarked, valid node in this case, which justifies our linearizability. (ii) Successfully unset the valid bit of \lstinline{result}, so we linearize the successful removal at that time as well. If we fail to complete either of (i) or (ii), the algorithm \lstinline{lazyRemove()} (Alg.~\ref{alg:lazyRemove}, Appendix~\ref{App-Remove}) will complete the removal (case (iii)). This algorithm essentially uses a search procedure, \lstinline{retireSearch()} (Alg.~\ref{alg:retireSearch}, Appendix~\ref{App-Contains}) and: (iii-a) If it succeeds locating an unmarked node with the goal key, \lstinline{removeHelper()} (line~\ref{alg:lazyRemove:removeHelper}) will define our linearization, falling back to (i) or (ii); or (iii-b) If \lstinline{lazyRemoval()} fails locating an unmarked node with the goal key, we linearize a failed removal at the time we found the element succeeding that key in the bottom level.
Contains is similar yet simpler than insertions. We present the algorithms in detail in Appendix~\ref{App-Contains}, and give here only an overview for the sake of space. The algorithm \lstinline{contains()} (Alg.~\ref{alg:contains}, Appendix~\ref{App-Contains}) first tries to locate an element through the local hashtable. If an unmarked, valid node is found, we linearize a successful contains at this point (case (i)). Otherwise, we call \lstinline{SG::contains()} (Alg.~\ref{alg:sgContains}, Appendix~\ref{App-Contains}), which will complete the procedure. This will invoke \lstinline{retireSearch()} (Alg.~\ref{alg:retireSearch}, Appendix~\ref{App-Contains}), trying to locate an unmarked shared node with the goal key. If one is found: (ii) if we see such node unmarked and valid, we linearize a successful contains at that time. (iii) Otherwise, we linearize a failed contains at the earliest of (i) the time we failed to see the node was unmarked and valid; or (ii) right after the node became marked.

\textbf{Search.} We use two search procedures. We show \lstinline{lazyRelinkSearch()} in Alg.~\ref{alg:lazyRelinkSearch} below, and \lstinline{retireSearch()} in Alg.~\ref{alg:retireSearch} in Appendix~\ref{App-Contains}. Consider the insertion of a new node \lstinline{toInsert}. The procedure \lstinline{lazyRelinkSearch()} identifies, at all levels: (A) which nodes should precede \lstinline{toInsert} (referenced in \lstinline{predecessors}); (B) which nodes should succeed \lstinline{toInsert} (referenced in \lstinline{successors}); and (C) which nodes, referenced in the \lstinline{middle} array, had \lstinline{predecessors[i].getNext(i)} right at the moment each predecessor \lstinline{predecessor[i]} was identified. Between \lstinline{predecessors[i]} and \lstinline{successors[i]} we have a sequence of nodes with their level-$i$ references marked. We hope these nodes get replaced, \emph{through a single CAS operation}, with \lstinline{toInsert} (hence implementing our relink optimization protocol, p. \pageref{physicalRemoval}). Importantly, during this process, line~\ref{alg:lazyRelinkSearch:checkRetire} uses \lstinline{checkRetire()} (Alg.~\ref{alg:checkRetire}, Appendix~\ref{App-Contains}) to check if nodes are invalid and their commission period has expired, in which case these nodes get marked. Remove and contains operations use \lstinline{retireSearch()}, a slightly quicker variant that does not keep track of \lstinline{predecessors}, \lstinline{successors}, and \lstinline{middle}.
\label{searchProcedures}
\begin{algorithm}[htb]
\caption{bool SG::lazyRelinkSearch(K key, SharedNode[] predecessors,\\ SharedNode[] middle, SharedNode[] successors, LocalStructureIterator currentStart)}
\label{alg:lazyRelinkSearch}
\begin{algorithmic}[1]
	\State current = \textbf{null}

	\For{level = MaxLevel $\rightarrow$ 0}
		\State current = originalCurrent = previous.getNext(level)

		\While{current.getMark(0) \textbf{or} current.checkRetire())} \label{alg:lazyRelinkSearch:checkRetire}
			\State current = current.getNext(level)
		\EndWhile

		\While{current.getKey() < key}
			\State previous = current
			\State current = originalCurrent = previous.getNext(level)

			\While{current.getMark(0) \textbf{or} current.checkRetire())}
				\State current = current.getNext(level)
			\EndWhile
		\EndWhile

		\State predecessors[level] = previous
		\State middle[level] = originalCurrent
		\State successors[level] = current
	\EndFor

	\State \Return (successors[0].getKey() == key \textbf{and not} successors[0].getMark(0))
\end{algorithmic}
\end{algorithm}

\textbf{Unbalanced workloads.} We provide an optional mechanism for handling unbalanced workloads, and we address the following scenarios: (i) Some threads may only insert, while others only perform removals/contains. (ii) Distinct groups of threads may insert in distinct partitions of the element space. Both scenarios are problematic because threads do not necessarily find good starting points for their search operations if their local structures are empty or skewed towards a partition of the element space.
Our load-balancing mechanism is based on having threads donate a fraction of their nodes inserted in the shared structure so they are added to local structures of other threads. A background thread takes into account the number of inserted elements announced by every worker thread, and, based on those numbers, continuously indicates to each worker thread the fraction of inserted nodes that are requested for donation. The worker threads place such fraction of inserted nodes into donation queues (one per worker thread), which are collected by the background thread and distributed uniformly among all other worker threads. Threads inspect receiving queues for incoming nodes, and add them into their local structures. Specifically, if a worker thread $T_i$ announces the insertion of $I_i$ elements out of a total of $I_T = \sum_{\set{0 \le i < T}} I_i$ elements, then if $1/T > (I_i/I_T - 1/T)$, then $T_i$ donates a fraction of $I_i/I_T$ of its elements; otherwise, it donates a fraction of $(I_i/I_T - 1/T)$ of them.
% The effectiveness of our load balancing scheme is verified experimentally in Sec.~\ref{Sec-Evaluation}, for both cases (i) and (ii) in the previous paragraph, with a performance overhead {\color{red} $\le 10$\%}.
%
Some degree of non-uniformity, however, is not expected to impact the shared structure efficiency globally. If shared lists are cleaned less often, then it means these lists are not often operated by threads that share them. Hence, all other threads would be traversing or cleaning over different lists, still operating normally.
Also, note that donated nodes have been inserted in their bottom-level by a thread $T_i$, while they might be inserted in upper levels by a thread $T_j, j \ne i$. We are currently working on delegating the creation of upper levels to the original thread $T_i$, avoiding to cross NUMA nodes when building up those levels. On lazy implementations, where levels are only built when needed, the impact of this problem is reduced.
\label{discussion:homogeneity}

\textbf{Priority Queues.} Note that our layered structure can implement a priority queue ADT in addition to sets/maps. Like many alternative priority queues in the literature \cite{pqlotanshavit,pqsundelltsigas,pqe}, we mark elements in the bottom level of the skip graph in order to logically delete elements. However, our layered approach gives the opportunity to perform physical cleanup on local structures using \emph{combining}~\cite{flat-combining,combining}: because we always remove a \emph{whole prefix} of the local structure containing logically deleted nodes, many nodes could be removed at once, at cost comparable to a single removal. Besides, our skip graph seems to be suitable to implement \emph{relaxed priority queues} \cite{pqrelaxed3,pqrelaxed4,pqrelaxed2}. We see two immediate approaches: (i) the application of the spraying technique of \cite{pqrelaxed1} over skip graphs; and (ii) a custom protocol that traverses the skip graph deterministically, marking elements along this traversal.

Regarding option (i), the main idea of \cite{pqrelaxed1} is to disperse threads over the skip list bottom level through a random walk called a \emph{spray}. If we apply the technique to a skip graph, each thread $T_i$ would navigate only through its associated skip list $L_i$. We see several advantages of performing spray operations over skip graphs rather than skip lists. Our partitioning scheme will still incur in better memory locality and reduced contention, as discussed before. On the other hand, we show that spraying over skip graphs has a slightly bigger removal range (same asymptotically, but higher nevertheless), as seen in Theorem~\ref{thr:range_spray_sg}.

Regarding option (ii), we implement a deterministic traversal in the skip graph, marking elements along the way. Informally, a thread $T_i$ starts at the highest level of its associated skip list, traverses marked nodes, and attempts to mark at the current position. If the attempt succeeds, a node has been logically deleted, otherwise the thread moves down a level, traverses marked nodes, and proceeds similarly. At level 0, two mark attempts are tried, and upon failing the second one, the process is restarted. This process is shown in Alg.~\ref{alg:sg_remove_min} in Appendix~\ref{App-AnalysisSpray}. We prove, on Theorems~\ref{thr:cas_slspray} and~\ref{thr:cas_sgspray}, that we have at most 2 threads contending for the same node along the traversal. In other words, the number of CAS operations required to logically delete a sequence of $T$ nodes in our deterministic protocol is $2T$.

\section{Analysis of Spray Operations on Perfect Skip Graphs}
\label{App-AnalysisSpray}

In this section, we provide formal arguments related to the performance of spray operations as in \cite{pqrelaxed1} as applied over skip graphs. Part of the analysis presented here closely follows the aforementioned paper, while others are particular to skip graphs.
Recall that $T$ is the number of threads in the system. We will assume that $\log T$ is integral, and that for the number of nodes $N$ in our skip graphs or lists, $N \geq T$ holds. When convenient we will explicitly assume $T = 2^n$, where $n \ge 1$. The particular variant of skip graphs we refer to, throughout our analysis, is the one where in each level $i$, where $0 \leq i \leq \log T - 1$, there are $2^i$ lists as described in Sec.~\ref{Sec-ArchitectureOverviewBackground}.

\subsection{Reach of Skip Graph Spraying}

\begin{definition}
A skip graph is \emph{perfect} if the absolute value of the difference in the positions of any two consecutive nodes on level $i$ is $2^i$, for all $0 \leq i \leq \log T - 1$.
\end{definition}

The operation $\tt{SPRAY(H,L,D)_j}$ defines a random walk starting at height $\tt{H}$, i.e., at the head of a list $j$ at level $\tt{H}$ of the skip graph. The walk continues forward to the right a uniformly random number of consecutive nodes chosen from $\tt{[0,L]}$ then moves down the skip graph $\tt{D}$ levels. This is repeated until we reach the bottom list and (possibly) walk forward to the right one last time. Thus the walk ends at position $x$ in the bottom list. We are interested in characterizing the probability that any $\tt{SPRAY}$ operation starting from any $j$ of the $2^{\log T -1}$ lists of the maximum level of the skip graph lands on position $x$ in the bottom list.

\begin{definition}
Let $j$ be any of the $2^{\log T -1}$ lists of the maximum level of a perfect skip graph, and let $x$ be the position of any node in the perfect skip graph. Then $\tt{F(x)_j}$ denotes the probability that for fixed $\tt{H}$, $\tt{L}$, and $\tt{D}$, $\tt{SPRAY(H,L,D)_j}$ lands on position $x$ in the bottom list.
\end{definition}

In the proof of the theorem below, we closely follow \cite{pqrelaxed1} with appropriate changes and modifications for the perfect skip graph. We assume w.l.o.g. that $\log T$ is odd.

\begin{theorem}
\label{thr:sgrange}
Fix $\tt{H} = \log T -1 $, $\tt{L} = \log T$, and  $\tt{D} = 1$. Then for any position $x$, $\tt{F(x)_j} \leq \frac{1}{T}$.
\end{theorem}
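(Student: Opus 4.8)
The plan is to reduce the random walk to a counting problem over an integer lattice, and then bound the number of lattice representations of a target offset by a clean per-level recursion whose shrinkage factor is controlled by the oddness of $\log T$.

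First I would translate the walk into a sum. By perfection, a single forward step along a level-$i$ list advances the position by exactly $2^i$, while descending one level leaves the position unchanged. With $\mathtt{H}=\log T-1$ and $\mathtt{D}=1$, the walk performs exactly one forward segment at each level $i\in\{0,\dots,\log T-1\}$ (the segments at levels $\log T-1,\dots,1$ occur during the descent, and the level-$0$ segment is the final walk), and the number of steps $X_i$ taken at level $i$ is uniform on $\{0,\dots,\mathtt{L}\}=\{0,\dots,\log T\}$, with the $X_i$ mutually independent by definition of the spray. Hence the landing position is $x=x_0+\sum_{i=0}^{\log T-1}X_i\,2^i$, where $x_0$ is the position of the head of list $j$. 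Writing $m=x-x_0$, the event that the spray lands on $x$ is exactly $\sum_{i=0}^{\log T-1}X_i\,2^i=m$, so
\[
\mathtt{F(x)_j}=\frac{p_{\log T}(m)}{(\log T+1)^{\log T}},
\]
where $p_k(m)$ counts the tuples $(X_0,\dots,X_{k-1})\in\{0,\dots,\log T\}^k$ with $\sum_{i=0}^{k-1}X_i2^i=m$. Since the bound is taken over all $m$, it holds uniformly in $j$.

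Next I would bound $\max_m p_k(m)$ by induction on $k$. Peeling off the lowest digit, reduction modulo $2$ forces $X_0\equiv m\pmod 2$; once $X_0$ is fixed with the correct parity, the remaining equation becomes $\sum_{i=1}^{k-1}X_i2^{i-1}=(m-X_0)/2$, which after reindexing is counted by $p_{k-1}$. The key point is the number of admissible $X_0$: since $\log T$ is assumed odd, $\mathtt{L}+1=\log T+1$ is even, so precisely $(\log T+1)/2$ of the values in $\{0,\dots,\log T\}$ match the parity of $m$. Bounding each inner count by $\max_m p_{k-1}(m)$ gives $\max_m p_k(m)\le \tfrac{\log T+1}{2}\,\max_m p_{k-1}(m)$. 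With the base case $p_0(m)=[m=0]$, so that $\max_m p_0=1$, induction yields $\max_m p_{\log T}(m)\le\bigl(\tfrac{\log T+1}{2}\bigr)^{\log T}$, and substituting into the displayed expression for $\mathtt{F(x)_j}$ gives $\mathtt{F(x)_j}\le 2^{-\log T}=1/T$, as required.

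The main obstacle is the faithful reduction of the walk to the lattice sum: one must verify that perfection forces each level-$i$ step to be exactly $2^i$, that descending preserves the position, and that the fixed parameters $\mathtt{H}=\log T-1$, $\mathtt{D}=1$ produce exactly $\log T$ independent uniform segments, since this count is the exponent that makes the bound land on $1/T$ and is where the skip-graph height differs from a skip list. The oddness of $\log T$ is not cosmetic either: it is exactly what makes the per-level shrinkage factor equal $(\log T+1)/2$ rather than its ceiling $\lceil(\log T+1)/2\rceil$, and hence what makes the estimate come out to precisely $1/T$. I would also remark that the statement is an idealization of the perfect skip graph in which the structure is assumed large enough that a spray never runs off the end of a list, so that boundary truncation need not be accounted for in the counting argument.
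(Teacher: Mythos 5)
Your proposal is correct and follows essentially the same route as the paper's proof: both reduce the spray to the lattice equation $\sum_{i=0}^{\log T-1} a_i 2^i = \bar{x}$ with each coefficient uniform on $\{0,\dots,\log T\}$, and both extract a factor of $1/2$ per level from the parity constraint on the lowest-order coefficient, using the oddness of $\log T$ to ensure exactly half the $\log T + 1$ choices have the required parity. Your packaging as an explicit counting recursion $\max_m p_k(m) \le \tfrac{\log T+1}{2}\max_m p_{k-1}(m)$ is a clean equivalent of the paper's sequential ``match'' probabilities and makes the inequality step (that matching parity is necessary but not sufficient) automatic, but it is the same argument.
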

\begin{proof}
Let $j$ be any of the $2^{\log T -1}$ lists of the maximum level of a given skip graph, and let $x$ be the position of any node in the perfect skip graph. Let $pos_j$ denote the position of the head of list $j$. If $x < pos_j$, then trivially $\tt{F(x)_j} = 0$. Hence, we assume w.l.o.g. $x \geq pos_j$. Furthermore let $\bar{x} := x - pos_j$. We simplify the exposition and show that $\tt{F(x)_j} \leq \frac{1}{T}$ by making our arguments on $\bar{x}$.
Note that in order for $\tt{SPRAY(\log T -1,\log T,1)_j}$ to land on position $x$, the following must hold:
\begin{equation}
    \sum_{i =0}^{\log T -1} a_i2^i = \bar{x}, \label{sum_of_parts}
\end{equation}
where for each $i$, $a_i$ is chosen independently from $[0,\log T]$. Thus, we are interested in the probability that the sum of the different $a_i$, each chosen independently from $[0,\log T]$ yield Equation~\eqref{sum_of_parts}.
Instead of calculating this probability directly we will give an upper bound for it based on a simple argument with respect to the parity of the bits in the binary expansion of $\bar{x}$.

The idea is to give the probability that each randomly chosen $a_i$, starting with $i=0$, contributes to the correct parity of the binary expansion of $\bar{x}$ moving from left to right. This is possible because of the shifting factor of $2^i$ for each $a_i$. Certainly, all randomly chosen $a_i$ that satisfy Equation~\eqref{sum_of_parts} must satisfy this property, but not the other way around. Therefore, this probability yields an upper bound for $\tt{F(x)_j}$. We show this more precisely as follows.

Let $\bar{x}(l)$ denote the $l$-th least significant bit of $\bar{x}$, and for each $i$, let $a_i(l)$ denote the $l$-th least significant bit of $a_i$. In order to ensure the correct parity of the binary expansion of $\bar{x}$ moving from left to right by \emph{sequentially} choosing $a_i$ in $[0,\log T]$ starting with $i=0$, it is easy to see that $a_0(0) = \bar{x}(0)$ must hold. More generally, because of the shifting factor of $2^k$ for each $a_k$, the following equation (whose left hand side represents bit addition)
\begin{equation}
a_k(0) + a_{k-1}(1) + \ldots + a_0(k) + c \equiv \bar{x}(k)\bmod 2, \label{gen_parity}  
\end{equation}
must hold for all $1\leq k \leq \log T -1$, where $c$ is determined by the previous choice of $a_0 , a_1, \ldots a_{k-1}$.

We will derive the desired upper bound by considering the probability of sequentially and randomly picking $a_i$ that respect the parity of the binary expansion of $\bar{x}$ as explained above, starting with $a_0$, then continuing with $a_1$ and so on until $a_{\log T - 1}$. We say that a chosen $a_k$ is a  $\tt{match}$, if it respects the parity of the binary expansion of $\bar{x}$ as in Equation~\eqref{gen_parity}. Thus we are interested in the probability on the right hand side of the following inequality:
\begin{equation}
\tt{F(x)_j} \leq \tt{Pr}[a_0 \text{ is a } \tt{match}] \prod_{i=1}^{\log T -1}Pr[a_i \text{ is a } \tt{match} | A_{i-1} ], \label{prob_ineq}
\end{equation}
where $A_{i-i}$ is the event where all $a_k$ such that $0\leq k\leq i - 1$ are a $\tt{match}$. First we will explicitly derive the right hand side of Inequality~\eqref{prob_ineq} then we will give a simple argument as to why the inequality holds.

First of all, it is clear that $\tt{Pr}[a_0 \text{ is a } \tt{match}] = \frac{1}{2}$. This is because we assumed $\log T$ is odd and exactly half the numbers in $[0,\log T]$ are even or odd. Furthermore, this can be readily generalized for $\tt{Pr}[a_i \text{ is a } \tt{match} | \tt{A_{i-1}}]$, where $1\leq i \leq \log T -1$. Given $A_{i -1}$, i.e., $a_0,a_1, \ldots a_{i-1}$ that have respected the parity of $\bar{x}$ as above, exactly half the possible choices in $[0,\log T]$ will contribute an $a_i$ such $a_k(0)$ satisfies Equation (2), where $k = i$. Hence $\tt{Pr}[a_i \text{ is a } \tt{match} | \tt{A_{i-1}}] = \frac{1}{2}$, for each $1\leq i \leq \log T -1$. Thus,
\begin{equation}
   \tt{Pr}[a_0 \text{ is a } \tt{match}] \prod_{i=1}^{\log T -1}Pr[a_i \text{ is a } \tt{match} | A_{i-1} ] = \frac{1}{2^{\log T}} = \frac{1}{T}.\label{RHS_prob}
\end{equation}
Consider choosing $a_k$ sequentially as before, but this time the chosen $a_k$ in addition to satisfying Equation (2), must also represent an actual possible part of a $\tt{SPRAY}$ path that satisfies Equation~\eqref{sum_of_parts}. We say that a chosen $a_k$ is an $\tt{exact}$ $\tt{ match}$ if this is the case. Then, it follows that
\begin{equation*}
    \tt{F(x)_j} = \tt{Pr}[a_0 \text{ is an } \tt{exact}\text{ }\tt{match}] \prod_{i=1}^{\log T -1}Pr[a_i \text{ is an } \tt{exact}\text{ } \tt{match} | A_{i-1} ],\label{exact_prob}
\end{equation*}
where $A_{i-i}$ is the event where all $a_k$ such that $0\leq k\leq i - 1$ are an $\tt{exact}$ $\tt{match}$. Hence, Inequality~\eqref{prob_ineq} holds, and the desired result follows from Equation~\eqref{RHS_prob}.
\end{proof}
\begin{theorem}\label{thr:range_spray_sg}
For each $\tt{SPRAY(\log T - 1,\log T, 1)_j}$, for any list $j$ in the maximum level of a perfect skip graph, the position of the node on which the operation lands is at most $\frac{T}{2} + \log T\cdot(T -1) -1$.
\end{theorem}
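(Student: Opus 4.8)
The plan is to write the landing position as $x = pos_j + \bar{x}$, where $pos_j$ is the position of the head of list $j$ at the maximum level and $\bar{x}$ is the total forward displacement accrued by the walk, and then to bound the two summands separately and independently. The displacement bound is immediate from the setup of Theorem~\ref{thr:sgrange}: since $\tt{SPRAY(\log T -1,\log T,1)_j}$ takes $a_i \in [0,\log T]$ forward steps at level $i$ and consecutive level-$i$ nodes are $2^i$ apart by perfectness, we have $\bar{x} = \sum_{i=0}^{\log T - 1} a_i 2^i$, exactly the left-hand side of Equation~\eqref{sum_of_parts}. Maximizing each $a_i$ at $\log T$ and summing the geometric series gives $\bar{x} \le \log T \sum_{i=0}^{\log T-1} 2^i = \log T\,(T-1)$.

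Next, I would bound $pos_j$ over all lists $j$ of the maximum level. The key structural observation is that at each level $i$ the $2^i$ lists partition all $N$ nodes, and, by perfectness, each such list is an arithmetic progression of positions with common difference $2^i$. For $2^i$ arithmetic progressions of common difference $2^i$ to tile the position set $\{0,1,\dots,N-1\}$, their heads (the minima) must form a complete residue system modulo $2^i$; and since the smallest nonnegative representative of residue $r$ is $r$ itself, the heads are forced to be exactly $\{0,1,\dots,2^i-1\}$. Specializing to $i = \log T - 1$, which carries $T/2 = 2^{\log T - 1}$ lists, every head satisfies $pos_j \le T/2 - 1$.

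Combining the two bounds yields $x = pos_j + \bar{x} \le (T/2 - 1) + \log T\,(T-1) = \frac{T}{2} + \log T\cdot(T-1) - 1$, the claimed bound. Note the displacement and head maxima are attained simultaneously (e.g., by the rightmost top-level list together with the walk that advances $\log T$ nodes at every level), since the choice of starting list $j$ and the sequence of step counts are independent, so the bound is in fact achieved.

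The main obstacle is rigorously justifying that the maximum-level heads occupy exactly $\{0,\dots,T/2-1\}$. While the residue-class tiling argument above is clean, it silently uses that perfectness forces each level-$i$ list to be a single progression with step $2^i$ and that the lists genuinely partition the node set. I would make this airtight by an induction on $i$ that tracks the heads as the partition refines: at level $0$ there is one list with head $0$, and passing from level $i$ to level $i+1$ splits each progression $\{h,\, h+2^i,\, h+2\cdot 2^i,\dots\}$ into the two step-$2^{i+1}$ progressions with heads $h$ and $h+2^i$, so the head set grows from $\{0,\dots,2^i-1\}$ to $\{0,\dots,2^{i+1}-1\}$. This simultaneously verifies the single-progression structure and pins the head at level $\log T - 1$ to at most $T/2 - 1$.
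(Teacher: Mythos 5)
Your proof is correct and follows the same overall route as the paper's: bound the forward displacement by $\log T\sum_{i=0}^{\log T-1}2^i = \log T\,(T-1)$, bound the starting head position by $T/2-1$, and add. The one place you diverge is in how you justify that the heads of the maximum-level lists sit at positions $0,1,\dots,T/2-1$: the paper cites Proposition~\ref{fistTpos}, which it proves by induction on the order $n$ of the \emph{minimal} perfect skip graph $PSG_{\tt{min}}(n)$ via the left-/right-hand-side copy decomposition, whereas you derive the same fact either from a residue-class tiling argument (each level-$i$ list is a step-$2^i$ progression, the $2^i$ lists partition the positions, so the heads form a complete residue system mod $2^i$ and hence are exactly $\{0,\dots,2^i-1\}$) or by induction on the level, tracking how each progression splits in two as the partition refines. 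Your version is slightly more general: it applies directly to any perfect skip graph, which quietly repairs a small mismatch in the paper, where the theorem's proof invokes a proposition stated only for minimal perfect skip graphs while reasoning about a perfect skip graph with $N \ge \frac{T}{2}+T\log T$ nodes. Your closing remark that both maxima are attained simultaneously matches the paper's exhibition of a witness spray from the rightmost top-level head, so the bound is tight in both treatments.
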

\begin{proof}
Consider a perfect skip graph such that $N \geq \frac{T}{2} + T\log T$. We show that any $\tt{SPRAY(\log T - 1,\log T, 1)_j}$ operation can reach from its starting position is $\frac{T}{2} + \log T\cdot(T-1)T$ units forward to the right. For any list $j$ in the maximum level of a perfect skip graph, if the path chosen by a $\tt{SPRAY(\log T - 1,\log T, 1)_j}$ operation moves forward to the right by $\log T$ nodes on the corresponding lists in every possible level, we arrive at a total of $\sum_{i=0}^{\log T -1}2^i \log T = \log T\sum_{i=0}^{\log T -1}2^i = \log T\cdot(T-1)$ units forward to the right.

By Proposition~\ref{fistTpos}, if we label the positions of the heads of lists in the maximum level moving from left to right in order of appearance, we arrive at  $pos_{\log T -1}^0, pos_{\log T -1}^1, \ldots pos_{\log T -1}^{2^{\log T -1} -1}$, where $pos_{\log T -1}^i = i$, for all $0 \leq i \leq 2^{\log T -1}-1$.  Consider list $k$ whose head is the node in position $pos_{\log T -1}^{2^{\log T -1}-1} = 2^{\log T -1}-1 = \frac{T}{2} -1$. Then, by the argument in the previous paragraph, there is a $\tt{SPRAY(\log T - 1,\log T, 1)_k}$ operation that lands in position $\frac{T}{2} + \log T\cdot(T -1) -1$.
\end{proof}

We denote our protocol using the $\tt{SPRAY(\log T - 1,\log T, 1)_j}$ operations on perfect skip graphs as $\tt{SPRAY\_SG}$, and on perfect skip lists as $\tt{SPRAY\_SL}$. Finally we denote our deterministic, mark-along protocol on perfect skip graphs as $\tt{SGMARK}$. The protocol is shown in Alg.~\ref{alg:sg_remove_min}, below.

\begin{algorithm}[htb]
	\caption{bool PQ::removeMin(K key)}
	\label{alg:sg_remove_min}
	\begin{algorithmic}[1]
		\State $r = \ $ random number [0, $t - 1$] where $t$ is the number of threads
		\If{$r = \ $ threadId}
			\CommentLine{Clean as described in \cite{pqrelaxed1}}
		\EndIf
		\While{\textbf{true}}
			\State level = MaxLevel
			\While{level $\ge$ 0}
				%\State previous = head[level]
				\State node $=$ next unmarked node

				\If{node \textbf{is} tail}
					\State level = level - 1
					\State \textbf{continue}
				\EndIf

				\If{node.mark()}
					\State \Return \textbf{true}
				\EndIf

				\State level = level - 1
			\EndWhile

			\While{performed $\le 2$ times}
				\State node $=$ next unmarked node
				\If{node \textbf{is} tail}
					\State \Return \textbf{false}
				\EndIf
				\If{node.mark()}
					\State \Return \textbf{true}
				\EndIf
			\EndWhile

		\EndWhile
		\State \Return \textbf{false}
\end{algorithmic}
\end{algorithm}

We conclude that any $\tt{SPRAY}$ operation in  $\tt{SPRAY\_SG}$, using the same technique from \cite{pqrelaxed1}, satisfies the same probability bounds as any $\tt{SPRAY}$ operation with the same parameters in $\tt{SPRAY\_SL}$. The difference is that the range of furthest reaching $\tt{SPRAY}$ operation is greater in $\tt{SPRAY\_SG}$ than  $\tt{SPRAY\_SL}$. In particular for $\tt{SPRAY}$ operations with parameters as in Proposition~\ref{thr:range_spray_sg} the range for $\tt{SPRAY\_SG}$ is $\frac{T}{2} + T\cdot \log T$, whereas the range for $\tt{SPRAY\_SL}$ is $T\cdot \log T$. The range of $\tt{SGMARK}$ is exactly $T$. 
In the next section, our approach will consist on defining a skip list spray that disperses threads among the same range as $\tt{SGMARK}$, and then indicate that contention is expected to be smaller in a skip graph due to \emph{structural features of the skip graph itself}.

\subsection{Skip List Sprays over $T$ Elements}

\begin{definition}
A skip list is \emph{perfect} if the absolute value of the difference in the positions of any two consecutive nodes on level $i$ is $2^i$, and every list on level $i$ contains at least $\frac{T}{2^i}$ nodes, for all $0\leq i \leq \log T -1$.
\end{definition}

The operation $\tt{SPRAY(H,L,D)}$ defines a random walk starting at height $\tt{H}$, i.e., at the head of the list at level $\tt{H}$ of the skip list. The walk continues forward to the right a uniformly random number of nodes chosen from $\tt{[0,L]}$ then moves down the skip list $\tt{D}$ levels. This is repeated until we reach the bottom list and (possibly) walk forward to the right one last time. Thus the walk ends in position $x$ of the bottom list.

\begin{definition}
Let $x$ be the position of any node in a perfect skip list. Then $\tt{F_p(x)}$ denotes the probability that, for some fixed $\tt{H}$, $\tt{L}$, and $\tt{D}$, $\tt{SPRAY(H,L,D)}$ lands at position $x$ in the bottom list.
\end{definition}

\begin{proposition}\label{reachT}
Let $x$ be the position of any of the first $T$ nodes of a perfect skip list. There is a $\tt{SPRAY(\log T -1,1,1)}$ that lands at position $x$ in the bottom list.
\end{proposition}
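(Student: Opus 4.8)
The plan is to show that a $\tt{SPRAY(\log T -1, 1, 1)}$ operation has enough \emph{flexibility} in its choices to reach every one of the first $T$ positions in the bottom list, by exhibiting, for each target $x$, a concrete sequence of forward-steps that lands exactly on $x$. The operation proceeds through levels $\log T - 1, \log T - 2, \ldots, 0$, and at each level $i$ it may step forward by $a_i \in \{0, 1\}$ \emph{nodes of that level}. Since the skip list is perfect, a single forward step at level $i$ advances the bottom-level position by exactly $2^i$. Hence the total forward displacement achievable is
\begin{equation*}
\sum_{i=0}^{\log T - 1} a_i 2^i, \qquad a_i \in \{0,1\}.
\end{equation*}

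First I would observe that as the $a_i$ range independently over $\{0,1\}$, this sum ranges over exactly the integers $\{0, 1, \ldots, 2^{\log T} - 1\} = \{0, 1, \ldots, T-1\}$, since $\sum_i a_i 2^i$ is precisely the binary expansion of an arbitrary integer in that range. Thus for any target position $x$ among the first $T$ nodes (i.e.\ with $0 \le \bar{x} := x - pos \le T-1$, where $pos$ is the starting head position), I would simply read off the binary digits of $\bar{x}$ to determine the step counts $a_i$, and the corresponding $\tt{SPRAY(\log T -1, 1, 1)}$ path lands exactly on $x$. The perfectness hypothesis is what makes each step contribute cleanly a power of two, and the requirement that every level-$i$ list contain at least $T/2^i$ nodes guarantees that the needed forward steps are actually available (the walk does not run off the end of a list before reaching the bottom).

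The main obstacle, and the one step deserving care, is verifying that the walk stays well-defined throughout: at level $i$ we must confirm that, having taken some forward steps at higher levels, the node we have reached genuinely has an available level-$i$ successor to step to whenever $a_i = 1$, and that stepping down a level lands us on a valid level-$(i-1)$ node. This is precisely where the perfect skip list's containment guarantee (every level-$i$ list has at least $T/2^i$ nodes) is invoked. Since we only ever need to reach within the first $T$ bottom positions, and the cumulative displacement is bounded by $T - 1 < T$, each intermediate level has enough nodes to support the step; I would make this explicit by tracking the running position and noting it never exceeds $T-1$. With that dispatched, the existence of a landing path for each $x$ follows immediately from the binary-expansion argument, completing the proof.
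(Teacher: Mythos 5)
Your proof is correct, but it takes a different route from the paper's. The paper first computes the maximum reach, $\sum_{i=0}^{\log T-1}2^i = T-1$, and then runs a descent argument: given any $\tt{SPRAY}$ path landing on position $k>0$, it constructs one landing on $k-1$ (two cases, depending on whether the final bottom-level step is forward; in the second case it locates the lowest level where the path declined to step forward and reroutes below it to recover $2^i-1$ units). Your argument instead exhibits the landing position directly as $\sum_i a_i 2^i$ with $a_i\in\{0,1\}$ and invokes the bijection between such sums and the integers $0,\ldots,T-1$ via binary expansion, reading the required step choices off the digits of $\bar{x}$. Your version is more direct and, as a bonus, already yields the \emph{uniqueness} of the path reaching each $x$ (each integer has a unique binary expansion), a fact the paper must re-derive separately by contradiction in the subsequent corollary; it also makes the $\tt{F_p(x)}=1/T$ computation immediate. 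The paper's descent argument, by contrast, sidesteps any need to check that an arbitrary digit string corresponds to a realizable walk, since it only ever perturbs an existing valid path --- but you handle that well-definedness point explicitly by bounding the running position by $T-1$ and appealing to the guarantee that each level-$i$ list has at least $T/2^i$ nodes, so nothing is missing.
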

\begin{proof}
The furthest $\tt{SPRAY}$ operation moves forward to the right a total of $T-1$ units as follows, $\sum_{i = 0}^{\log T -1} 2^i = 2^{\log T} - 1 = T -1$. Hence we reach the $T$-th node. Suppose there is some $\tt{SPRAY}$ operation that reaches a node whose position is $k$, where $0< k < T -1$. Consider the following two cases:
\begin{itemize}
    \item The path determined by the given $\tt{SPRAY}$ operation consists of a final step forward. By ignoring the final step forward we arrive at a path that reaches position $k-1$.
    \item The path determined by the given $\tt{SPRAY}$ operation does not consist of a final step forward. Consider the smallest level $i$ where the path turns left (moving from bottom to top). The preceding node $j$ on the list in level $i$ is in position $k - 2^i$. Moving a level down from $j$ then forward to the right every possible level we cover a total of $\sum_{l=0}^{i-1}2^n = 2^i -1$ units horizontally and reach position $k-1$.
\end{itemize}
\end{proof}
\begin{corollary}
Fix $\tt{H} = \log T -1 $, $\tt{L} = 1$, and  $\tt{D} = 1$. Let $x$ be the position of any of the first $T$ nodes of a perfect skip list. Then $\tt{F_p(x)} = \frac{1}{\tt{T}}.$
\end{corollary}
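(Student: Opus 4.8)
The plan is to observe that a $\tt{SPRAY}(\log T - 1, 1, 1)$ operation is precisely a sequence of $\log T$ independent, uniform binary choices -- one per level -- and that the landing position is exactly the integer whose binary digits are those choices. First I would note that with $\tt{L} = 1$ the forward walk on each level advances either $0$ or $1$ node, each with probability $1/2$, and that starting at $\tt{H} = \log T - 1$ with $\tt{D} = 1$ the walk visits every level $i \in \{0, \ldots, \log T - 1\}$ exactly once before it reaches the bottom. Writing $a_i \in \{0,1\}$ for the forward choice made on level $i$, these $\log T$ choices are mutually independent and each uniform on $\{0,1\}$, so every fixed choice vector $(a_0, \ldots, a_{\log T - 1})$ occurs with probability $(1/2)^{\log T} = 1/T$.

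Next I would use that the skip list is perfect: two consecutive nodes on level $i$ differ in position by $2^i$, so a single forward step on level $i$ advances the current position by exactly $2^i$. Taking the head of the top-level list to sit at position $0$, the landing position is therefore $x = \sum_{i=0}^{\log T - 1} a_i 2^i$. This is precisely the binary expansion of $x$, and because $x$ ranges over the positions of the first $T$ nodes we have $x \in \{0, 1, \ldots, T-1\}$, a range on which the binary expansion is unique. Hence the map sending a choice vector to its landing position is injective.

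The key step is to upgrade injectivity to a bijection. There are exactly $2^{\log T} = T$ distinct choice vectors and exactly $T$ admissible target positions, so injectivity between finite sets of equal cardinality already forces surjectivity; moreover, this surjectivity is exactly the content of Proposition~\ref{reachT}, which I would cite to confirm that each of the first $T$ positions is reached by at least one path. It follows that each of the first $T$ positions is the landing position of \emph{exactly one} choice vector. Since that single vector is realized with probability $1/T$, we conclude $\tt{F_p(x)} = \frac{1}{T}$. I do not anticipate a genuine obstacle: the only points needing care are verifying that the per-level choices are independent and uniform and that no two distinct vectors can land on the same position, the latter being guaranteed by uniqueness of binary representation once the perfect-skip-list spacing $2^i$ is in hand.
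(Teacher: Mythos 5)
Your proof is correct and follows essentially the same route as the paper's: both arguments rest on the landing position being determined by $\log T$ independent, uniform binary choices $a_i$, so that exactly one choice vector reaches each of the first $T$ positions and $\tt{F_p(x)} = (1/2)^{\log T} = 1/T$. The only cosmetic difference is that you derive path-uniqueness from the uniqueness of binary representations (plus a counting argument to upgrade injectivity to a bijection), whereas the paper proves the same fact by contradiction on the largest level at which two candidate paths differ --- which is just the standard proof of that uniqueness, so nothing substantive changes.
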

\begin{proof}
From Proposition~\ref{reachT} we know that there exist some $\tt{SPRAY(\log T -1,1,1)}$ operation that reaches any of the first $T$ nodes, and from its proof we see that no $\tt{SPRAY}$ operation can reach beyond the first $T$ nodes. Given any position $x$, where $0 \leq x \leq T-1$, next we show that the $\tt{SPRAY}$ operation that lands on position $x$ is path-wise unique.

Fix any arbitrary $x$, where $0 \leq x \leq T-1$, and consider some $\tt{SPRAY(\log T -1,1,1)}$ operation that lands on the node in position $x$. Suppose there exists another path-wise distinct $\tt{SPRAY}$ operation that lands on the node in position $x$, which we denote by $\tt{SPRAY}_2$. Consider the largest level $i$ and node $j$ where the paths of the two path-wise distinct $\tt{SPRAY}$ operations differ. Let the position of node $j$  be $k$. W.l.o.g. assume the path which belongs to the first $\tt{SPRAY}$ operation moves forward to the right from node $j$ on level $i$ before descending a level. This ensures that $k + 2^i \leq x$. The path induced by the $\tt{SPRAY}_2$ operation does not move forward to the right and instead directly descends a level from node $j$. This implies the path can move forward to the right from position $k$ at most $\sum_{n=0}^{i-1} 2^n = 2^i -1$. Thus the furthest position a path induced by $\tt{SPRAY}_2$ can reach is $k + 2^i -1 < x$, which is a contradiction.  

Since there is only one unique path that reaches each $x$, at each level $i$, where $0\leq i \leq \log T -1$, the probability that any given $\tt{SPRAY}$ generates the correct portion of the path at level $i$ is exactly $1/2$. Since this probability is independently uniform at each level $i$, we arrive at $\tt{F_p(x)} = \prod_{i=0}^{\log T -1}\frac{1}{2} = \frac{1}{2^{\log T}}= \frac{1}{\tt{T}}.$
\end{proof}
Let $X$ denote the number of $\tt{SPRAY}$ operations in a perfect skip list until every position $x$, where $0\leq x \leq T-1$, is reached. We are interested in the expected number of $\tt{SPRAY}$ operations which ensure that every position $x$ is reached at least once. To arrive at the expected number of $\tt{SPRAY}$ operations we show our question of interest is simply the well-known Coupon Collector's Problem \cite{ferrante2014coupon}. Thus the desired result is proportional to $T\log T$.

Now that we defined a skip list spray that disperses threads among a segment of size $T$, we show that the expected number of operations to mark all elements in that segment is $\Theta(T \log T)$.

\subsection{Number of Successful CAS Operations when Spraying on Skip Lists}

\begin{theorem}\label{thr:cas_slspray}
$\E[X]$ is in $\Theta(T\log T)$.
\end{theorem}
\begin{proof}
Let $X_i$ denote the number of $\tt{SPRAY}$ operations performed while exactly $i-1$ positions in the bottom list have been already reached. Then $X = \sum_{i =1}^{T}X_i$ holds. Thus, each $X_i$ is a geometric random variable. If exactly $i-1$ positions have already been reached by  previous $\tt{SPRAY}$ operations then the probability that next $\tt{SPRAY}$ operation reaches a different position is simply $p_i = 1 - \frac{i-1}{T}$. This is exactly the Coupon Collector's Problem. We outline the analysis for completion and clarity as follows:
\begin{equation*}
    \E[X]=\E[\sum_{i =1}^{T}X_i]=\sum_{i =1}^{T}\E[X_i]=\sum_{i =1}^{T}\frac{T}{T-i+1}=T\sum_{i=1}^{T}\frac{1}{i}= T\cdot H(T),
\end{equation*}
where $H(T)$ is the $T$-th harmonic number. Using elementary calculus one can show that $H(T) = \ln T + \Theta(1)$ \cite[33]{mitzenmacher2017probability} and hence $T\cdot H(T) = T\ln T + \Theta(T)$. Thus it follows that $\E[X]$ is in $\Theta(T\log T)$.
\end{proof}
\begin{corollary}
Let $T \geq 4$. Then $\E[X] \geq 2T$.
\end{corollary}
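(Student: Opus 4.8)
The plan is to reduce the claim to a single statement about harmonic numbers and then verify that statement at the smallest admissible value of $T$. By Theorem~\ref{thr:cas_slspray} we have the exact identity $\E[X] = T \cdot H(T)$, where $H(T) = \sum_{i=1}^{T} 1/i$ is the $T$-th harmonic number. Dividing the desired inequality $\E[X] \geq 2T$ through by $T > 0$, the corollary is therefore equivalent to establishing that $H(T) \geq 2$ for all $T \geq 4$.

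First I would observe that $H(T)$ is strictly increasing in $T$, since passing from $H(T)$ to $H(T+1)$ adds the strictly positive term $1/(T+1)$. Consequently it suffices to establish the inequality at the base case $T = 4$: if $H(4) \geq 2$, then $H(T) \geq H(4) \geq 2$ for every $T \geq 4$ by monotonicity.

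Next I would compute the base case directly, $H(4) = 1 + \tfrac{1}{2} + \tfrac{1}{3} + \tfrac{1}{4} = \tfrac{25}{12} > 2$, and then multiply back through by $T$ to obtain $\E[X] = T \cdot H(T) \geq 2T$ for all $T \geq 4$, which completes the argument.

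There is essentially no technical obstacle here, as the substantive work was already carried out in establishing the exact formula $\E[X] = T \cdot H(T)$ in the preceding theorem. The only point requiring minor care is the bookkeeping: recognizing that the bound collapses to the threshold condition $H(T) \geq 2$, and confirming via the explicit value $H(4) = \tfrac{25}{12}$ that $T = 4$ is precisely where this threshold is first met. Indeed, $H(3) = \tfrac{11}{6} < 2$, so the hypothesis $T \geq 4$ is tight for this particular constant, which is worth noting to justify why the corollary is stated for $T \geq 4$ rather than for all $T$.
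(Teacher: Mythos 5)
Your proof is correct, but it takes a different (and arguably cleaner) route than the paper. The paper lower-bounds the harmonic number via the classical dyadic grouping
\[
H(T) \;=\; 1 + \tfrac{1}{2} + \left(\tfrac{1}{3}+\tfrac{1}{4}\right) + \cdots \;\geq\; 1 + \tfrac{1}{2}\log T,
\]
and then observes that $T\left(1+\tfrac{1}{2}\log T\right) \geq 2T$ once $\log T \geq 2$, i.e.\ $T \geq 4$. You instead reduce the claim to the threshold condition $H(T)\geq 2$, invoke monotonicity of $H$, and verify the base case $H(4)=\tfrac{25}{12}>2$ by direct computation. Both arguments rest on the same exact identity $\E[X]=T\cdot H(T)$ from Theorem~\ref{thr:cas_slspray}, so the substantive content is identical. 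The paper's grouping bound is the standard tool one would reuse if a stronger, growing lower bound (of order $T\log T$) were needed later, and it meshes with the paper's standing assumption that $T$ is a power of two; your argument is more elementary, works for every integer $T\geq 4$ without that assumption, and your observation that $H(3)=\tfrac{11}{6}<2$ shows the hypothesis $T\geq 4$ is exactly where the constant $2$ first becomes attainable --- a tightness remark the paper does not make.
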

\begin{proof}
We can use well-known elementary techniques to bound $H(T)$ as follows:
\begin{eqnarray*}
H(T) &=& 1 + \frac{1}{2} + \frac{1}{3} + \ldots + \frac{1}{2^{\log T}}\\
     &=& 1 + \frac{1}{2} + \left(\frac{1}{3} + \frac{1}{4}\right) + \left(\frac{1}{5} + \frac{1}{6} + \frac{1}{7} + \frac{1}{8}\right) + \left(\frac{1}{2^{\log T -1}+1} + \ldots +\frac{1}{2^{\log T}}\right)\\
     &\geq& 1+ \frac{1}{2} + \left(\frac{1}{4} + \frac{1}{4}\right) + \left( \frac{1}{8} + \frac{1}{8} + \frac{1}{8} + \frac{1}{8}\right) + \left(\frac{1}{2^{\log T}} + \ldots +\frac{1}{2^{\log T}}\right)\\
     &\geq& 1 + \frac{1}{2}\log T
\end{eqnarray*}
Then from Theorem~\ref{thr:cas_slspray} we have that $\E[X]= T\cdot H(T) \geq T \left(1 + \frac{1}{2}\log T\right)$, which gives the desired result for all $T\geq 4$.
\end{proof}

We now characterize how a custom relaxed priority queue algorithm, designed precisely in order to exploit the central structural features of the skip graph, can remove elements from a range of $T$ elements with proven contention being exactly 2 \emph{for any number of threads $T$}.

\subsection{Number of Successful CAS Operations when Spraying on Skip Graphs with Our Algorithm}

\begin{definition}
Let $T = 2^n$, where $n\geq 1$. A perfect skip graph is \emph{minimal} if the number of nodes in the bottom list is exactly $T$.
\end{definition}
Thus each $n\geq 1$ determines a unique minimal skip graph (up to permutation of lists in each level) which we denote be  $PSG_{\tt{min}}(n)$. Given $PSG_{\tt{min}}(n)$, we will show that it contains all smaller perfect minimal skip graphs.
\begin{definition}
Consider $PSG_{\tt{min}}(n)$ and $PSG_{\tt{min}}(m)$, where $n>m\geq 1$. Then $PSG_{\tt{min}}(n)$ \emph{contains} $PSG_{\tt{min}}(m)$, if by ignoring consecutive levels and consecutive nodes of $PSG_{\tt{min}}(n)$ we arrive at $PSG_{\tt{min}}(m)$ (up to permutation of lists in each level).
\end{definition}
\begin{proposition}\label{recursive_perfect}
For all $n\geq 2$, $PSG_{\tt{min}}(n)$ contains all smaller minimal perfect skip graphs.
\end{proposition}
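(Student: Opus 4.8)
The plan is to prove the statement by induction on $n$, but the real content is already in the base/successor step: I claim it suffices to show that for every $n \ge 2$, $PSG_{\tt{min}}(n)$ contains $PSG_{\tt{min}}(n-1)$. Once this single-step containment is established, the fact that $PSG_{\tt{min}}(n)$ contains \emph{all} smaller minimal perfect skip graphs follows by transitivity of the ``contains'' relation together with an easy induction: $PSG_{\tt{min}}(n)$ contains $PSG_{\tt{min}}(n-1)$, which by the inductive hypothesis contains every $PSG_{\tt{min}}(m)$ with $1 \le m \le n-2$, and ignoring consecutive levels/nodes composes. So first I would state and dispatch transitivity of containment (ignoring a set of levels/nodes to pass from $n$ to $n-1$, then ignoring a further set to pass from $n-1$ to $m$, is the same as ignoring the union, giving a direct reduction from $n$ to $m$), reducing the whole proposition to the single embedding $PSG_{\tt{min}}(n) \supseteq PSG_{\tt{min}}(n-1)$.

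For the single step, the construction I would use is: delete the bottom level of $PSG_{\tt{min}}(n)$ entirely, and within each surviving level retain only the ``even-positioned'' nodes (those at positions divisible by $2$), re-indexing positions by $x \mapsto x/2$. I would then verify that the resulting object is exactly a perfect minimal skip graph on $T/2 = 2^{n-1}$ bottom nodes. The verification has two parts. First, the \emph{minimality/size} check: the old level-$1$ lists become the new bottom (level-$0$) lists, and since $PSG_{\tt{min}}(n)$ has $T=2^n$ nodes in its level-$0$ list, the perfection condition (consecutive level-$1$ nodes differ in position by $2^1=2$) forces exactly $T/2 = 2^{n-1}$ nodes across the level-$1$ lists, matching the definition of $PSG_{\tt{min}}(n-1)$. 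Second, the \emph{perfection} check: a pair of consecutive nodes that was on level $i+1$ of $PSG_{\tt{min}}(n)$ had position-gap $2^{i+1}$; after the relabeling $x \mapsto x/2$ their gap becomes $2^{i}$, which is precisely the perfection requirement for level $i$ of the smaller skip graph. I would also note that the $2^{i+1}$ lists at old level $i+1$ become the $2^i$ lists required at new level $i$ (there are $\log T - 1$ old levels above the bottom, i.e. new levels $0,\dots,\log(T/2)-1$), consistent ``up to permutation of lists in each level'' as the definition of containment permits.

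The main obstacle I anticipate is not the arithmetic of position-gaps, which is routine, but rather making the notion of ``ignoring consecutive levels and consecutive nodes'' precise enough that the retained substructure is genuinely a skip graph and not merely a collection of disconnected lists. Concretely, I must confirm that after deleting the bottom level and the odd-positioned nodes, the remaining nodes still form \emph{linked} lists in each level whose membership-vector / list-partition structure matches that of $PSG_{\tt{min}}(n-1)$ — i.e. that an old level-$(i+1)$ list, restricted to the nodes that survive, is exactly a new level-$i$ list with no gaps introduced in the adjacency. Because $PSG_{\tt{min}}(n)$ is perfect, every node present on level $i+1$ is at an even bottom-position, so the even-node restriction removes \emph{nothing} from levels $\ge 1$; the only genuine deletions are the bottom level and the odd bottom-level nodes. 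This observation is what makes the adjacency structure survive intact, and I would foreground it to keep the containment argument clean. The appeal to Proposition~\ref{fistTpos} for the canonical head positions on the maximum level is available if I need to pin down the permutation of lists explicitly, though for this statement the phrase ``up to permutation of lists in each level'' in the definition should let me avoid tracking the exact labeling.
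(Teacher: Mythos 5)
Your reduction to a single-step containment plus transitivity is fine in spirit (the paper does the same thing via induction), but your single step does not prove the proposition as the paper defines it, and its justification rests on a false picture of what a perfect skip graph is. First, the definition of ``contains'' requires arriving at $PSG_{\tt{min}}(m)$ by ignoring \emph{consecutive} levels and \emph{consecutive} nodes. Your embedding deletes the bottom level and all odd-positioned nodes; the odd positions are not a consecutive block, so the substructure you exhibit is not a containment in the paper's sense. The paper goes the other way: it removes the \emph{maximum} level and one contiguous half of the bottom list, using the observation that $PSG_{\tt{min}}(k+1)$ with its top level ignored is exactly two side-by-side copies of $PSG_{\tt{min}}(k)$. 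That specific embedding is not cosmetic --- it is what Definitions~\ref{lrs_psg} and~\ref{rhs_psg} (the left/right hand side copies) and the recursions in Proposition~\ref{firstround} and Theorem~\ref{thr:cas_sgspray} are built on, so an ``every other node'' embedding would not serve the downstream analysis even if the definition of containment were relaxed to admit it.

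Second, your key adjacency claim --- that ``every node present on level $i+1$ is at an even bottom-position, so the even-node restriction removes nothing from levels $\ge 1$'' --- is false for skip graphs. In a (non-sparse) perfect skip graph every level is a \emph{partition of all $T$ nodes} into $2^{i+1}$ lists; the node at position $1$ is present at every level, not just the bottom one. What is true, and what would repair your adjacency argument, is that all positions within a single level-$(i+1)$ list are congruent modulo $2^{i+1}$, hence all even or all odd, so your restriction discards or keeps entire lists and never punches holes in a surviving list. As written, though, the argument conflates skip graphs with skip lists, and combined with the non-consecutive deletion it leaves the proposition unproved in the form the paper needs.
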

\begin{proof}
For $n=1$ by definition the minimal perfect skip graph is a list with $2$ nodes. Consider a list with $4$ nodes (2 copies of $PSG_{\tt{min}}(1)$). Partitioning this list into two other lists with consecutive node positions $0,2$ and $1,3$ we arrive at $PSG_{\tt{min}}(2)$. Let $n=3$. We will construct  $PSG_{\tt{min}}(3)$ from  $PSG_{\tt{min}}(2)$ in the following way. We glue two copies of $PSG_{\tt{min}}(2)$ together so that the height of the new structure is the same as $PSG_{\tt{min}}(2)$ but the bottom list has length $8$. It is straightforward to see these are the first $2$ levels of $PSG_{\tt{min}}(3)$. Thus we arrive at $PSG_{\tt{min}}(3)$ by adding its maximum level.

Suppose it holds for all $n \leq k$, for some $k>3$. Consider $PSG_{\tt{min}}(k+1)$. Ignoring the maximum level, we see by translational symmetry that the remaining structure consists of two glued copies of $PSG_{\tt{min}}(k)$. By the induction hypothesis $PSG_{\tt{min}}(k)$ contains all other smaller minimal perfect skip graphs.
\end{proof}
Let $T = 2^n$ for $n \geq 1$. Then our partition scheme ensures there are exactly two threads acting on the heads of each list of the maximum level of the perfect skip graph at the beginning of $\tt{SGMARK}$. In the following arguments we assume that the process of marking a node by one of the threads on level $n-1$ then having the other thread descend to a list on level $n-2$ occurs \emph{simultaneously} for all $2^{n-1}$ lists $k$ of level $n-1$. More generally, whenever threads are on level $i$, the ones which fail to mark nodes will \emph{simultaneously} move down a level. Furthermore we assume, that if two threads traversing the same list where the nearest unmarked node $j$ is the same for both, then the threads reach $j$ \emph{simultaneously}. Suppose two threads are acting on every list $k$ of some fixed level $i$ of a perfect skip graph, where $1\leq i \leq n -1$, such that for every list $k$ the nearest unmarked node $j$ is the same for both threads. Then we will assume all threads will \emph{simultaneously} reach their respective contested nodes.

Under these assumptions, we can model $\tt{SGMARK}$ as a game with $n$ rounds. The last round is a special case where the surviving thread will walk forward to the right one last time instead of descending. At the beginning of a regular round all surviving threads up to that round are on level $i$, where $0\leq i \leq n-2$, whereas at the end of the round some threads have moved down a level and have traversed to the right. We will see that it suffices to consider our protocol on minimal perfect skip graphs. First, we will need the following definitions, which are well-defined by Proposition~\ref{recursive_perfect}.
\begin{definition}\label{lrs_psg}
For all $n\geq 2$, we arrive at the \emph{right hand side copy} of $PSG_{\tt{min}}(n-1)$ in $PSG_{\tt{min}}(n)$, by ignoring level $n-1$ of $PSG_{\tt{min}}(n)$ and the first $2^{n-1}$ consecutive nodes of the bottom list in $PSG_{\tt{min}}(n)$.
\end{definition}
\begin{definition}\label{rhs_psg}
For all $n\geq 2$, we arrive at the \emph{left hand side copy} of $PSG_{\tt{min}}(n-1)$ in $PSG_{\tt{min}}(n)$ via translational symmetry by shifting the \emph{right hand side copy} of $PSG_{\tt{min}}(n-1)$ in $PSG_{\tt{min}}(n)$ to the left $2^{n-1}$ positions.
\end{definition}
\begin{proposition} \label{fistTpos}
Let $T=2^n$, where $n\geq 2$. Consider any level $i$ of $PSG_{\tt{min}}(n)$, where $0 \leq i\leq n-1$. Consider nodes from left to right in order of appearance and let $pos_i^0, pos_i^1, \ldots pos_i^{2^i -1}$ denote the positions of the heads of the respective $2^i$ list in level
$i$. Then $pos_i^j = j$, for all $0 \leq j \leq 2^{i}-1$.
\end{proposition}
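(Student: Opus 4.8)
The goal is to prove Proposition~\ref{fistTpos}: for every level $i$ of $PSG_{\tt{min}}(n)$, the heads of the $2^i$ lists, taken left to right, occupy positions $pos_i^j = j$ for $0 \le j \le 2^i - 1$. The plan is to proceed by induction on $n$, leveraging the recursive ``gluing'' structure established in Proposition~\ref{recursive_perfect} together with the left/right copy decomposition of Definitions~\ref{lrs_psg} and~\ref{rhs_psg}. The intuitive content is that at level $i$, the $2^i$ lists interleave their nodes with a stride of $2^i$ (since the skip graph is perfect), and the heads of those lists are precisely the first $2^i$ nodes $0, 1, \ldots, 2^i-1$ of the bottom list, each list ``claiming'' the residue class it starts in.

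First I would dispose of the base case. For $n = 2$ I would check directly against the explicit construction in the proof of Proposition~\ref{recursive_perfect}: level $0$ has one list with head at position $0$; level $1$ partitions into the list with positions $0, 2, \ldots$ (head $0$) and the list with positions $1, 3, \ldots$ (head $1$); and level $2$ similarly yields heads at $0, 1, 2, 3$. This matches $pos_i^j = j$. Next, assuming the claim for all levels of $PSG_{\tt{min}}(k)$, I would analyze $PSG_{\tt{min}}(k+1)$ level by level. For levels $0 \le i \le k-1$, I would invoke the observation from Proposition~\ref{recursive_perfect} that ignoring the maximum level $k$ leaves two glued copies of $PSG_{\tt{min}}(k)$ forming the first $k$ levels; by translational symmetry and the induction hypothesis, the head positions of the lists at each such level $i$ are exactly $0, 1, \ldots, 2^i - 1$, unchanged by the gluing since gluing appends the right copy after position $2^k - 1$ and the first-$2^i$ head positions live entirely within the left copy.

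The remaining case is the newly added maximum level $i = k$, which carries $2^k$ lists. Here I would argue that each of these $2^k$ top-level lists has a stride of $2^k$ on the bottom list (perfectness at level $k$), so the $2^k$ distinct heads must be $2^k$ nodes whose positions are pairwise distinct modulo $2^k$; since every node's position is a nonnegative integer and the heads are the \emph{leftmost} representatives of each list, the set of head positions is forced to be exactly $\{0, 1, \ldots, 2^k - 1\}$. Ordering them left to right by position then gives $pos_k^j = j$. The main obstacle I anticipate is making the maximum-level argument fully rigorous: I need to confirm that the partitioning scheme genuinely assigns the leftmost $2^k$ nodes as the distinct heads (rather than, say, some head being displaced further right), which follows from perfectness forcing each list's first node to lie in a distinct residue class modulo $2^k$ combined with the fact that positions start at $0$ and each residue class is represented among the first $2^k$ nodes. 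I would tie this back cleanly to the recursive construction, where the two glued copies of $PSG_{\tt{min}}(k)$ have their bottom lists interleaved to form $PSG_{\tt{min}}(k+1)$, so the left copy supplies the even-indexed head residues and the right copy the odd-indexed ones, exhausting $\{0, \ldots, 2^k-1\}$ exactly once.
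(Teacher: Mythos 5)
Your proposal is correct and follows essentially the same route as the paper: induction on $n$ using the left/right copy decomposition from Proposition~\ref{recursive_perfect}, with the induction hypothesis handling levels $0 \le i \le k-1$ and a separate counting argument for the new maximum level. Your phrasing of the maximum-level step via residue classes modulo $2^k$ is just another way of stating the paper's observation that each two-node top-level list (having stride $2^k$) must place exactly one node above each copy, forcing the heads to occupy positions $0,1,\ldots,2^k-1$.
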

\begin{proof}
This is trivially true for $n=2$ by inspection. Suppose it holds for some integer $k > 2$. Consider $PSG_{\tt{min}}(k+1)$. If we ignore level $k$ of $PSG_{\tt{min}}(k+1)$, we are left with the left hand side and right hand side copy of $PSG_{\tt{min}}(k)$ in $PSG_{\tt{min}}(k+1)$. Therefore, for all levels $i$ such that $0 \leq i \leq k - 1$, the statement holds by the induction hypothesis. By symmetry, the maximum level of  $PSG_{\tt{min}}(k+1)$ is partitioned into two parts, one part of the maximum level located above the left hand side copy of $PSG_{\tt{min}}(k)$ in $PSG_{\tt{min}}(k+1)$, and the other above the right hand side copy of $PSG_{\tt{min}}(k)$ in $PSG_{\tt{min}}(k+1)$. Since the maximum level of any minimal perfect skip graph contains lists with only two nodes, the portion of each list in the maximal level that is above the left hand side copy of $PSG_{\tt{min}}(k)$ in $PSG_{\tt{min}}(k+1)$ contains only one node. These are the $2^k$ heads of the lists in level $k$ of $PSG_{\tt{min}}(k+1)$, and since the maximum level is a partition of the nodes on the bottom list, after possibly reordering they have positions $0,1, \ldots 2^k -1$.
\end{proof}
\begin{proposition}\label{firstround}
For any $n\geq 2$ and $PSG_{\tt{min}}(n)$, at the end of the first round of $\tt{SGMARK}$ on $PSG_{\tt{min}}(n)$, we have $2$ threads acting on the heads of the lists of the maximum level of the right hand side copy of $PSG_{\tt{min}}(n-1)$ in $PSG_{\tt{min}}(n)$.
\end{proposition}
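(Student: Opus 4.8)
The plan is to track, round by round, the positions of the surviving threads using the positional structure established in Proposition~\ref{fistTpos}, and to show that the single action of round 1 is to collapse $PSG_{\tt{min}}(n)$ onto its right hand side copy. Throughout I would invoke the simultaneity assumptions stated above, together with the fact (noted in the proof of Proposition~\ref{fistTpos}) that each maximal-level list of $PSG_{\tt{min}}(n)$ has exactly two nodes: since consecutive nodes on level $n-1$ differ by $2^{n-1}$, the list whose head sits at position $j$ is $\{j,\ j+2^{n-1}\}$, so its head lies in the left half $\{0,\ldots,2^{n-1}-1\}$ and its other node in the right half.

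First I would fix the starting configuration. By hypothesis two threads sit on the head of each of the $2^{n-1}$ maximal-level lists, and by Proposition~\ref{fistTpos} these heads occupy exactly the positions $0,1,\ldots,2^{n-1}-1$. In round 1 every pair contends for its list's head node, so by the contention assumption exactly one mark succeeds per list; after the marking phase the head nodes, i.e. precisely the positions $0,\ldots,2^{n-1}-1$, are marked, while all of $2^{n-1},\ldots,2^n-1$ remain unmarked. The losing thread of each list then descends to level $n-2$ and remains at its (now marked) node. Hence there is exactly one surviving thread at each position of $\{0,\ldots,2^{n-1}-1\}$, poised to traverse level $n-2$ to the right.

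Next I would analyse the descent. By Proposition~\ref{fistTpos} the $2^{n-2}$ heads of the level-$(n-2)$ lists occupy positions $0,\ldots,2^{n-2}-1$, and since consecutive nodes on level $n-2$ differ by $2^{n-2}$, the list with head $h$ is $\{h,\ h+2^{n-2},\ h+2^{n-1},\ h+3\cdot 2^{n-2}\}$; its first two nodes lie in the left half (hence are marked), its last two in the right half (hence are unmarked). The bookkeeping heart of the argument is that the two marked nodes $h$ and $h+2^{n-2}$ are exactly the positions of $\{0,\ldots,2^{n-1}-1\}$ congruent to $h \bmod 2^{n-2}$, and as $h$ ranges over $\{0,\ldots,2^{n-2}-1\}$ these pairs partition $\{0,\ldots,2^{n-1}-1\}$. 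Since round 1 leaves exactly one survivor at each such position, every level-$(n-2)$ list receives precisely two survivors, located at its two marked left-half nodes; each of them then traverses right to the first unmarked node at or after its current position, namely $h+2^{n-1}$, and stops there, which is where round 1 ends.

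Finally I would identify these landing positions with the maximal-level heads of the right hand side copy. By Definition~\ref{lrs_psg} that copy of $PSG_{\tt{min}}(n-1)$ occupies global positions $2^{n-1},\ldots,2^n-1$ with level $n-2$ as its maximal level; applying Proposition~\ref{fistTpos} to it (legitimate by Proposition~\ref{recursive_perfect}) places its maximal-level heads at local positions $0,\ldots,2^{n-2}-1$, i.e. at global positions $\{h+2^{n-1} : 0\le h\le 2^{n-2}-1\}$, exactly the landing positions computed above, each holding two threads. The main obstacle, to my mind, is precisely the partition argument of the previous paragraph: one must verify that the $2^{n-1}$ survivors distribute two-to-a-list across the level-$(n-2)$ lists and that the ``next unmarked node'' traversal funnels each pair onto the \emph{same} right-half head rather than scattering them. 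Everything else reduces to a direct application of Proposition~\ref{fistTpos} and the $2^i$ spacing of a perfect skip graph.
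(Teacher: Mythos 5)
Your proof is correct and follows essentially the same route as the paper's: both track thread positions via Proposition~\ref{fistTpos}, observe that the $2^{n-1}$ survivors land two-per-list on the first two (marked) nodes of each level-$(n-2)$ list, and conclude that both traverse to the third node at offset $2^{n-1}$, which is a maximal-level head of the right hand side copy. Your congruence-mod-$2^{n-2}$ bookkeeping is just a more explicit rendering of the paper's observation that the first two nodes of each level-$(n-2)$ list are the heads of two level-$(n-1)$ lists.
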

\begin{proof}
Fix $n \geq 2$, where $T=2^n$. According to $\tt{SGMARK}$, at the beginning of the first round there are exactly two threads acting on the heads of each of the lists in level $n-1$. By Proposition~\ref{fistTpos}, positions $pos_{{n -1}}^0, pos_{{n -1}}^1, \ldots pos_{{n -1}}^{2^{n-1} -1}$, are marked by $2^{n-1}$ threads. The remaining $2^{n-1}$ threads move simultaneously down to level $n-2$. Each list in level $n-2$ is partitioned into two lists in level $n-1$ (first two nodes in each level $n-2$ become heads of lists in level $n-1$), and it follows that for each list $k$ of level $n-2$, we have exactly two threads  acting on list $k$,  one for each of the first two consecutive nodes on that particular list. There are now a total of  $2^{n-2}$ threads  positions at the heads of all list $k$ in level $n-2$. By Proposition~\ref{fistTpos} it follows that the threads are in positions $pos_{n-2}^0, pos_{n-2}^1, \ldots pos_{n-2}^{2^{n-2}-1}$. Since there is also a thread on the second node of every list $k$ on level $n-2$, for every two threads on any such list $k$ the nearest unmarked node is the next consecutive one. Counting from the heads of the lists on level $n-2$ the third consecutive node will be $2^{n-1}$ positions away. This implies, that at the end of the first round we will have two threads on each of the positions $pos_{n-2}^{2^{n-1}}, pos_{n-2}^{2^{n-1} +1}, \ldots pos_{n-2}^{(2^{n-1} + 2^{n-2}-1)}$. These are just the heads of the lists of the maximum level of the right hand side copy of $PSG_{\tt{min}}(n-1)$ in $PSG_{\tt{min}}(n)$.
\end{proof}
\begin{corollary}\label{endstart}
Let $n\geq 2$. The end of the first round of $\tt{SGMARK}$ on $PSG_{\tt{min}}(n)$ is equivalent to the start of a first round of $\tt{SGMARK}$ on the right hand side copy of $PSG_{\tt{min}}(n-1)$ in $PSG_{\tt{min}}(n)$.
\end{corollary}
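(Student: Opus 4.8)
The plan is to read the corollary off Proposition~\ref{firstround} almost directly, once we pin down what ``equivalent'' should mean. Two configurations of $\tt{SGMARK}$ are equivalent when they agree on (i) which nodes carry threads and how many, (ii) the set of marked nodes restricted to the substructure under consideration, and (iii) the skip-graph structure (the lists at each level) up to the relevant translation/isomorphism. Since $\tt{SGMARK}$ is deterministic once the stated simultaneity assumptions are imposed, equivalent configurations evolve identically; hence it suffices to verify these three items at the end of the first round on $PSG_{\tt{min}}(n)$ against the stipulated starting configuration of $\tt{SGMARK}$ on the right hand side copy of $PSG_{\tt{min}}(n-1)$.

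For item (i), I would invoke Proposition~\ref{firstround}: at the end of the first round on $PSG_{\tt{min}}(n)$ there are exactly two surviving threads on each head of the maximum-level lists of the right hand side copy of $PSG_{\tt{min}}(n-1)$. This is precisely the initial condition of $\tt{SGMARK}$, which by definition begins on any minimal perfect skip graph with exactly two threads on each maximum-level head. A head count confirms consistency: the $T = 2^n$ initial threads are halved during the first round to $2^{n-1}$ survivors, exactly the thread budget of $PSG_{\tt{min}}(n-1)$. For item (iii), I would appeal to Proposition~\ref{recursive_perfect} together with Definition~\ref{lrs_psg}: the right hand side copy, obtained by deleting level $n-1$ and the first $2^{n-1}$ bottom nodes, is a genuine $PSG_{\tt{min}}(n-1)$, so its level structure is isomorphic (via the translation by $2^{n-1}$ positions) to a standalone $PSG_{\tt{min}}(n-1)$, and the simultaneity assumptions governing $\tt{SGMARK}$ on $PSG_{\tt{min}}(n)$ restrict verbatim to it.

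The one point needing real verification is item (ii): that \emph{no} node of the right hand side copy has been marked during the first round, so the copy is genuinely ``fresh.'' Tracing the first round as in the proof of Proposition~\ref{firstround}, every mark placed in that round lands among the level-$(n-1)$ heads, which by Proposition~\ref{fistTpos} occupy exactly positions $0,\ldots,2^{n-1}-1$; the surviving threads then descend and traverse to their next contested nodes but do not mark again before the round ends. Meanwhile the right hand side copy occupies positions $2^{n-1},\ldots,2^{n}-1$ by Definition~\ref{lrs_psg}. The two ranges are disjoint, so the copy is entirely unmarked, matching the clean initial state of a fresh run. I expect this marked-region/copy separation to be the main (though mild) obstacle, since it is the only place where one must argue that the recursion does not inherit stale marks; the remaining identifications are immediate from the cited results. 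Combining (i)--(iii), the end-of-first-round configuration and the fresh start on the right hand side copy coincide, and determinism of $\tt{SGMARK}$ yields the claimed equivalence.
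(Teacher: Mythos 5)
Your proposal is correct and follows essentially the same route as the paper, whose entire proof is the single line ``Follows directly from Proposition~\ref{firstround}''; you are simply making explicit the verification that this directness presupposes. Your item (ii) --- checking that all first-round marks land in positions $0,\ldots,2^{n-1}-1$ and hence miss the right hand side copy entirely --- is a detail the paper leaves implicit, and it is correctly argued from Proposition~\ref{fistTpos} and Definition~\ref{lrs_psg}.
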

\begin{proof}
Follows directly from Proposition~\ref{firstround}.
\end{proof}
\begin{theorem} \label{thr:cas_sgspray}
Let $n\geq 2$. $\tt{SGMARK}$ on $PSG_{\tt{min}}(n)$ ensures that exactly $T=2^n$ nodes are marked. Furthermore, 2 threads contend for each of the first $T-1$ consecutive nodes, and 1 thread tries to CAS (logically delete) the last node.
\end{theorem}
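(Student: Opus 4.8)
The plan is to prove the statement by induction on $n$, exploiting the recursive decomposition of $PSG_{\tt{min}}(n)$ established in Proposition~\ref{recursive_perfect} and the round-to-round hand-off of Corollary~\ref{endstart}. Since the claim speaks of ``the first $T-1$ consecutive nodes'' and ``the last node'', I would make the position bookkeeping explicit and carry a slightly strengthened hypothesis: for every $n \geq 1$, running $\tt{SGMARK}$ on $PSG_{\tt{min}}(n)$ marks exactly the $T = 2^n$ bottom-level nodes occupying positions $0, 1, \ldots, 2^n - 1$, with exactly two threads attempting to CAS each of positions $0, \ldots, 2^n - 2$ and exactly one thread attempting to CAS position $2^n - 1$. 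Proving this for $n \geq 1$ both supplies the recursion's base case and yields the theorem for $n \geq 2$ a fortiori.

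For the base case $n = 1$, the structure $PSG_{\tt{min}}(1)$ is a single two-node list and both threads begin at its head (position $0$); only the level-$0$ phase of Alg.~\ref{alg:sg_remove_min} executes. Both threads target the nearest unmarked node, position $0$, so exactly two threads contend for it; the loser then performs its second attempt, walks forward to the now-nearest unmarked node (position $1$), and marks it uncontended. This marks both nodes with the claimed pattern and realizes the ``walk forward one last time'' special case mentioned in the round model.

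For the inductive step, I would assume the strengthened claim for $PSG_{\tt{min}}(n-1)$ and consider $PSG_{\tt{min}}(n)$ with $T = 2^n$ threads, two seated at the head of each of the $2^{n-1}$ lists of level $n-1$, whose heads occupy positions $0, \ldots, 2^{n-1} - 1$ by Proposition~\ref{fistTpos}. In the first round, exactly two threads attempt to CAS each of these $2^{n-1}$ head nodes; for each list one wins and marks while the other fails and descends. By Proposition~\ref{firstround}, the $2^{n-1}$ survivors finish the round two-per-head on the maximum level of the right-hand-side copy of $PSG_{\tt{min}}(n-1)$, whose bottom list occupies positions $2^{n-1}, \ldots, 2^n - 1$. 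By Corollary~\ref{endstart} this configuration is precisely the start of $\tt{SGMARK}$ on that copy, so the inductive hypothesis applies after relabeling positions by $+2^{n-1}$: the survivors mark positions $2^{n-1}, \ldots, 2^n - 1$, with two threads on each of $2^{n-1}, \ldots, 2^n - 2$ and one on $2^n - 1$. Composing the two phases yields exactly $2^n$ marked nodes, two-thread contention on positions $0, \ldots, 2^n - 2$ (the first $T-1$ nodes), and a single thread on position $2^n - 1$, as required.

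The main obstacle is justifying that the two phases compose without interference, i.e., that no survivor of the first round ever attempts to CAS a node in positions $0, \ldots, 2^{n-1}-1$, and symmetrically that no first-round thread ever reaches the right-hand-side copy. This is exactly where the position-explicit hypothesis earns its keep: the analysis inside Proposition~\ref{firstround} shows that after descending, each survivor's nearest unmarked node lies $2^{n-1}$ positions ahead of its level-$(n-2)$ head, hence at a position $\geq 2^{n-1}$, so survivors may \emph{traverse past} the already-marked low positions but never issue a CAS there. Since the first-round CAS attempts are confined to positions $< 2^{n-1}$ and the recursive trajectory is confined to positions $\geq 2^{n-1}$, the per-node contention counts of the two phases live on disjoint position ranges and simply concatenate. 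I would also state explicitly that this argument relies on the simultaneity assumptions introduced before Proposition~\ref{fistTpos}, which are what make the notion of a ``round'' and the clean hand-off of Corollary~\ref{endstart} well-defined.
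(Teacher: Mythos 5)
Your proof is correct and follows essentially the same route as the paper's: induction on $n$, with Proposition~\ref{firstround} accounting for the two-thread contention on the first $2^{n-1}$ positions in the opening round and Corollary~\ref{endstart} reducing the remainder to $\tt{SGMARK}$ on the right-hand-side copy of $PSG_{\tt{min}}(n-1)$. The only differences are that you anchor the induction at $n=1$ rather than $n=2$ and make the position bookkeeping and the non-interference of the two phases explicit, which if anything tightens the paper's argument.
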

\begin{proof}
Let $n=2$. This means there are $2$ threads acting on each of the nodes in position $0$ and $1$ on the second level of $PSG_{\tt{min}}(2)$. According to $\tt{SGMARK}$, one thread each will mark nodes in positions $0$ and $1$ respectively, followed by the other two threads that will descend to the bottom list. Then, both threads will move to the right simultaneously to the node in position $2$, and one of the threads will succeed in marking the node in position $2$. Finally the last thread will move to the right and mark the last node. Thus, two threads contended for each node in positions $0,1$ and $2$, and there was no contention in the last node, since at most 1 thread can try to CAS (logically delete) the last node.

Suppose the claim holds for some $k > 2$. Consider the first round of $\tt{SGMARK}$ on $PSG_{\tt{min}}(k +1)$. Following the proof of Proposition~\ref{firstround}, we see that the first $2^{k}$ nodes are marked at the start of the first round, and there are two threads contending for each of the $2^k$ nodes that are marked. By Corollary~\ref{endstart} we see that the end of the first round on $PSG_{\tt{min}}(k +1)$ is the start of the first round on the right hand side copy of $PSG_{\tt{min}}(k)$ in $PSG_{\tt{min}}(k +1)$. Therefore by the induction hypothesis $\tt{SGMARK}$ will mark exactly $2^{k} + 2^{k} = T^{k+1}$ nodes. Furthermore 2 threads contend for the first $T^{k+1}-1$ consecutive nodes, and 1 thread tries to CAS (logically delete) the last node.
\end{proof}

\section{Evaluation}
\label{Sec-Evaluation}

We conducted experiments in a system with 2 Intel Xeon Platinum 8275CL CPUs, each with 24 cores running at 3.0GHz (96 hardware threads total). The system has 192GB of memory and two NUMA nodes. The tool \texttt{numactl --hardware} reports intra-node distances of 10 and inter-node distances of 21. The system runs Ubuntu Linux 18.04 LTS with kernel 4.15.0. We use \texttt{g++ -std=c++11 -O3 -m64 -fno-strict-aliasing} to compile tests.

\textbf{Experiment setup.} We report the \emph{total} number of operations per millisecond achieved in trials having from 2 up to 96 threads. Each trial is an average of 5 runs of 10s each, and follows exactly the testing procedure of Synchrobench~\cite{Synchrobench} with the flag {\texttt{-f 1}}. This flag indicates that the testing procedure tries to match each trial's requested percent of \emph{update operations} (inserted and remove) as much as possible, and that only successful inserts or removals count as update operations. The testing procedure, as well as random number generation, are identical to Synchrobench. We run a \emph{read-heavy} (RH) load, with a requested 20\% of update operations, a \emph{write-heavy} (WH) load, with a requested 50\% of update operations, and a \emph{priority queue} (PQ) load, with 50\% of insertions and 50\% of removals, all distributed uniformly at random across all threads (except for our load-balancing tests). If X\% of operations correspond to \emph{successful} updates in a given experiment, we say we had X\% of \emph{effective updates}, and we report that percentage in each associated graphic caption. Our experiments are defined to be \emph{high contention} (HC) when the key space is $2^{8}$, \emph{medium contention} (MC) when it is $2^{11}$, and \emph{low contention} (LC) when it is $2^{17}$. The structures are preloaded with 20\% of their maximum capacity before any measurements, except for the LC tests, which are preloaded with 2.5\%. LC tests and analysis are presented in Appendix.~\ref{App-LCTests}; here we will focus on HC and MC tests. Threads are pinned to each CPU, and we fill a socket before adding threads to another socket. We allocate memory with \texttt{libnuma}, in chunks capable of holding $2^{20}$ objects, in order to amortize the expensive cost of \texttt{numa\_alloc\_local()}.

\textbf{Membership Vectors.} Our membership vectors are generated so that threads pinned to closer hardware threads in the system have their associated skip lists sharing more levels. We obtain data from \texttt{/proc/cpuinfo} on Linux, then renumber threads so the larger the absolute difference between thread identifiers $1 \ldots T$, the larger the physical distance between their associated CPUs. We consider NUMA domains, core collocation, and hardware-thread collocation in order to access distance and define our membership vectors.

\textbf{1 - Performance results.} Figures~\ref{fig:performance-hc-wh} and \ref{fig:performance-mc-wh} show write-heavy (WH) results for the HC and MC contention scenarios. Low-contention results (LC) are presented in Appendix~\ref{App-LCTests}, and read-heavy (RH) results are presented in Appendix~\ref{App-RHTests}. In our graphs, \texttt{layered\_map\_\{sg,ssg\}} refers to using C++ \texttt{std::map} in conjunction with the hash \cite{RobinHood} as local structures, respectively over regular or sparse skip graphs (p. \pageref{sparseSkipGraphs}) as shared structures; \texttt{lazy\_layered\_sg} is the lazy variant of \texttt{layered\_map\_sg}; \texttt{rotating} is \cite{rotating}, \texttt{nohotspot} is \cite{nohotspot}, and \texttt{numask} is \cite{numask} as found in Synchrobench's GitHub (mid August 2019).
For the purpose of isolating individual design components in our analysis, we also developed as \textbf{control}: a locked skip list; a concurrent skip list with the same codebase and practices as our skip graph code, including our relink optimization (p. \pageref{physicalRemoval}); a skip graph without layering; and finally layered C++ maps (i) over a linked list (\texttt{layered\_map\_ll}) and (ii) over a skip list (\texttt{layered\_map\_sl}). The former is essentially a \texttt{layered\_map\_sg} with maximum level 0, and the latter a \texttt{layered\_map\_ssg} with a single constituent skip list (hence, with no opportunity to implement our partitioning scheme). Non-layered skip lists or skip graphs have maximum level $x$ if the test's key space is of size $2^x$, and layered versions follow our partitioning scheme definitions (p. \pageref{partitioning}).

\label{discussion:performance}
\begin{figure}[htb]
    \centering
	\begin{minipage}{0.48\textwidth}
		\centering
		\includegraphics[width=\textwidth]{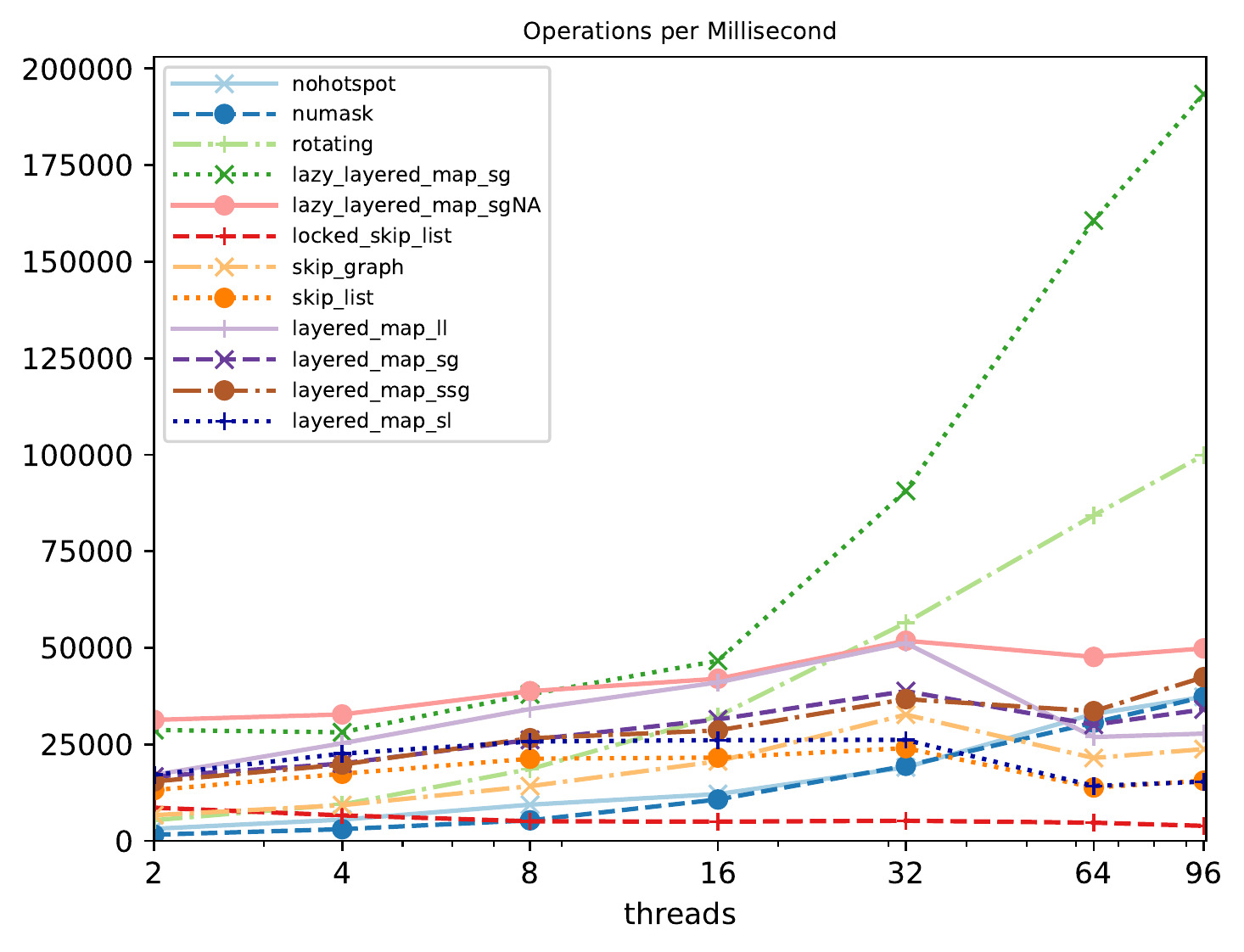}
		\caption{HC, WH: 32\% effective updates.}
		\label{fig:performance-hc-wh}
	\end{minipage}\hfill
	\begin{minipage}{0.48\textwidth}
		\centering
		\includegraphics[width=\textwidth]{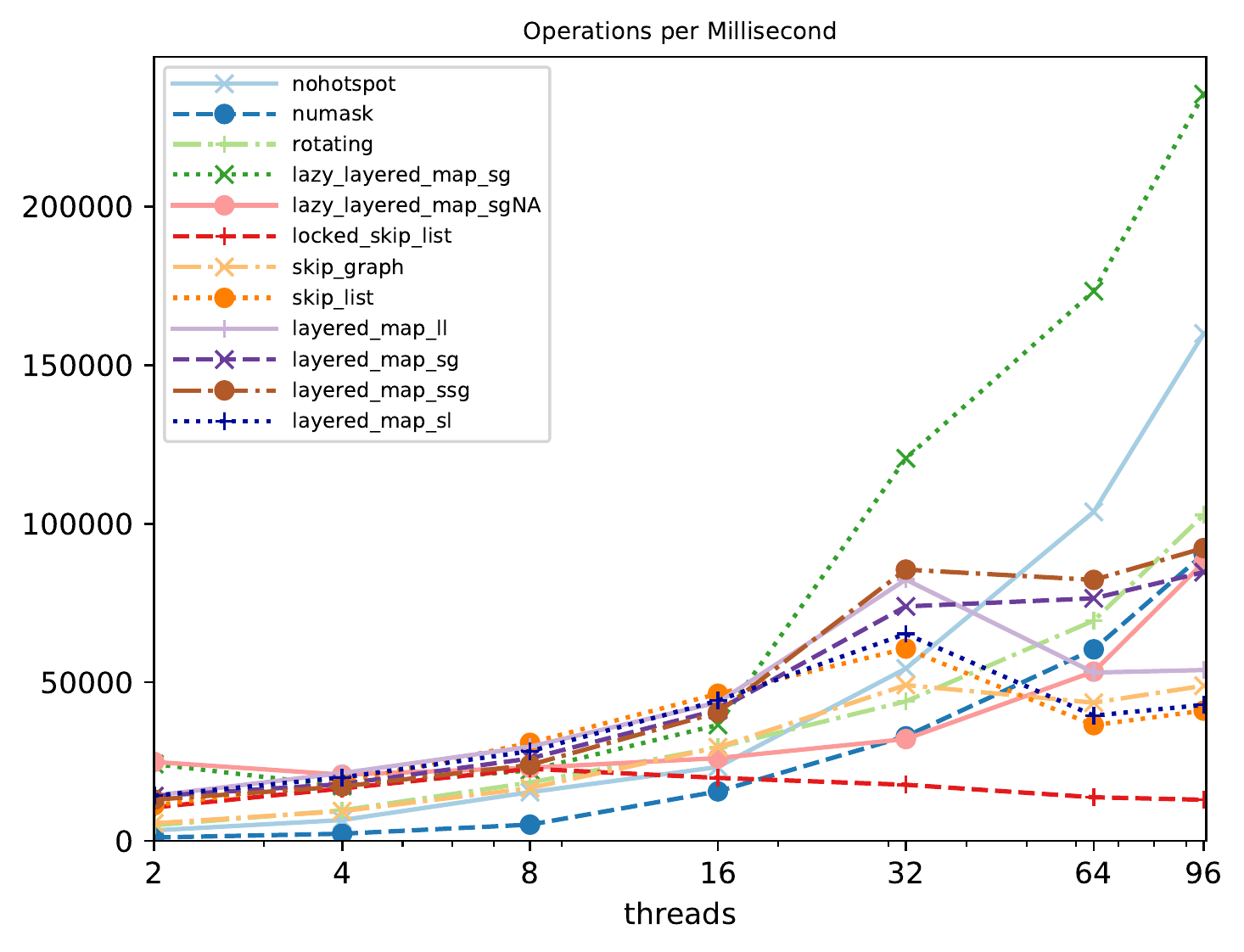}
		\caption{MC, WH: 32\% effective updates.}
		\label{fig:performance-mc-wh}
	\end{minipage}
\end{figure}

With a small key space (HC-WH), \texttt{layered\_map\_ll} performs better than \texttt{layered\_map\_sg} and \texttt{layered\_map\_sl}, but the performance degrades quickly as the key space gets bigger (MC-WH, Fig.~\ref{fig:performance-mc-wh}; LC-WH, Fig.~\ref{fig:performance-lc-wh}, Appendix~\ref{App-LCTests}). With bigger key spaces, more elements need to be traversed in the shared structure upon searches. We could be tempted to say that the multilevel shared structure in \texttt{layered\_map\_sg} is the reason it performs better in MC-WH, but note that \texttt{layered\_map\_sl} performs worse than \texttt{layered\_map\_ll} in the same case MC-WH. The \emph{reason} why \texttt{layered\_map\_sg} performs better in the MC-WH scenario seems to be, therefore, related to the partitioning scheme in the shared structure (the skip graph). Further, in the same MC-WH scenario, we note a clear performance separation between \texttt{layered\_map\_ssg} and \texttt{layered\_map\_sl} after 16 threads. The existence of multiple, overlapping of skip lists, employing a partitioning scheme across threads, is the essential difference between these two approaches.
 
As far as the lazy implementation performance, we see it as a combination of (i) the effectiveness of our partition scheme for increasing NUMA locality and reducing contention (discussed above and \emph{verified} below, in item \#2); (ii) the commission policy to unlink invalid, marked nodes (isolated right below); and (iii) the fact that with smaller key spaces, threads will more commonly find unmarked nodes through their local hashtable, which performs much better compared to the \texttt{std::map} local structure. We show a \texttt{lazy\_layered\_map\_sgNA} where we make the commission period zero, as control. Under HC-WH, \cite{rotating} performs well, and our control implementation is comparable to \cite{numask, nohotspot}. Under MC-WH, \cite{nohotspot} performs well, and our control implementation is comparable to \cite{numask, rotating}. In any case, we confirm our expectation that naive skip graphs scale poorly, because while in a skip list the expected number of levels of each node is 2, in a skip graph it is always the maximum. Further, on Tbl~\ref{tbl:metrics}, we see how \texttt{layered\_map\_sg}, without any commission period, requires a \emph{lot} more CASes per operation than other structures. With that in mind, and considering our indications that the partitioning scheme works, we have \emph{first} to make sure that skip graphs become \emph{viable} with techniques such as lazy insertions/removals, the commission period, and our relink optimizations mentioned in p.~\ref{commission}.

\textbf{2 - NUMA locality and contention reduction.} We \emph{verify} that our partitioning mechanism promotes better NUMA locality and reduced contention below. Figures~\ref{fig:heatmap-cas-lazysg} and~\ref{fig:heatmap-cas-sl} show heatmaps where coordinates $(i,j)$ indicate the distribution of CAS instructions per operation performed by thread $i$ into a node allocated by thread $j$, instrumented manually on node access functions on the 96-thread MC-WH scenario. The memory access pattern shows that the larger the distance between thread IDs (which we adjusted to match the NUMA distance), the smaller the number of memory accesses.
% The difference in scale, for example between our lazy layered skip graph and a skip list, results from the former performing about 8x the number of operations of the latter.
Comparing the lazy skip graph and a skip list (which serves as our \emph{control}, as it has been implemented using the same codebase and practices), the heatmaps indicate a \emph{dramatic} increase in CAS NUMA locality. We provide graphics for other structures (non-lazy skip graph, non-lazy sparse skip graph) in Appendix~\ref{App-LocalityHeatmaps}. Also related to locality, we expect fewer cache misses due to our partitioning mechanism, an expectation that is confirmed on Tbl.~\ref{tbl:cache_misses}.

Regarding contention, Tbl.~\ref{tbl:metrics} shows additional metrics collected via manual code instrumentation, on the 96-thread HC-WH scenario. Both our heatmaps and Tbl~\ref{tbl:metrics} \emph{do not count} CAS/read/write operations performed over an inserting node, otherwise locality would be artificially inflated with no-contention operations that are inherently local, as threads have to initialize their allocated nodes. Any CAS operation metric presented is a \emph{maintenance CAS}: an operation required to link, unlink, or cleanup nodes. 
\label{discussion:locality}
\begin{figure}[htb]
    \centering
	\begin{minipage}{0.48\textwidth}
		\centering
		\includegraphics[width=\textwidth]{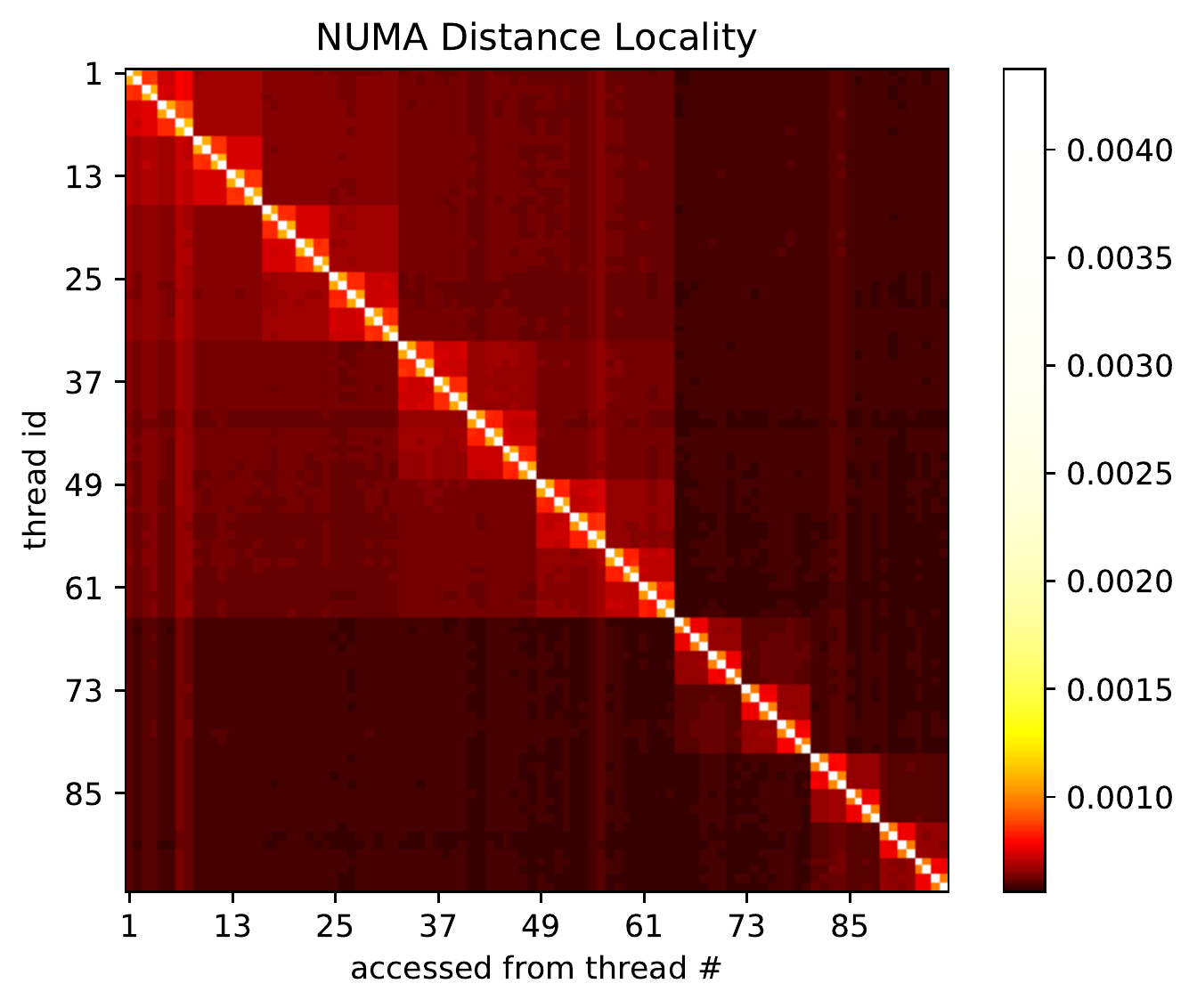}
		\caption{MC-WH CAS heatmap, lazy map/SG.}
		\label{fig:heatmap-cas-lazysg}
	\end{minipage}\hfill
	\begin{minipage}{0.48\textwidth}
		\centering
		\includegraphics[width=\textwidth]{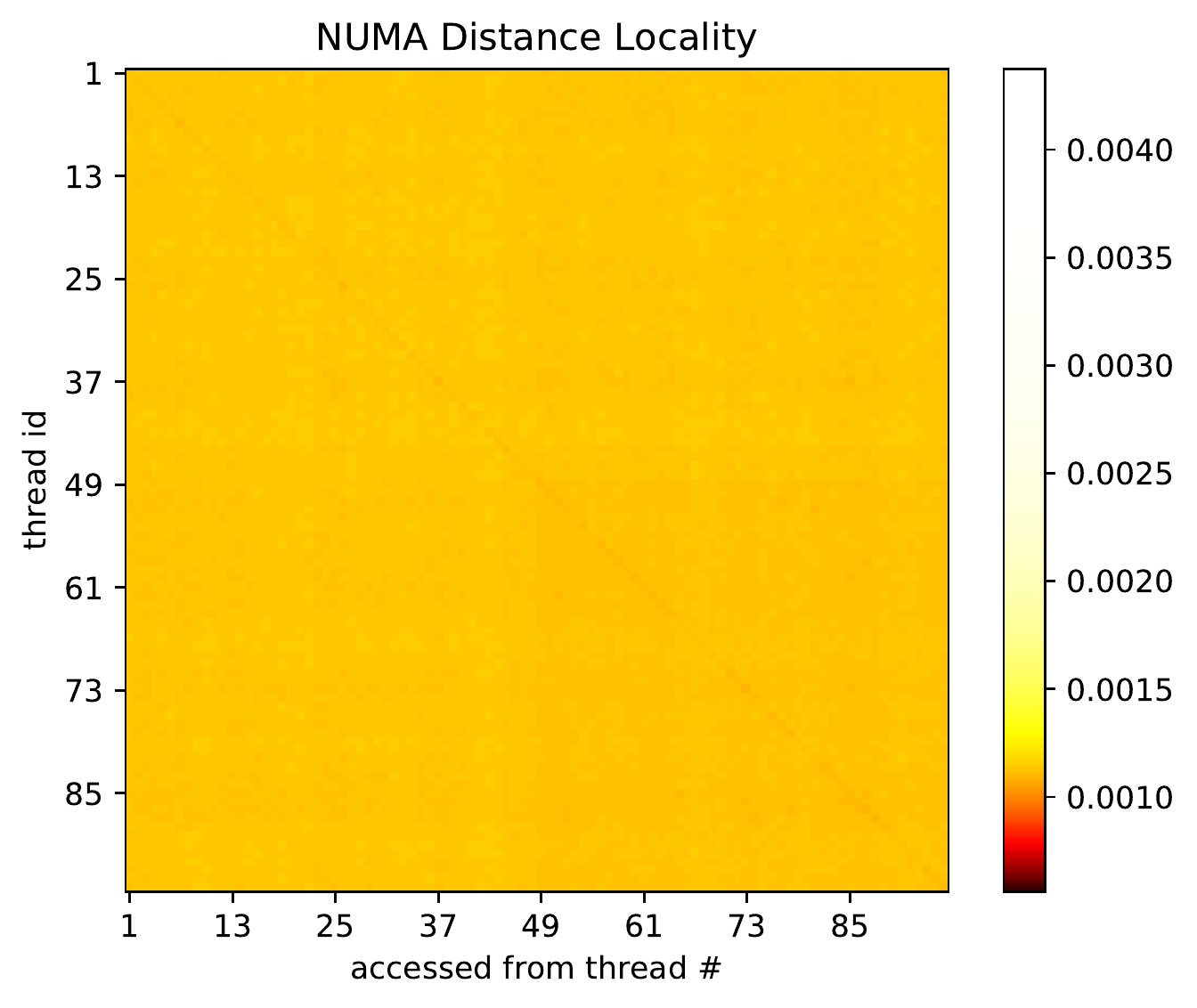}
		\caption{MC-WH CAS heatmap, skip list.}
		\label{fig:heatmap-cas-sl}
	\end{minipage}
\end{figure}

\begin{table}[htb]
	\centering
		\begin{tabular}{|c|c|c|c||c|c|c||c|c|c||c|c|c||}
		\hline
		\multirow{2}{*}{totals} & \multicolumn{3}{c}{lazy\_sg} & \multicolumn{3}{c}{map\_sg} & \multicolumn{3}{c}{map\_ssg} & \multicolumn{3}{c|}{sl}\\
		& L1 & L2 & L3 & L1 & L2 & L3 & L1 & L2 & L3 & L1 & L2 & L3 \\
		\hline
		\hline
8 & 52.8 & 12.6 & 3.1 & 56.5 & 12.4 & 3.1 & 57.6 & 12.5 & 2.7 & 65.0 & 15.1 & 3.1 \\
16 & 53.6 & 14.5 & 3.4 & 55.9 & 13.7 & 3.0 & 59.0 & 14.1 & 3.1 & 73.1 & 16.7 & 3.5 \\
32 & 87.5 & 18.7 & 3.5 & 73.9 & 14.5 & 3.0 & 93.7 & 18.1 & 2.9 & 93.0 & 24.6 & 3.7 \\
		\hline
\end{tabular}
\caption{Average (instructions \& data) cache misses \emph{per operation}, HC-WH, 96 threads. Numbers collected with PAPI.}
\label{tbl:cache_misses}
\end{table}

\begin{table}[htb]
	\centering
	\begin{tabular}{|c|c|c|c|c|}
	\hline
	& lazy map/sg & map/sg & map/ssg & skip list \\
	\hline
	local reads/op & 9.105 & 8.933 & 4.264 & 0.477 \\
	remote reads/op & 48.076 & 54.521 & 65.123 & 45.392 \\
	local maintenance CAS/op & 0.02508 & 0.177 & 0.0137 & 0.012 \\
	remote maintenance CAS/op & 0.3493 & 2.524 & 0.888 & 1.113 \\
	CAS success rate & 0.999 & 0.986 & 0.982 & 0.883 \\
	\hline
	\end{tabular}
\caption{96 threads, HC-WH. CAS/op does not include uncontended CAS operations upon node insertion. Comparing the lazy map/sg with the skip list, we can observe ~6x more CAS locality, 65\% less CAS/operation, and substantially better CAS success rate.}
\label{tbl:metrics}
\end{table}

Note that although \texttt{lazy\_layered\_map\_sg} performs slightly more reads per operation than skip lists, it performs a 68\% less remote CASes per operation (maintenance CASes). The CAS success rate is substantially higher (99\% in \texttt{lazy\_layered\_map\_sg} vs. 88.3\% in the skip list). Both the increase in NUMA locality, discussed before, and the contention reduction, just discussed, are attributed to our \emph{partitioning scheme}, designed precisely with those goals (p.~\pageref{partitioning}).

\textbf{3 - Relaxed priority queues.} Figure~\ref{fig:performance-hc-rh} tests multiple implementations for relaxed priority queues using skip graphs. The \texttt{spray} implementation consists on the application of the spraying technique of \cite{pqrelaxed1} over skip graphs; the \texttt{sg\_spray} implementation is our custom protocol that traverses the skip graph deterministically, marking elements along this traversal (Alg.~\ref{alg:sg_remove_min}, Appendix~\ref{App-AnalysisSpray}). For both approaches, we also show lazy versions (\texttt{spray\_lazy} and \texttt{sg\_spray\_lazy}). We also have a \emph{control} skip list (implemented using the same codebase and practices) where we perform spray operations as in \cite{pqrelaxed1}.

Lazy versions are expected to perform faster for similar reasons they perform better in maps, although, for relaxed priority queues, we cannot make guarantees about their degree of relaxation. The reason is that upper-level lists are substantially more sparse in our lazy implementations, and elements are added to those lists by demand, something which we cannot (at this point) reasonably model. Now focusing on the non-lazy versions, we note that \texttt{spray} scales better than \texttt{sg\_spray}. Although Theorems~\ref{thr:cas_slspray} and~\ref{thr:cas_sgspray} indicate that \texttt{sg\_spray} is subject to smaller contention, we also show there that the range of \texttt{spray} is slightly larger. Fig.~\ref{fig:keydist-hc} shows an experiment similar to the one in \cite{pqrelaxed1}, where we perform traversals and only note which nodes would be marked, without actually marking any element. The experiment shows that \texttt{sg\_spray} is indeed less relaxed than \texttt{spray}. In conjunction, Figures~\ref{fig:performance-hc} and~\ref{fig:keydist-hc} essentially exhibit a tradeoff between priority queue relaxation and scalability.

\begin{figure}[H]
    \centering
	\begin{minipage}{0.48\textwidth}
		\centering
		\includegraphics[width=\textwidth]{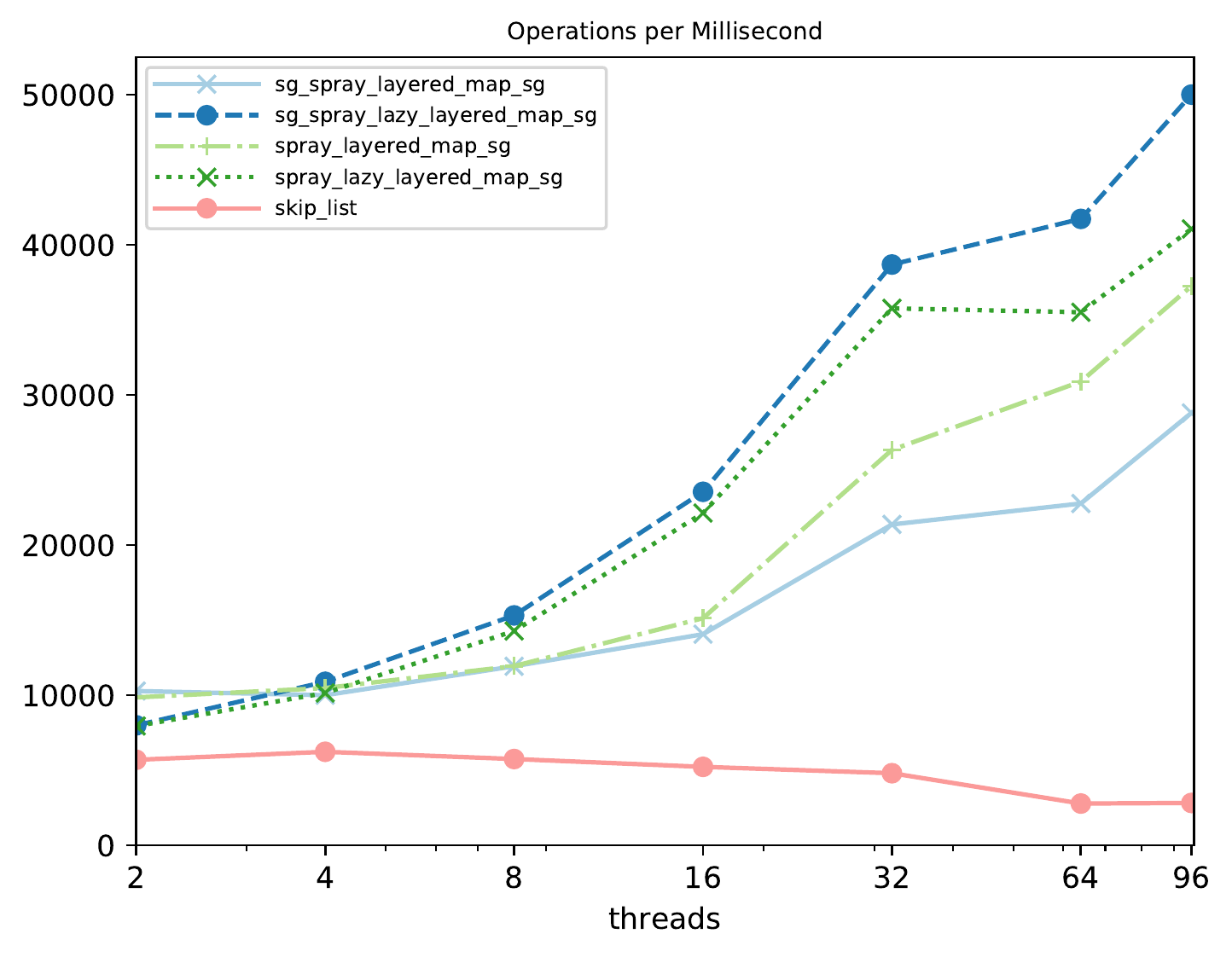}
		\caption{MC-PQ: 82\% effective updates.}
		\label{fig:performance-hc}
	\end{minipage}\hfill
	\begin{minipage}{0.48\textwidth}
		\centering
		\includegraphics[width=\textwidth]{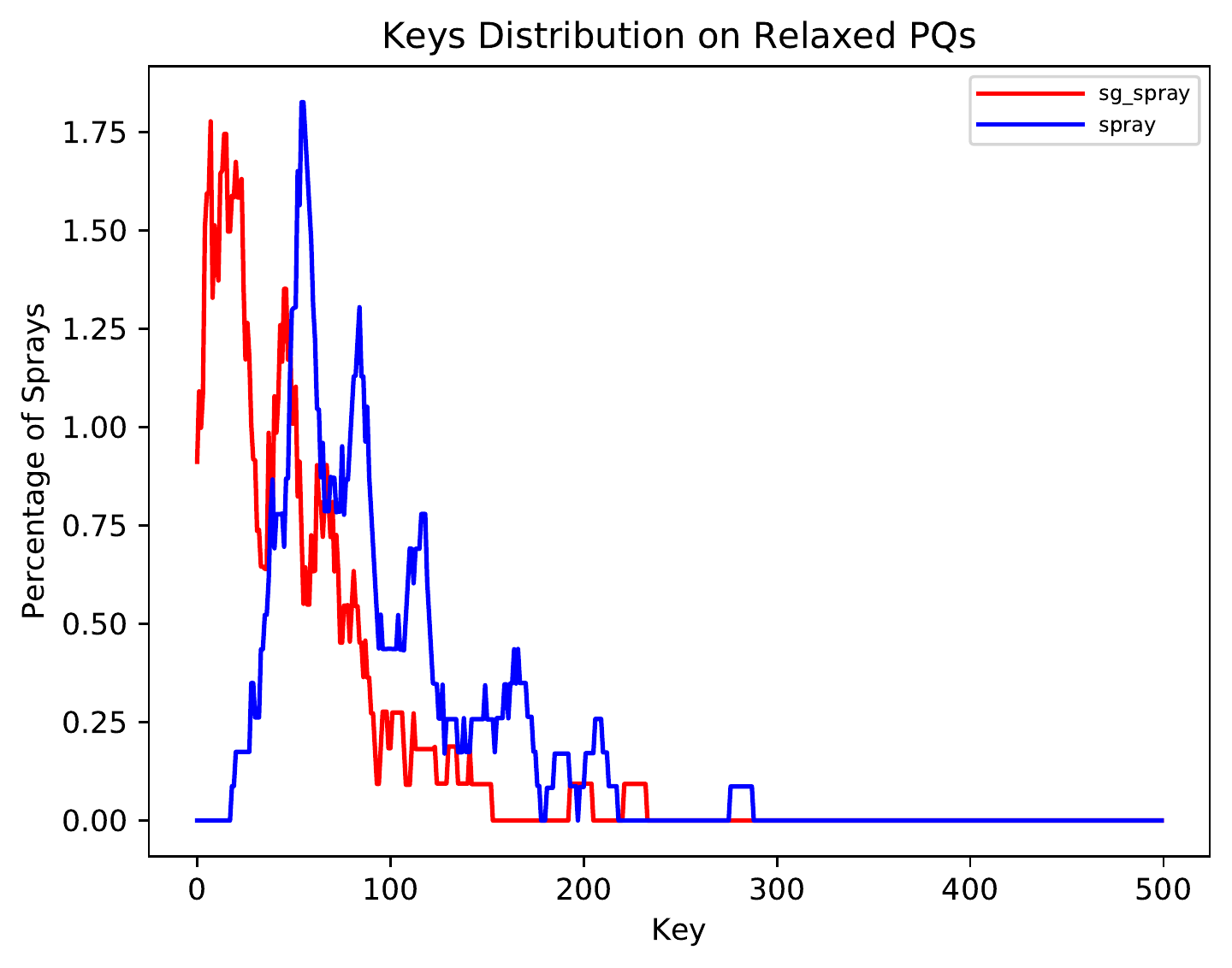}
		\caption{MC-PQ: Key distribution metric with faux removals.}
		\label{fig:keydist-hc}
	\end{minipage}
\end{figure}

\textbf{4 - Load balancing.} In order to evaluate load balancing, we consider two experiments. When only thread $T_0$ inserts, and all others perform removals and contains operations in an MC-WH experiment, Figure~\ref{fig:keysizes-mc} still shows that the \emph{sizes} of the local structures are similar. In that figure, the X-axis represent 100 discrete time points, varying from 0-10s, the Y-axis represents 8 different threads, and the color indicates the local structure size (not counting marked elements). A different experiment considers the scenario where $1/2$ of the threads insert only on the 0-1023 range of a 2048 keyspace, and $1/2$ of the threads insert only on the 1024-2047 range of the keyspace. Figure~\ref{fig:keydist-mc} shows that the key distribution of two threads that belong to two different groups are spread throughout the \emph{whole} element space. In the figure, the X-axis represents possible keys, the Y-axis represents 10 discrete time points, one per second, and the color represents whether a particular key belongs to thread 1 or 2 in their local structure.

\begin{figure}[htb]
    \centering
	\begin{minipage}{0.48\textwidth}
		\centering
		\includegraphics[width=\textwidth]{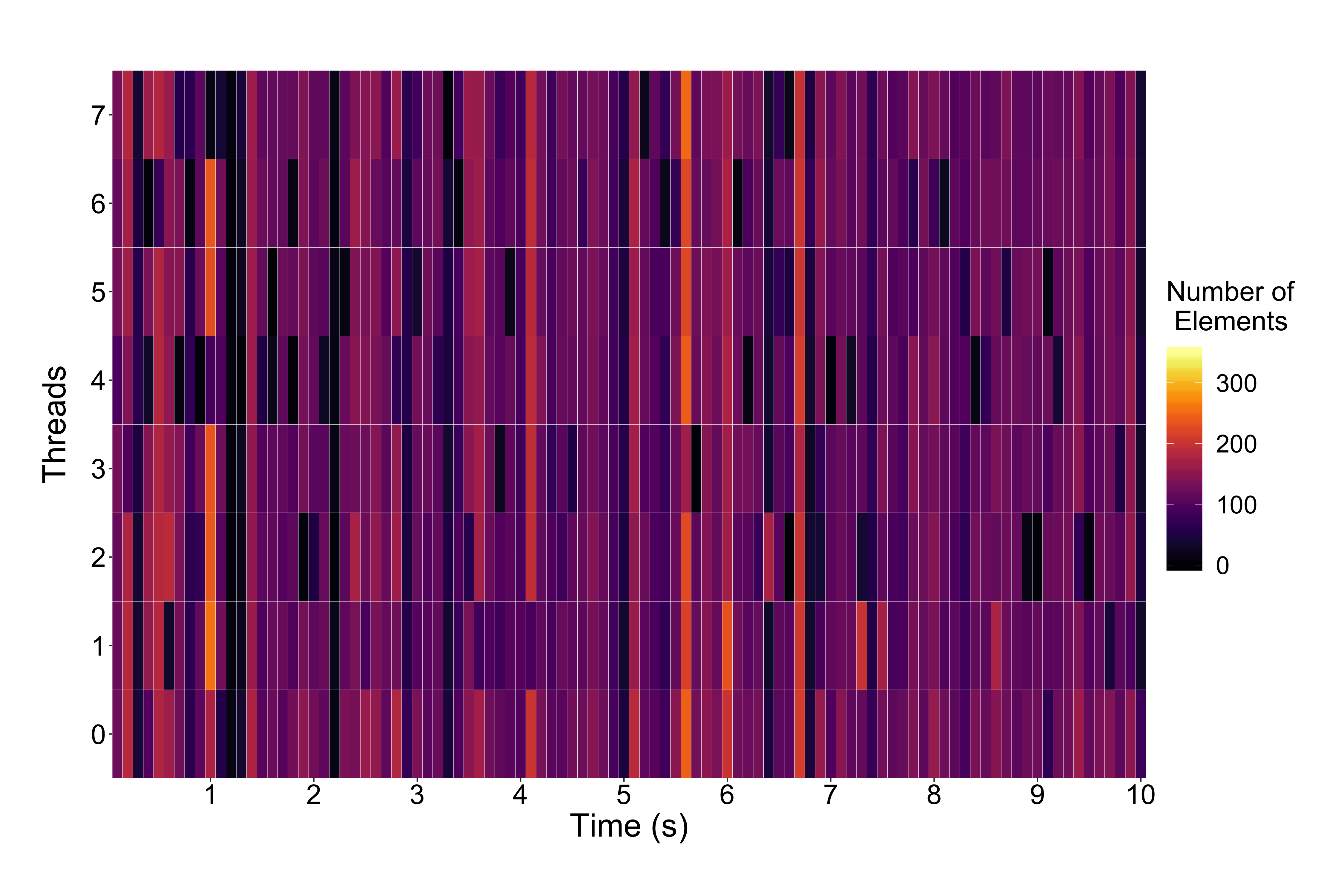}
		\caption{Local structure sizes, keyspace = 2048.}
		\label{fig:keysizes-mc}
	\end{minipage}\hfill
	\begin{minipage}{0.48\textwidth}
		\centering
		\includegraphics[width=\textwidth]{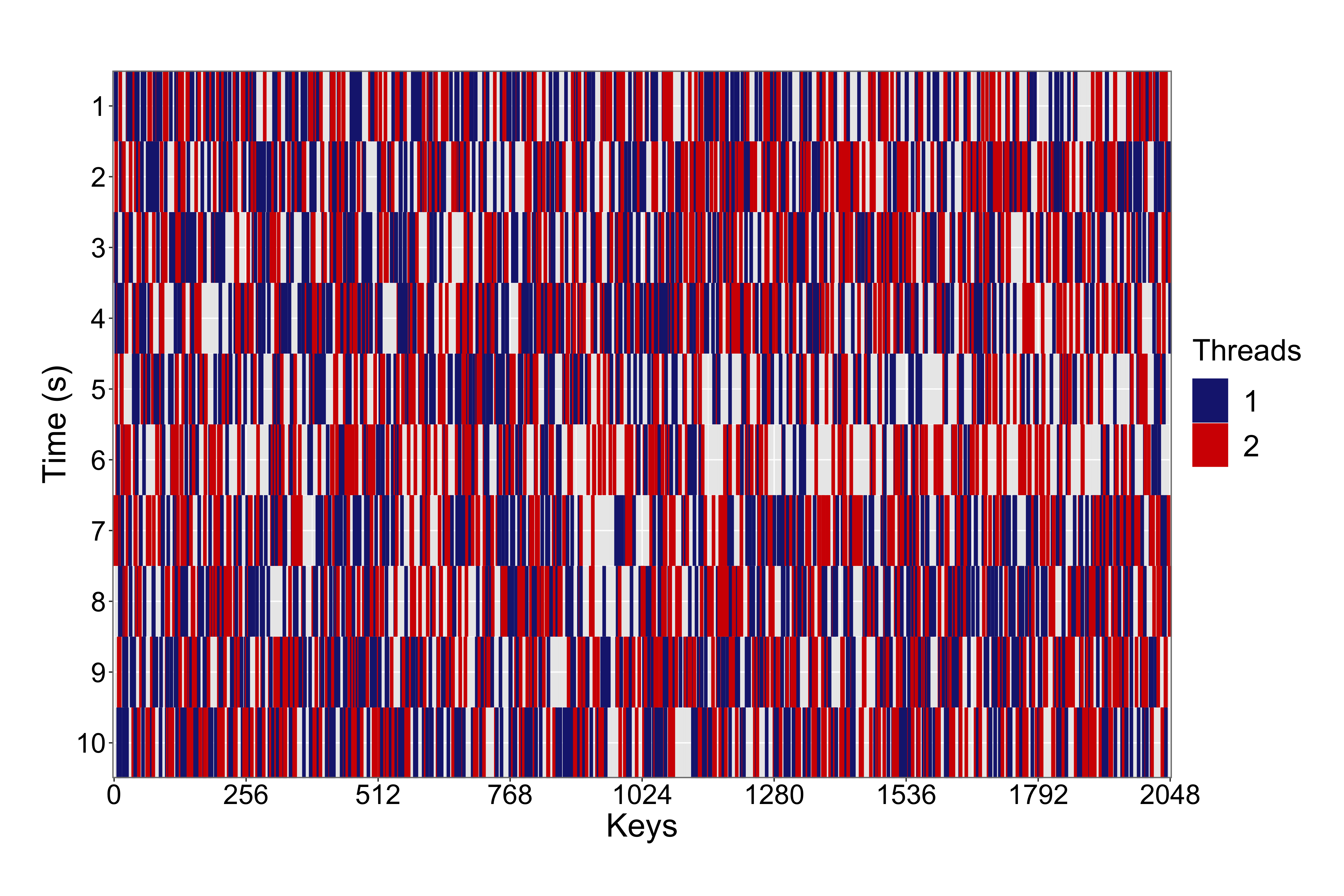}
		\caption{Key distribution, 2048 keyspace.}
		\label{fig:keydist-mc}
	\end{minipage}
\end{figure}

%With 16 threads, an experiment where 25\% of the threads run insertions, 25\% run removals, and 50\% run contains runs {\color{red} 10\%} slower than one where \emph{all threads} perform 25\% of insertions and removals, and 50\% of contains. 

\section{Conclusion}

\label{Sec-Conclusion}

We presented a data partitioning technique based on using skip graphs that shows quantitative and qualitative improvements in NUMA locality, as well as reduced contention. We build on previous techniques of thread-local indexing and laziness, and, at a high level, our design consists of a partitioned skip graph, well-integrated with thread-local sequential maps, operating without contention.

As a proof-of-concept, we implemented map and relaxed priority queue ADTs using our technique. Maps were conceived using lazy and non-lazy approaches to insertions and removals. Remote memory accesses were not only reduced in number, but the larger the NUMA distance between threads, the larger the reduction was, demonstrating a qualitative feature besides a quantitative one. We provide an optional load-balancing mechanism, based on donating inserted nodes across thread-local indexes of all threads, coordinated by a single background thread. 

For relaxed priority queues, we consider two alternative implementations of relaxed priority queues: (a) using ``spraying'', a well-known random-walk technique usually performed over skip lists, but now performed over skip graphs; and (b) a custom protocol that traverses the skip graph deterministically, marking elements along this traversal. We provide formal arguments indicating that the first approach is slightly more \emph{relaxed}, that is, that the span of removed keys is larger, while the second approach has smaller contention. Experimental results indicate that the approach based on spraying performs better on skip graphs, yet both seem to scale appropriately.
%
%We also explore the design of an additional skip graph variant, called \emph{sparse skip graph}, with benefits for low-contention/low-update settings, and provide an optional mechanism for handling non-uniform work distributions.

%\bibliography{Bibliography-nodoi.bib}

\begin{thebibliography}{10}

\bibitem{skipnets}
Nicholas~J. A, Dunagan Michael, B.~Jones, and Stefan Saroiu.
\newblock Skipnet: {A} scalable overlay network with practical locality
  properties.
\newblock In {\em Proceedings of USENIX Symposium on Internet Technologies and
  Systems}, 2003.

\bibitem{frontaldatastructure2}
Deepthi Akkoorath, Jos{\'e} Brand{\~a}o, Annette Bieniusa, and Carlos Baquero.
\newblock Global-local view: Scalable consistency for concurrent data types.
\newblock In {\em Euro-Par 2018: Parallel Processing}, pages 492--504, Cham,
  2018. Springer International Publishing.

\bibitem{pqrelaxed1}
Dan Alistarh, Justin Kopinsky, Jerry Li, and Nir Shavit.
\newblock The spraylist: A scalable relaxed priority queue.
\newblock In {\em Proc. of the 20th ACM SIGPLAN Symposium on Principles and
  Practice of Parallel Programming}, PPoPP 2015, pages 11--20, New York, NY,
  USA, 2015. ACM.

\bibitem{RobinHood}
Martin Ankerl.
\newblock URL: \url{github.com/martinus/robin-hood-hashing}.

\bibitem{skipgraphs}
James Aspnes and Gauri Shah.
\newblock Skip graphs.
\newblock {\em ACM Transactions on Algorithms}, 3(4), November 2007.

\bibitem{delegation}
Irina Calciu, Dave Dice, Tim Harris, Maurice Herlihy, Alex Kogan, Virendra
  Marathe, and Mark Moir.
\newblock Message passing or shared memory: Evaluating the delegation
  abstraction for multicores.
\newblock In Roberto Baldoni, Nicolas Nisse, and Maarten van Steen, editors,
  {\em Principles of Distributed Systems}, pages 83--97. Springer International
  Publishing, 2013.

\bibitem{numastack}
Irina Calciu, Justin Gottschlich, and Maurice Herlihy.
\newblock Using elimination and delegation to implement a scalable
  {NUMA}-friendly stack.
\newblock In {\em 5th {USENIX} Workshop on Hot Topics in Parallelism}, San
  Jose, CA, 2013. {USENIX}.

\bibitem{pqe}
Irina Calciu, Hammurabi Mendes, and Maurice Herlihy.
\newblock The adaptive priority queue with elimination and combining.
\newblock In Fabian Kuhn, editor, {\em Distributed Computing}, volume 8784 of
  {\em Lec. Not. in Computer Science}, pages 406--420. Springer Berlin /
  Heidelberg, October 2014.

\bibitem{blackbox}
Irina Calciu, Siddhartha Sen, Mahesh Balakrishnan, and Marcos~K. Aguilera.
\newblock Black-box concurrent data structures for {NUMA} architectures.
\newblock {\em SIGPLAN Not.}, 52(4):207--221, April 2017.

\bibitem{nohotspot}
Tyler Crain, Vincent Gramoli, and Michel Raynal.
\newblock No hot spot non-blocking skip list.
\newblock In {\em Proceedings of the 2013 IEEE 33rd International Conference on
  Distributed Computing Systems}, ICDCS '13, pages 196--205, Washington, DC,
  USA, 2013. IEEE Computer Society.

\bibitem{numask}
Henry Daly, Ahmed Hassan, Michael~F. Spear, and Roberto Palmieri.
\newblock {NUMASK: High Performance Scalable Skip List for NUMA}.
\newblock In Ulrich Schmid and Josef Widder, editors, {\em 32nd International
  Symposium on Distributed Computing (DISC 2018)}, volume 121 of {\em Leibniz
  International Proceedings in Informatics (LIPIcs)}, pages 18:1--18:19,
  Dagstuhl, Germany, 2018. Schloss Dagstuhl--Leibniz-Zentrum fuer Informatik.

\bibitem{numaLock}
David Dice, Virendra~J. Marathe, and Nir Shavit.
\newblock Lock cohorting: A general technique for designing {NUMA} locks.
\newblock {\em SIGPLAN Not.}, 47(8):247--256, February 2012.

\bibitem{rotating}
Ian Dick, Alan Fekete, and Vincent Gramoli.
\newblock A skip list for multicore.
\newblock {\em Concurrency and Computation: Practice and Experience},
  29(4):e3876, 2017.

\bibitem{combining}
Panagiota Fatourou and Nikolaos~D. Kallimanis.
\newblock Revisiting the combining synchronization technique.
\newblock In {\em Proceedings of the 17th ACM SIGPLAN Symposium on Principles
  and Practice of Parallel Programming}, PPoPP '12, pages 257--266, New York,
  NY, USA, 2012. ACM.

\bibitem{ferrante2014coupon}
Marco Ferrante and Monica Saltalamacchia.
\newblock The coupon collector's problem.
\newblock {\em Materials matematics}, pages 0001--35, 2014.

\bibitem{skiplist-lockfree}
Mikhail Fomitchev and Eric Ruppert.
\newblock Lock-free linked lists and skip lists.
\newblock In {\em Proceedings of the Twenty-third Annual ACM Symposium on
  Principles of Distributed Computing}, PODC '04, pages 50--59, New York, NY,
  USA, 2004. ACM.

\bibitem{numaDB3}
Jana Giceva, Gustavo Alonso, Timothy Roscoe, and Tim Harris.
\newblock Deployment of query plans on multicores.
\newblock {\em Proceedings of the VLDB Endowement}, 8(3):233--244, November
  2014.

\bibitem{skipgraphs-rainbow}
Michael~T. Goodrich, Michael~J. Nelson, and Jonathan~Z. Sun.
\newblock The rainbow skip graph: A fault-tolerant constant-degree distributed
  data structure.
\newblock In {\em Proceedings of the Seventeenth Annual ACM-SIAM Symposium on
  Discrete Algorithm}, SODA '06, pages 384--393, Philadelphia, PA, USA, 2006.
  Society for Industrial and Applied Mathematics.

\bibitem{Synchrobench}
Vincent Gramoli.
\newblock More than you ever wanted to know about synchronization:
  Synchrobench, measuring the impact of the synchronization on concurrent
  algorithms.
\newblock In {\em Proceedings of the 20th ACM SIGPLAN Symposium on Principles
  and Practice of Parallel Programming}, PPoPP 2015, pages 1--10, New York, NY,
  USA, 2015. ACM.

\bibitem{flat-combining}
Danny Hendler, Itai Incze, Nir Shavit, and Moran Tzafrir.
\newblock Flat combining and the synchronization-parallelism tradeoff.
\newblock In {\em Proceedings of the Twenty-second Annual ACM Symposium on
  Parallelism in Algorithms and Architectures}, SPAA '10, pages 355--364, New
  York, NY, USA, 2010. ACM.

\bibitem{elimination}
Danny Hendler, Nir Shavit, and Lena Yerushalmi.
\newblock A scalable lock-free stack algorithm.
\newblock In {\em Proceedings of the Sixteenth Annual ACM Symposium on
  Parallelism in Algorithms and Architectures}, SPAA '04, pages 206--215, New
  York, NY, USA, 2004. ACM.

\bibitem{pqrelaxed3}
Thomas~A. Henzinger, Christoph~M. Kirsch, Hannes Payer, Ali Sezgin, and Ana
  Sokolova.
\newblock Quantitative relaxation of concurrent data structures.
\newblock In {\em Proceedings of the 40th Annual ACM SIGPLAN-SIGACT Symposium
  on Principles of Programming Languages}, POPL '13, pages 317--328, New York,
  NY, USA, 2013. Association for Computing Machinery.
\newblock URL: \url{https://doi.org/10.1145/2429069.2429109}.

\bibitem{TAOMP}
M.~Herlihy and N.~Shavit.
\newblock {\em The Art of Multiprocessor Programming}.
\newblock Morgan Kaufmann, 2008.

\bibitem{linearizability}
M.~Herlihy and J.~Wing.
\newblock Linearizability: a correctness condition for concurrent objects.
\newblock {\em ACM Transaction on Programming Languages and Systems},
  12:463--492, July 1990.

\bibitem{skiplist-optimistic}
Maurice Herlihy, Yossi Lev, Victor Luchangco, and Nir Shavit.
\newblock A simple optimistic skiplist algorithm.
\newblock In {\em Structural Information and Communication Complexity}, pages
  124--138, Berlin, Heidelberg, 2007. Springer Berlin Heidelberg.

\bibitem{progress}
Maurice Herlihy and Nir Shavit.
\newblock On the nature of progress.
\newblock In {\em Principles of Distributed Systems}, pages 313--328, Berlin,
  Heidelberg, 2011. Springer Berlin Heidelberg.

\bibitem{qdlocking}
David Klaftenegger, Konstantinos Sagonas, and Kjell Winblad.
\newblock Queue delegation locking.
\newblock {\em IEEE Transactions on Parallel and Distributed Systems},
  PP(99):1--1, 2017.

\bibitem{numaDB2}
Harald Lang, Viktor Leis, Martina-Cezara Albutiu, Thomas Neumann, and Alfons
  Kemper.
\newblock Massively parallel {NUMA}-aware hash joins.
\newblock In {\em In Memory Data Management and Analysis}, pages 3--14.
  Springer International Publishing, 2015.

\bibitem{numaDB4}
Viktor Leis, Peter Boncz, Alfons Kemper, and Thomas Neumann.
\newblock Morsel-driven parallelism: A {NUMA}-aware query evaluation framework
  for the many-core age.
\newblock In {\em Proceedings of the 2014 ACM SIGMOD International Conference
  on Management of Data}, SIGMOD '14, pages 743--754, New York, NY, USA, 2014.
  ACM.

\bibitem{skiplist-lazy}
Y.~Lev, M.~Herlihy, V.~Luchangco, and N.~Shavit.
\newblock A provably correct scalable skiplist (brief announcement).
\newblock In {\em Proceedings of the 10th International Conference On
  Principles Of Distributed Systems (OPODIS 2006)}, 2006.

\bibitem{skipgraphs-concurrent}
Hammurabi Mendes and Cristina~G. Fernandes.
\newblock A concurrent implementation of skip graphs.
\newblock {\em Electronic Notes in Discrete Mathematics}, 35:263--268, 2009.

\bibitem{cphash}
Zviad Metreveli, Nickolai Zeldovich, and M.~Frans Kaashoek.
\newblock Cphash: A cache-partitioned hash table.
\newblock In {\em Proceedings of the 17th ACM SIGPLAN Symposium on Principles
  and Practice of Parallel Programming}, PPoPP '12, pages 319--320, New York,
  NY, USA, 2012. ACM.

\bibitem{mitzenmacher2017probability}
Michael Mitzenmacher and Eli Upfal.
\newblock {\em Probability and computing: Randomization and probabilistic
  techniques in algorithms and data analysis}.
\newblock Cambridge university press, 2017.

\bibitem{skiplists-concpugh}
W.~Pugh.
\newblock Concurrent maintenance of skip lists.
\newblock Technical report, University of Maryland at College Park, 1990.

\bibitem{pqlotanshavit}
N.~Shavit and I.~Lotan.
\newblock Skiplist-based concurrent priority queues.
\newblock In {\em IEEE International Symposium on Parallel and Distributed
  Processing}, pages 263 --268, 2000.

\bibitem{pqsundelltsigas}
H.~Sundell and P.~Tsigas.
\newblock Fast and lock-free concurrent priority queues for multi-thread
  systems.
\newblock In {\em IEEE International Symposium on Parallel and Distributed
  Processing}, page 11 pp., april 2003.

\bibitem{phoenix}
Justin Talbot, Richard~M. Yoo, and Christos Kozyrakis.
\newblock Phoenix++: Modular {MapReduce} for shared-memory systems.
\newblock In {\em Proceedings of the Second International Workshop on
  {MapReduce} and Its Applications}, MapReduce '11, pages 9--16, New York, NY,
  USA, 2011. ACM.

\bibitem{numaDB1}
Mehul Wagle, Daniel Booss, Ivan Schreter, and Daniel Egenolf.
\newblock {NUMA}-aware memory management with in-memory databases.
\newblock In {\em Performance Evaluation and Benchmarking: Traditional to Big
  Data to Internet of Things}, pages 45--60. Springer International Publishing,
  2016.

\bibitem{pqrelaxed2}
Martin Wimmer, Jakob Gruber, Jesper~Larsson Tr\"{a}ff, and Philippas Tsigas.
\newblock The lock-free k-lsm relaxed priority queue.
\newblock In {\em Proceedings of the 20th ACM SIGPLAN Symposium on Principles
  and Practice of Parallel Programming}, PPoPP 2015, pages 277--278, New York,
  NY, USA, 2015. ACM.

\bibitem{pqrelaxed4}
Martin Wimmer, Francesco Versaci, Jesper~Larsson Tr\"{a}ff, Daniel Cederman,
  and Philippas Tsigas.
\newblock Data structures for task-based priority scheduling.
\newblock In {\em Proceedings of the 19th ACM SIGPLAN Symposium on Principles
  and Practice of Parallel Programming}, PPoPP '14, pages 379--380, New York,
  NY, USA, 2014. Association for Computing Machinery.
\newblock URL: \url{https://doi.org/10.1145/2555243.2555278}.

\end{thebibliography}
%\bibliographystyle{plainurl}
%\include{layered_structures_arxiv.bbl}

\newpage
\appendix

\section{Contains}
\label{App-Contains}

In algorithm \lstinline{contains()} (Alg.~\ref{alg:contains}), if the thread finds an unmarked, shared node, it will test if this node is unmarked and valid (line~\ref{alg:contains:getMarkValid}). If so, a successful contains is linearized right at that time (C-i). If not, we call \lstinline{SG::contains()} (Alg.~\ref{alg:sgContains}), which will complete the procedure.
\begin{algorithm}[htb]
\caption{bool Layered::contains(K key)}
\label{alg:contains}
\begin{algorithmic}[1]
	\State SharedNode result = hashtables[threadId].find(key)

	\If{result $\ne$ \textbf{null}}
		\If{\textbf{not} result.getMark(0)}
			\State \Return (result.getMarkValid(0) == (false, valid)) \label{alg:contains:getMarkValid}
		\Else
			\State localstructures[threadId].erase(key)
			\State hashtables[threadId].erase(key)
		\EndIf
	\EndIf 

	\State \Return SG::contains(key)
\end{algorithmic}
\end{algorithm}

The algorithm \lstinline{SG::contains()} (Alg.~\ref{alg:sgContains}) starts by calling \lstinline{getStart()} (Alg.~\ref{alg:getStart}), which finds the closest preceding, unmarked shared node pointed by the local structure (line~\ref{alg:lazyRemove:getStart}). We call this node \lstinline{currentStart}. We then perform a search in line~\ref{alg:contains:retireSearch}, starting from \lstinline{currentStart}, using our procedure \lstinline{retireSearch()} (Alg.~\ref{alg:retireSearch}). (C-ii) If an unmarked shared node with the goal key is not found, we linearize a failed contains operation at the time we found the element succeeding that key in the bottom level. (C-iii) If an unmarked shared node with the goal key is found: (C-iii-a) if we verify the node is unmarked and valid (test of line~\ref{alg:sgContains:getMarkValid}), we linearize a successful contains at that time. (C-iii-b) Otherwise, we linearize the failed contains at the earliest among the time we failed to verify the node was unmarked and valid (line~\ref{alg:sgContains:getMarkValid}), or right after the time the node became unmarked.
\begin{algorithm}[htb]
\caption{bool SG::contains(K key)}
\label{alg:sgContains}
\begin{algorithmic}[1]
	\CommentLine{Search the local structure and get the closest starting point}
	\State currentStart = getStart(key) \lstinline{alg:sgContains:getStart}

	\CommentLine{If an unmarked shared node is not found, the element cannot exist}
	\If{\textbf{not} retireSearch(key, found, currentStart)} \label{alg:contains:retireSearch}
		\State \Return \textbf{false}
	\EndIf

	\State \Return (result.getMarkValid(0) == (false, valid)) \label{alg:sgContains:getMarkValid}
\end{algorithmic}
\end{algorithm}

We now present our second search procedure, \lstinline{retireSearch()} (Alg.~\ref{alg:retireSearch}). This algorithm is employed by contains and remove operations to search for unmarked shared nodes with a goal \lstinline{key}, starting from \lstinline{currentStart}. The algorithm is a simplification of \lstinline{lazyRelinkSearch()} (Alg.~\ref{alg:lazyRelinkSearch}), as it does not keep track of predecessors or successors as the structure is traversed.
\begin{algorithm}[htb]
\caption{bool SG::retireSearch(K key, SharedNode found,\\ LocalStructureIterator currentStart)}
\label{alg:retireSearch}
\begin{algorithmic}[1]
	\State previous = currentStart.sharedNode
	\State current = \textbf{null}

	\For{level = MaxLevel $\rightarrow$ 0}
		\State current = previous.getNext(level)

		\While{current.getMark(0) \textbf{or} current.checkRetire())}
			\State current = current.getNext(level)
		\EndWhile

		\While{current.getKey() < key}
			\State previous = current
			\State current = originalCurrent = previous.getNext(level)

			\While{current.getMark(0) \textbf{or} current.checkRetire())}
				\State current = current.getNext(level)
			\EndWhile
		\EndWhile

		\If{current.getKey() == key \textbf{and not} current.getMark(0)}
			\State found = current
			\State \Return \textbf{true}
		\EndIf
	\EndFor

	\State \Return (successors[0].getKey() == key \textbf{and not} successors[0].getMark(0))
\end{algorithmic}
\end{algorithm}

\section{Extra Insertion Algorithms}
\label{App-ExtraInsertionAlgorithms}

The algorithm \lstinline{updateStart()} (Alg.~\ref{alg:updateStart}) finds an unmarked shared node pointed by the local structure that closest precedes the key associated with \lstinline{currentStart}.

Starting from \lstinline{currentStart}, it traverses the local structure backwards (line~\ref{alg:updateStart:traverseBackwards}) for as long as the shared nodes pointed by the local structure are found marked. Note that \lstinline{updateStart()} is essentially a simplified version of \lstinline{getStart()} that does not finish insertions lazily, and ignores nodes that are not fully inserted as it traverses the local structure backwards.
\begin{algorithm}[htb]
\caption{void updateStart(\textbf{ref} LocalStructureIterator iterator)}
\label{alg:updateStart}
\begin{algorithmic}[1]
	\While{iterator.sharedNode $\ne$ \textbf{null}}
		\State SharedNode sharedNode = iterator.sharedNode
		\If{\textbf{not} sharedNode.getMark(0) \textbf{or} \textbf{not} sharedNode.getMark(MaxLevel)}
			\If{\textbf{not} sharedNode.inserted}
				\State \Return
			\EndIf
		\Else
			\CommentLine{Erase below does not invalidate the iterator}
			\State localstructures[threadId].erase(key)
			\State hashtables[threadId].erase(key)
		\EndIf
		\State result = result.getPrev() \label{alg:updateStart:traverseBackwards}
	\EndWhile
\end{algorithmic}
\end{algorithm}

The algorithm \lstinline{finishInsert()} (Alg.~\ref{alg:finishInsert}) links elements at levels $1 \ldots \mathrm{MaxLevel}$ in the thread's associated skip list. The procedure uses at line~\ref{alg:finishInsert:lazyRelinkSearch} the \lstinline{lazyRelinkSearch()} procedure to locate, at all levels $0 \ldots \mathrm{MaxLevel}$: (A) which nodes should precede \lstinline{toInsert} (referenced in \lstinline{predecessors}); (B) which nodes should succeed \lstinline{toInsert} (referenced in \lstinline{successors}); and (C) nodes referenced in \lstinline{middle} such that \lstinline{middle[i]} contains \lstinline{predecessors[i].getNext(i)} right after the time each predecessor \lstinline{predecessor[i]} was identified. Between \lstinline{predecessors[i]} and \lstinline{successors[i]} we have a sequence of nodes with their level-$i$ references marked. We hope that in the CAS of line~\ref{alg:finishInsert:CAS}, we replace that sequence of nodes with \lstinline{toInsert}. If the CAS fails, we update \lstinline{currentStart} with \lstinline{updateStart()}, and retry the search. If at any moment, we cannot locate an unmarked \lstinline{toInsert}, this node has been marked while this procedure was taking place, and we indicate our failure to build all levels in line~\ref{alg:finishInsert:fail}.
\begin{algorithm}[htb]
\caption{bool finishInsert(SharedNode toInsert, LocalStructureIterator currentStart)}
\label{alg:finishInsert}
\begin{algorithmic}[1]
	\State key = toInsert.key
	\State Array predecessors[MaxLevel], middle[MaxLevel], successors[MaxLevel]

	\While{\textbf{true}}
		\If{\textbf{not} SG::lazyRelinkSearch(key, predecessors, middle, successors, currentStart)} \label{alg:finishInsert:lazyRelinkSearch}
			\State \Return \textbf{false} \label{alg:finishInsert:fail}
		\EndIf

		\For{level = 1 $\rightarrow$ MaxLevel}
			\State SharedNode oldSuccessor = toInsert.getNext(level)	

			\While{\textbf{not} toInsert.casNext(level, oldSuccessor, successors[level])}
				\State SharedNode oldSuccessor = toInsert.getNext(level)

				\If{toInsert.getMark(level)} \Comment{If marked, abort linking}
					\State toInsert.inserted = true
					\State \Return \textbf{false}
				\EndIf
			\EndWhile

			\If{\textbf{not} predecessors[level].casNext(level, middle[level], toInsert)} \label{alg:finishInsert:CAS}
				\State SG::lazyRelinkSearch(key, predecessors, middle, successors, currentStart)
				\State updateStart(currentStart)
				\State level = level - 1 \Comment{Retry the level}
			\EndIf
		\EndFor

		\State toInsert.inserted = true
		\State \Return \textbf{true}
	\EndWhile
\end{algorithmic}
\end{algorithm}

\section{Remove}
\label{App-Remove}

%\textbf{Removal.}
Algorithms~\ref{alg:remove}, \ref{alg:removeHelper}, and~\ref{alg:lazyRemove} show the remove operation for \lstinline{key}. In \lstinline{remove()} (Alg.~\ref{alg:remove}), if the thread finds a reference to a shared node with the goal key, it calls \lstinline{removeHelper()} (Alg.~\ref{alg:removeHelper}). This call returns true only if: (R-i) We found an unmarked, invalid node, so we linearize a failed removal right at that time. There cannot exist another unmarked, valid node in this case, which justifies our linearizability. (R-ii) We successfully unset the \lstinline{valid} bit of an unmarked, valid node, so we linearize the successful removal right at that time as well. If \lstinline{removeHelper()} returned false, the shared node previously found is marked, so (i) we clean the thread's local structure (Alg.~\ref{alg:removeHelper}, lines~\ref{alg:removeHelper:cleanup1} and~\ref{alg:removeHelper:cleanup2}); and (ii) we call \lstinline{lazyRemove()} (Alg.~\ref{alg:lazyRemove}), which will complete the removal.
\begin{algorithm}[htb]
\caption{bool Layered::remove(T key, U value)}
\label{alg:remove}
\begin{algorithmic}[1]
	\State SharedNode result = hashtables[threadId].find(key)

	\If{result $\ne$ \textbf{null}}
		\State bool returnValue = false
		\If{SG::removeHelper(result, returnValue)}
			\State \Return returnValue
		\EndIf
	\EndIf 

	\State \Return SG::remove(key)
\end{algorithmic}
\end{algorithm}

\begin{algorithm}[htb]
\caption{bool SG::removeHelper(SharedNode toRemove, \textbf{ref} bool returnValue)}
\label{alg:removeHelper}
\begin{algorithmic}[1]
	\While{\textbf{true}}
		\If{\textbf{not} toRemove.getMark(0)}
			\If{toRemove.getMarkValid(0) == (false, invalid)} \Comment{Non-existent}
				\State bool returnValue = \textbf{false}
				\State \Return \textbf{true}
			\EndIf
			\If{toRemove.casMarkValid((false, valid), (false, invalid))} \Comment{Flipped \lstinline{valid}}
				\State returnValue = \textbf{true}
				\State \Return \textbf{true}
			\EndIf
		\Else
			\State localstructures[threadId].erase(toRemove.getKey()) \label{alg:removeHelper:cleanup1}
			\State hashtables[threadId].erase(toRemove.getKey()) \label{alg:removeHelper:cleanup2}

			\State \textbf{break}
		\EndIf
	\EndWhile

	\State \Return \textbf{false}
\end{algorithmic}
\end{algorithm}

The algorithm \lstinline{lazyRemove()} (Alg.~\ref{alg:lazyRemove}) starts by calling \lstinline{getStart()} (Alg.~\ref{alg:getStart}), which finds the closest preceding, unmarked shared node pointed by the local structure (line~\ref{alg:lazyRemove:getStart}). We call this node \lstinline{currentStart}. We then perform a search in line~\ref{alg:lazyRemove:retireSearch}, starting from \lstinline{currentStart} and: (R-iii) If an unmarked shared node with the goal key is found (call it \lstinline{found}), then a call to \lstinline{removeHelper()} (line~\ref{alg:lazyRemove:removeHelper}) will define our linearization, just like our previous discussion of Alg.~\ref{alg:remove}. If that call returns false, \lstinline{found} became marked, so we retry the search. (R-iv) No unmarked node with the goal key has been found, so in that case we linearize a failed removal at the time we found the element succeeding that key in the bottom level.
\begin{algorithm}[htb]
\caption{bool SG::lazyRemove(SharedNode current)}
\label{alg:lazyRemove}
\begin{algorithmic}[1]
	\CommentLine{Search the local structure and get the closest starting point}
	\State currentStart = getStart(key) \label{alg:lazyRemove:getStart}

	\While{\textbf{true}}
		\CommentLine{If an unmarked shared node is not found, the element cannot exist}
		\If{\textbf{not} retireSearch(key, toRemove, currentStart)} \label{alg:lazyRemove:retireSearch}
			\State \Return \textbf{false}
		\EndIf

		\State returnValue = false
		\If{SG::removeHelper(successors[0], returnValue)} \label{alg:lazyRemove:removeHelper}
			\State \Return returnValue
		\EndIf
	\EndWhile
\end{algorithmic}
\end{algorithm}

\subsection{Retiring Nodes}
\label{App-RemoveRetire}

The algorithm \lstinline{checkRetire()} (Alg.~\ref{alg:checkRetire}) is performed as we search for elements on behalf of \lstinline{insert()} or \lstinline{remove()}. The procedure returns true only if a node is unmarked and its commission period has expired, indicating that the node a candidate for physical removal. A subsequent call to \lstinline{retire()} will \emph{try} to start the physical removal process over a node, if the latter is found invalid.
\begin{algorithm}[htb]
\caption{bool SG::checkRetire(SharedNode current)}
\label{alg:checkRetire}
\begin{algorithmic}[1]
	\State isAliveInvalid = (current.getMarkValid(0) == (\textbf{false}, invalid)
	\State isCommissionFinished = (timestamp() - current.allocTimestamp) > CommissionPeriod
	\If{isAliveInvalid \textbf{and} isCommissionFinished}
		\State \Return retire(current)
	\EndIf
	\State \Return \textbf{false}
\end{algorithmic}
\end{algorithm}

The algorithm \lstinline{retire()} (Alg.~\ref{alg:retire}) tries to atomically invalidate a node's bottom level reference, while guaranteeing that this reference was invalid throughout the process (line~\ref{alg:retire:CAS}). The CAS performed at such line can fail because (i) a node's reference has been changed; or (ii) because the node became just before the CAS. In either situation, our CAS will fail and we simply return false. In an active node, particularly under case (ii), we would prefer to \emph{keep} the node unmarked, so returning false has an operational advantage.
\begin{algorithm}[htb]
\caption{bool SG::retire(SharedNode current)}
\label{alg:retire}
\begin{algorithmic}[1]
	\If{\textbf{not} current.casMarkValid(0, (\textbf{false}, invalid), (\textbf{true}, invalid))} \label{alg:retire:CAS}
		\State \Return \textbf{false}
	\EndIf

	\For{level = MaxLevel $\rightarrow$ 1}
		\While{\textbf{not} current.getMark(level)}
			\State current.casMark(level, \textbf{false}, \textbf{true})
		\EndWhile
	\EndFor

	\State \Return \textbf{true}
\end{algorithmic}
\end{algorithm}

\section{More Experimental Results: LC tests}
\label{App-LCTests}

We believe the LC-WH scenario is very meaningful from an analytic standpoint. First, it shows that our layered approaches take full advantage of a low contention setting, matching the performance of a locked skip list (expected to work very well in such scenarios).
 
We attribute this to (i) highly effective local structures, with no overhead of atomic operations or locks; (ii) to the fact that new nodes need to be inserted in only 2 levels on expectation, instead of a fixed 2-6, depending on the number of threads, as prescribed by our partitioning policy; (iii) to a better cache efficiency, documented in Tbl.~\ref{tbl:cache_misses}; and (iv) to better NUMA locality (verified in item \#2, below). The performance of \texttt{layered\_map\_ssg} in fact reveals an interesting tradeoff: although less shared nodes reach the topmost level and get included in the local structures, thus making the starting point of shared operations not as ``close'' as in the non-sparse version, the number of CAS operations required to insert nodes is substantially lower than \texttt{layered\_map\_sg}. The poor performance of non-layered skip graphs also reflects a higher number of required CAS operations for insertion.

Another reason for why the LC-WH scenario appears meaningful is that it allows us to visualize the lower performance of \texttt{lazy\_layered\_map\_sg} as compared to \texttt{layered\_map\_sg}, what we believe to be a direct consequence of our conservative commission policy to unlink invalid, marked nodes (p.~\pageref{commission}). We think this happens as the commission policy (essentially, a lazy unlink policy) is making the shared structure contain more invalid or marked nodes present in the LC-WH scenario compared to the HC-WH or MC-WH scenarios. In any case, we do think that the commission policy overhead is a fair price for the highly increased performance in the MC and HC scenarios, yet still allowing us to outperform competitor maps in the LC-WH as well. A better NUMA locality, and decreased contention (verified in item \#2, below) seems to be making up for the overhead in the LC-WH scenario. In the MC-WH scenario, we were very impressed with the performance of the \texttt{nohotspot} approach, noting that we just barely outperformed it. This further indicates that the commission policy might be subject to having a ``sweet spot'', and thus our performance might be responsive to more elaborated commission policy designs, an interesting prospect for future work.

\begin{figure}[htb]
    \centering
	\begin{minipage}{0.48\textwidth}
		\centering
		\includegraphics[width=\textwidth]{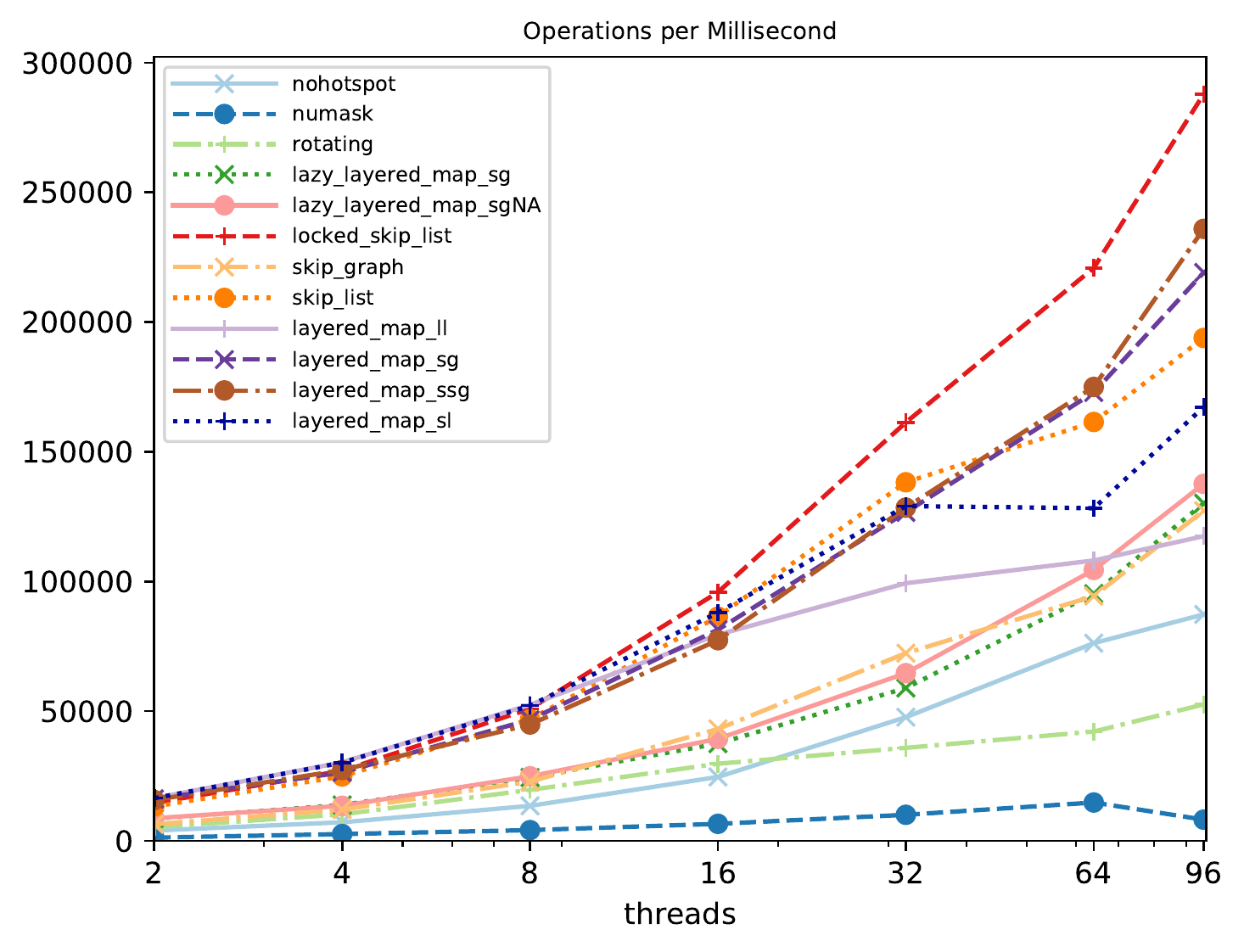}
		\caption{LC, WH: 4\% effective updates.}
		\label{fig:performance-lc-wh}
	\end{minipage}\hfill
	% Insert another minifigure here
\end{figure}

\section{More Experimental Results: RH tests}
\label{App-RHTests}

We present here results analogous to Figs.~\ref{fig:performance-hc-wh}, \ref{fig:performance-mc-wh}, and \ref{fig:performance-lc-wh}, but showing performance on RH scenarios. Please refer to the discussion of p. \pageref{discussion:performance}.

\begin{figure}[H]
    \centering
	\begin{minipage}{0.48\textwidth}
		\centering
		\includegraphics[width=\textwidth]{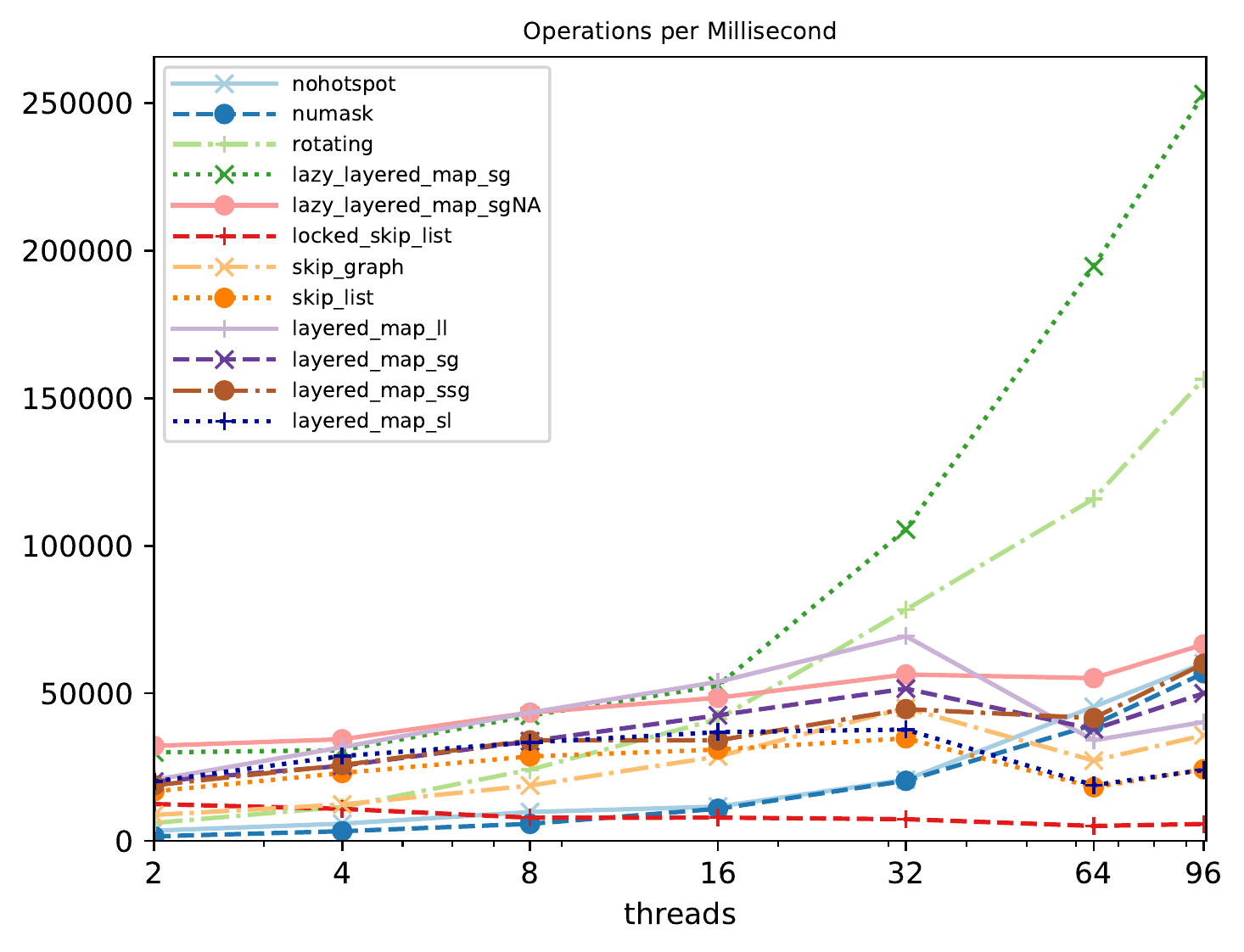}
		\caption{HC, RH: 32\% effective updates.}
		\label{fig:performance-hc-rh}
	\end{minipage}\hfill
	\begin{minipage}{0.48\textwidth}
		\centering
		\includegraphics[width=\textwidth]{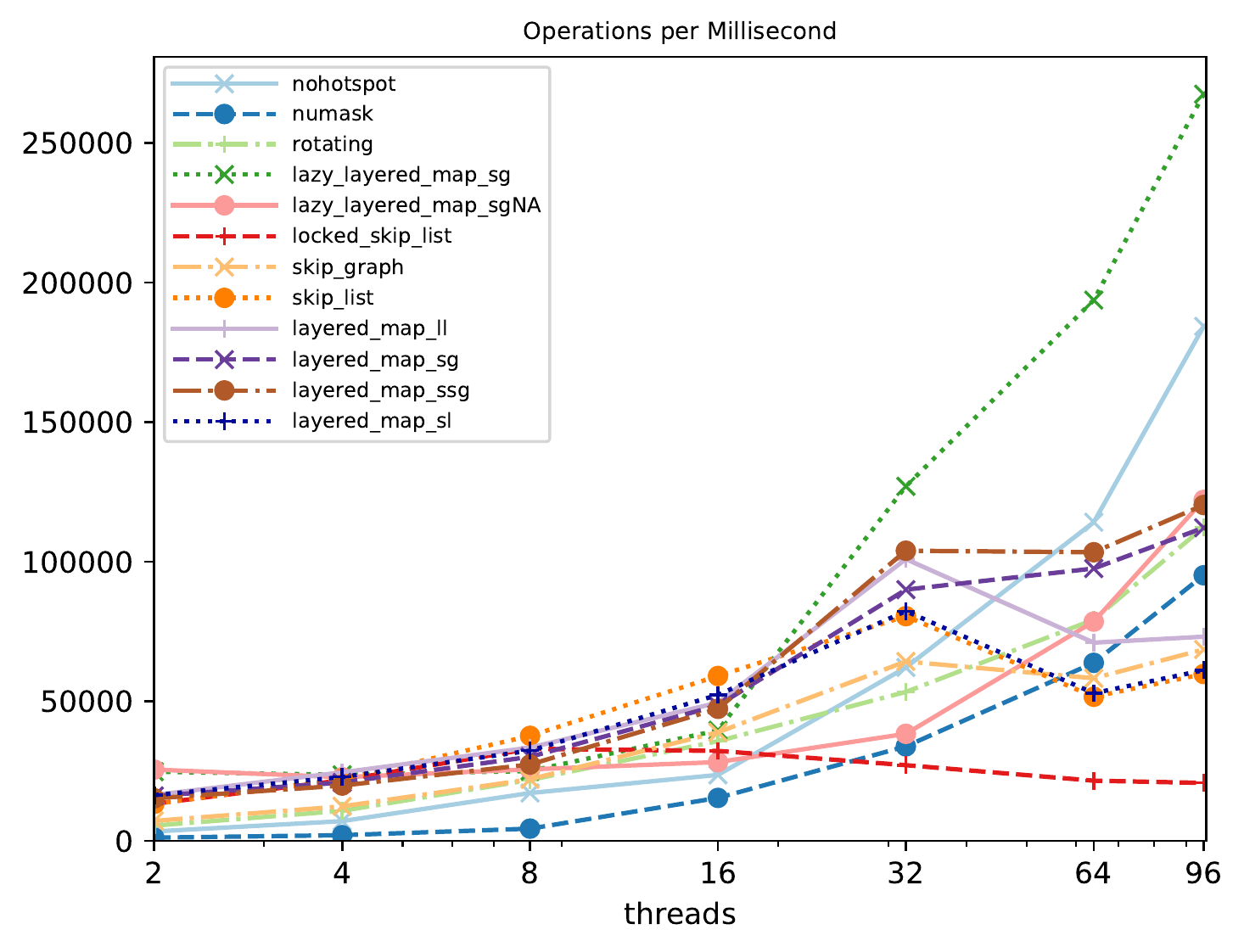}
		\caption{MC, RH: 32\% effective updates.}
		\label{fig:performance-mc-rh}
	\end{minipage}
\end{figure}

\begin{figure}[H]
    \centering
	\begin{minipage}{0.48\textwidth}
		\centering
		\includegraphics[width=\textwidth]{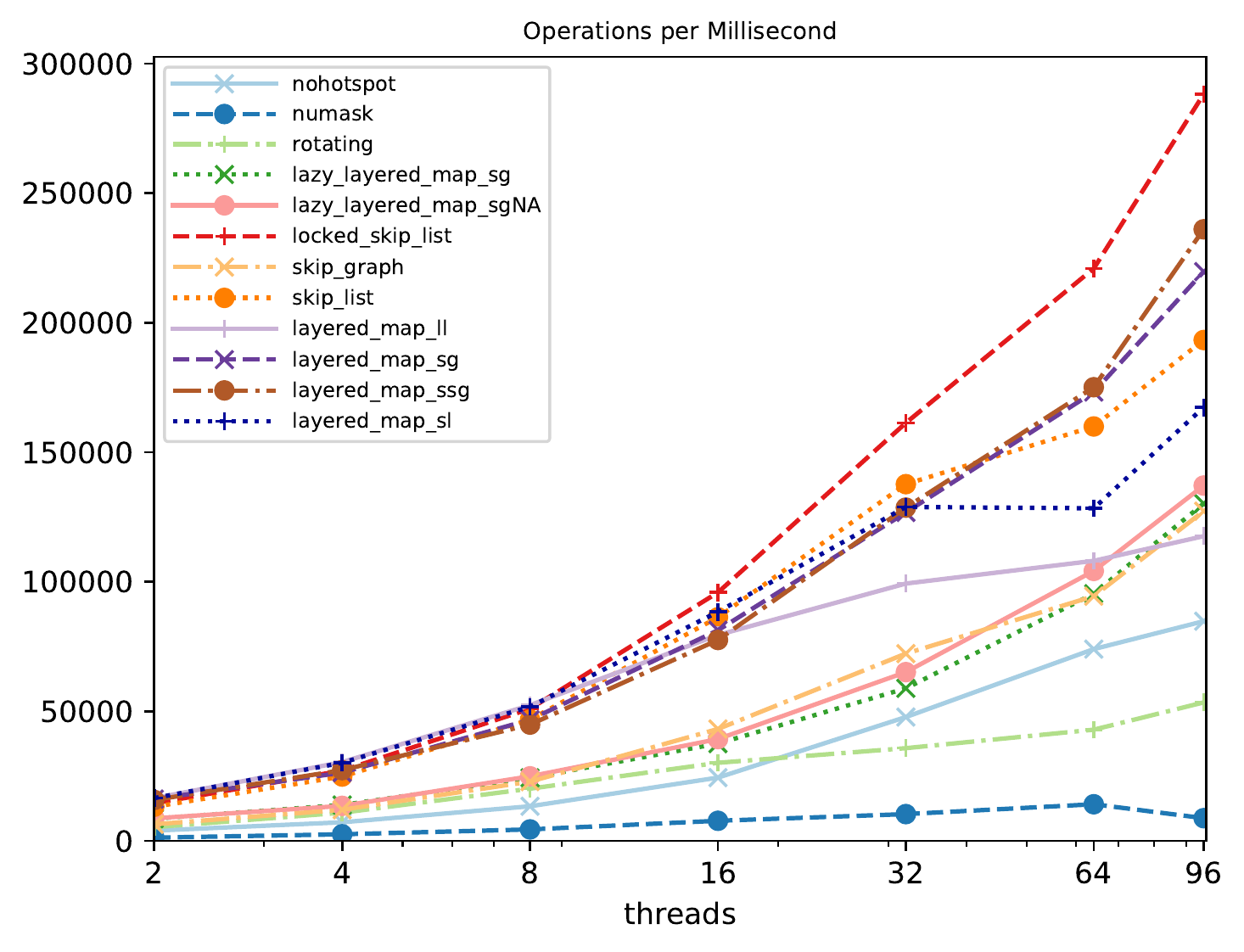}
		\caption{LC, RH: 4\% effective updates.}
		\label{fig:performance-lc-rh}
	\end{minipage}\hfill
\end{figure}

\section{More Experimental Results: Locality Heatmaps}
\label{App-LocalityHeatmaps}

We present here results analogous to Figs.\ref{fig:heatmap-cas-lazysg} and~\ref{fig:heatmap-cas-sl}, but showing read locality heatmaps. Please refer to the discussion of p. \pageref{discussion:locality}. The graphics presented here are not normalized by the total number of CAS operations. So, for example, the difference in scale between our lazy layered skip graph and a skip list, results from the former performing about 8x the number of operations of the latter.

We note that in Fig.~\ref{fig:heatmap-read-ssg}, the vertical line at thread 1 represents accesses to the \lstinline{head} array of the shared structure, which we arbitrarily attributed to that thread. We expect (and verify) that \lstinline{head} accesses are more common with \texttt{layered\_map\_ssg} as compared to \texttt{layered\_map\_sg}, due to the smaller size of the shared structure.

\begin{figure}[H]
    \centering
	\begin{minipage}{0.48\textwidth}
		\centering
		\includegraphics[width=\textwidth]{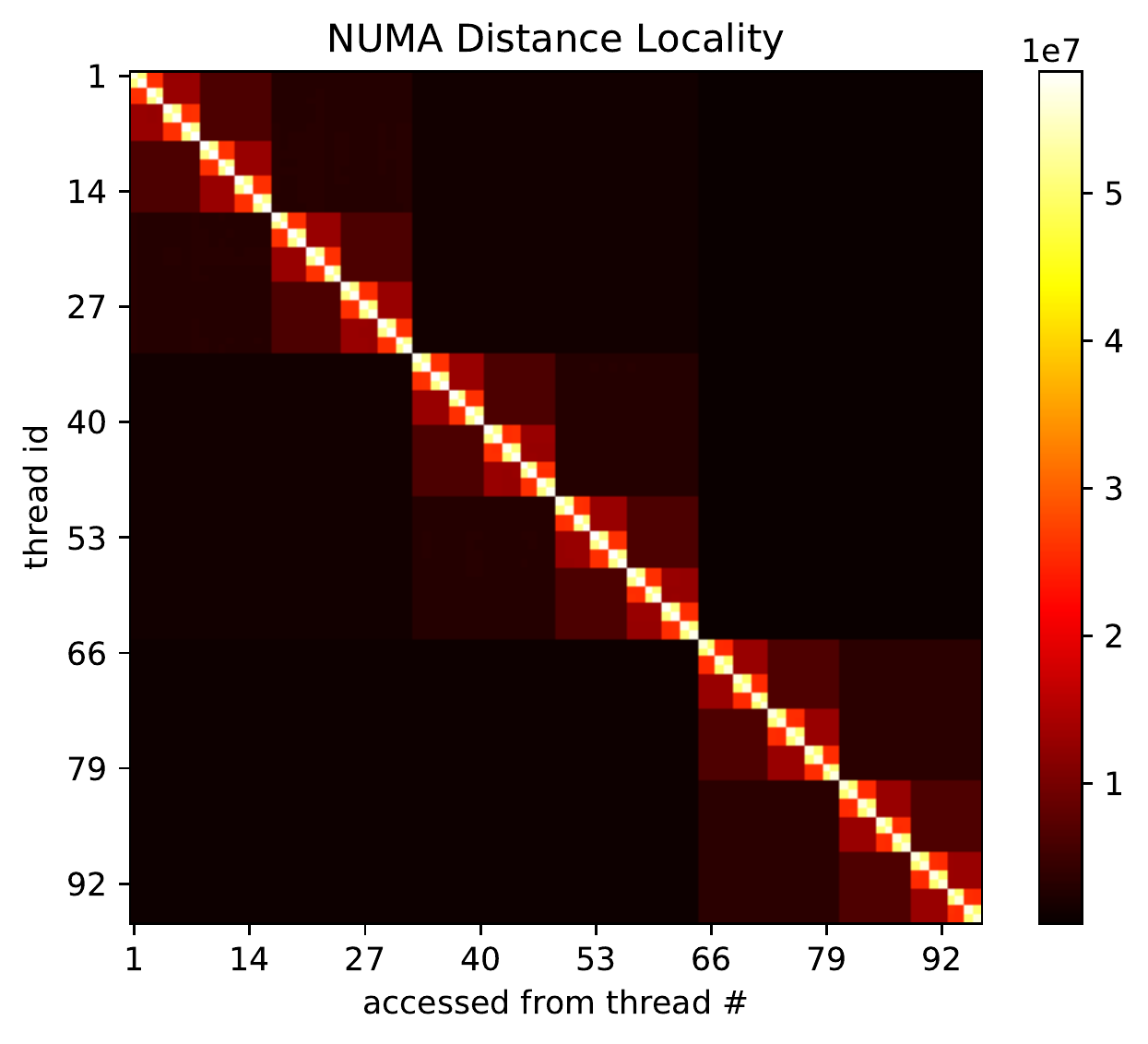}
		\caption{Read heatmap, lazy map/SG.}
		\label{fig:heatmap-read-lazysg}
	\end{minipage}\hfill
	\begin{minipage}{0.48\textwidth}
		\centering
		\includegraphics[width=\textwidth]{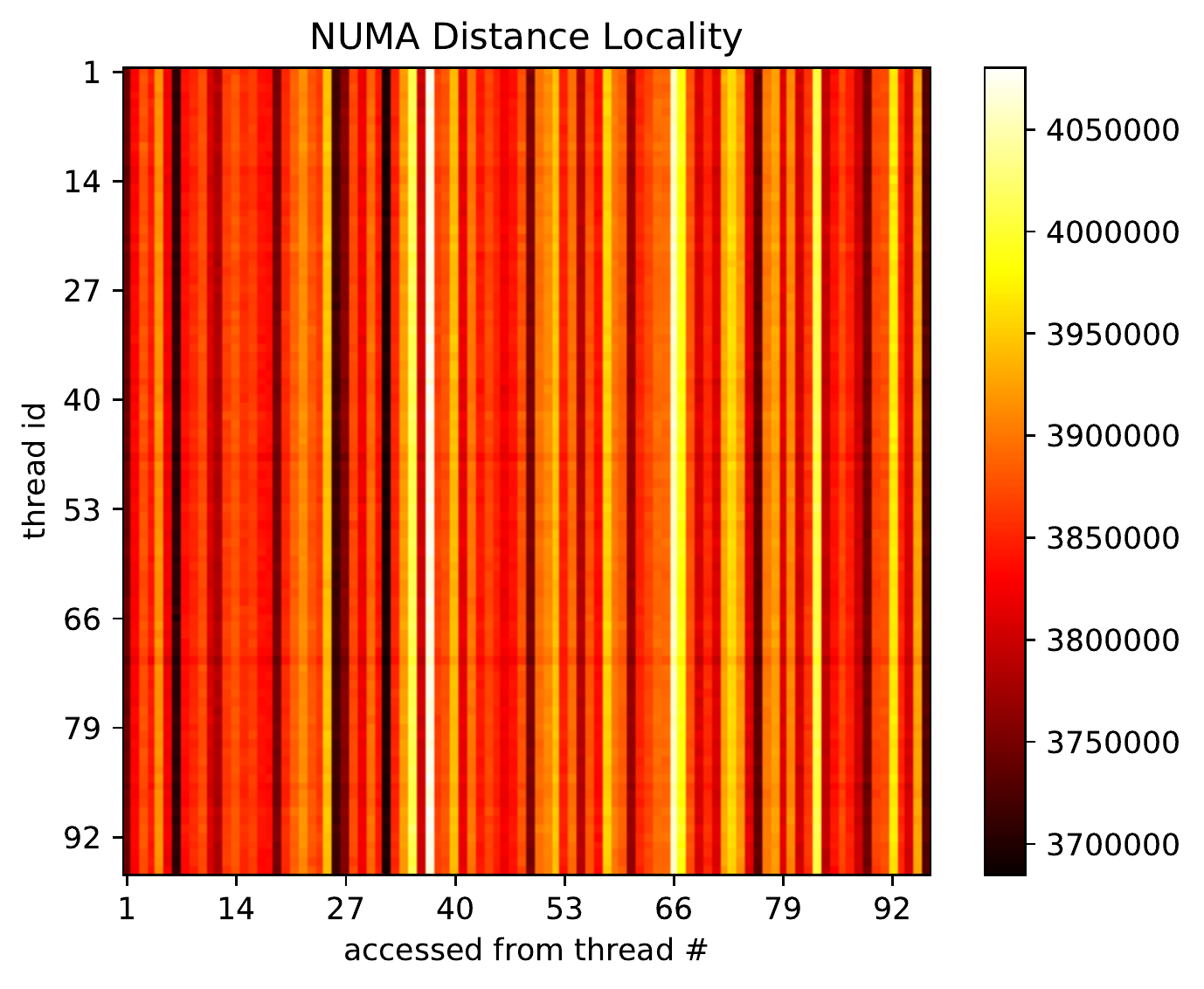}
		\caption{Read heatmap, skip list.}
		\label{fig:heatmap-read-sl}
	\end{minipage}
\end{figure}

\begin{figure}[H]
    \centering
	\begin{minipage}{0.48\textwidth}
		\centering
		\includegraphics[width=\textwidth]{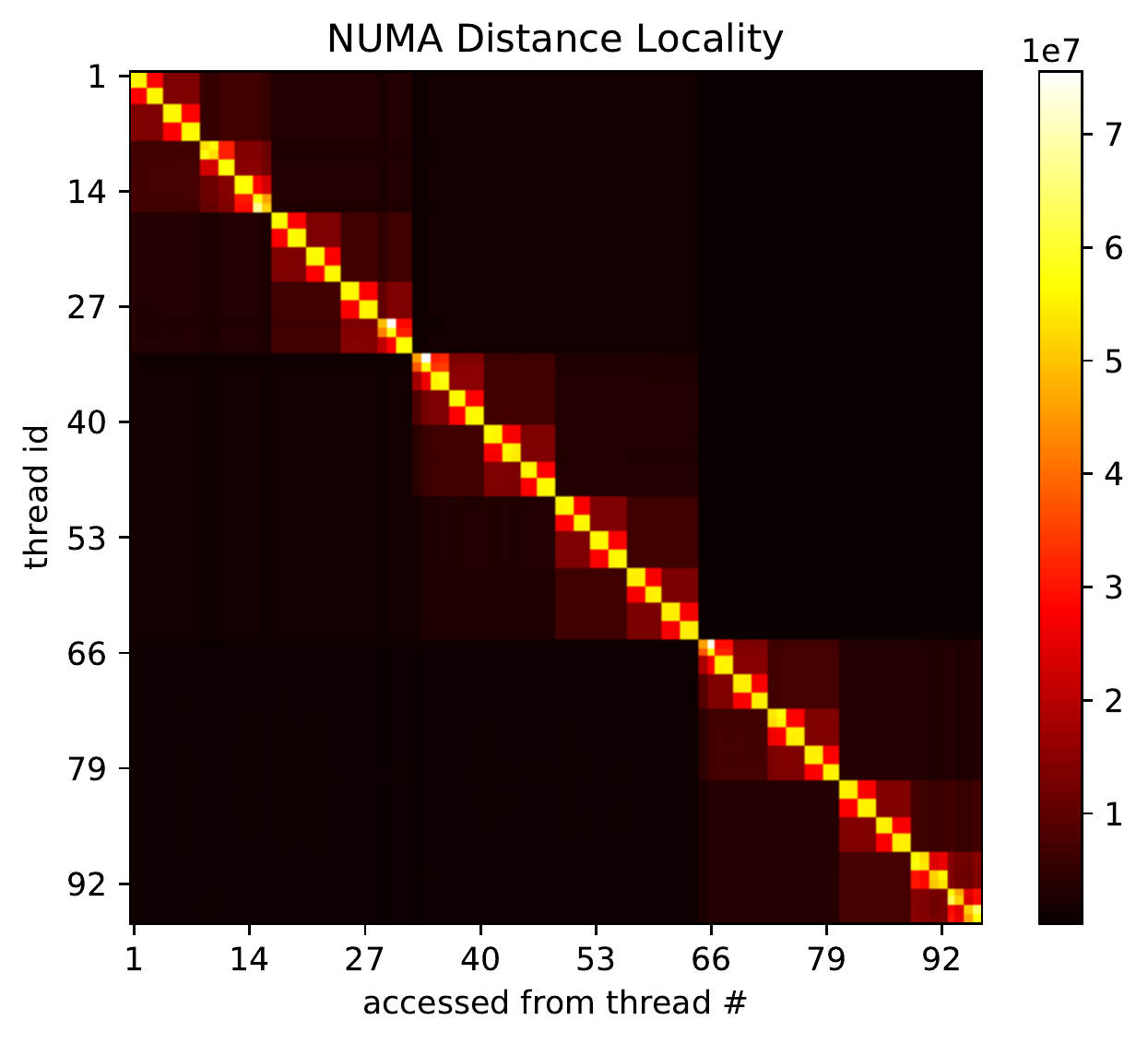}
		\caption{Read heatmap, non-lazy map/SG.}
		\label{fig:heatmap-read-sg}
	\end{minipage}\hfill
	\begin{minipage}{0.48\textwidth}
		\centering
		\includegraphics[width=\textwidth]{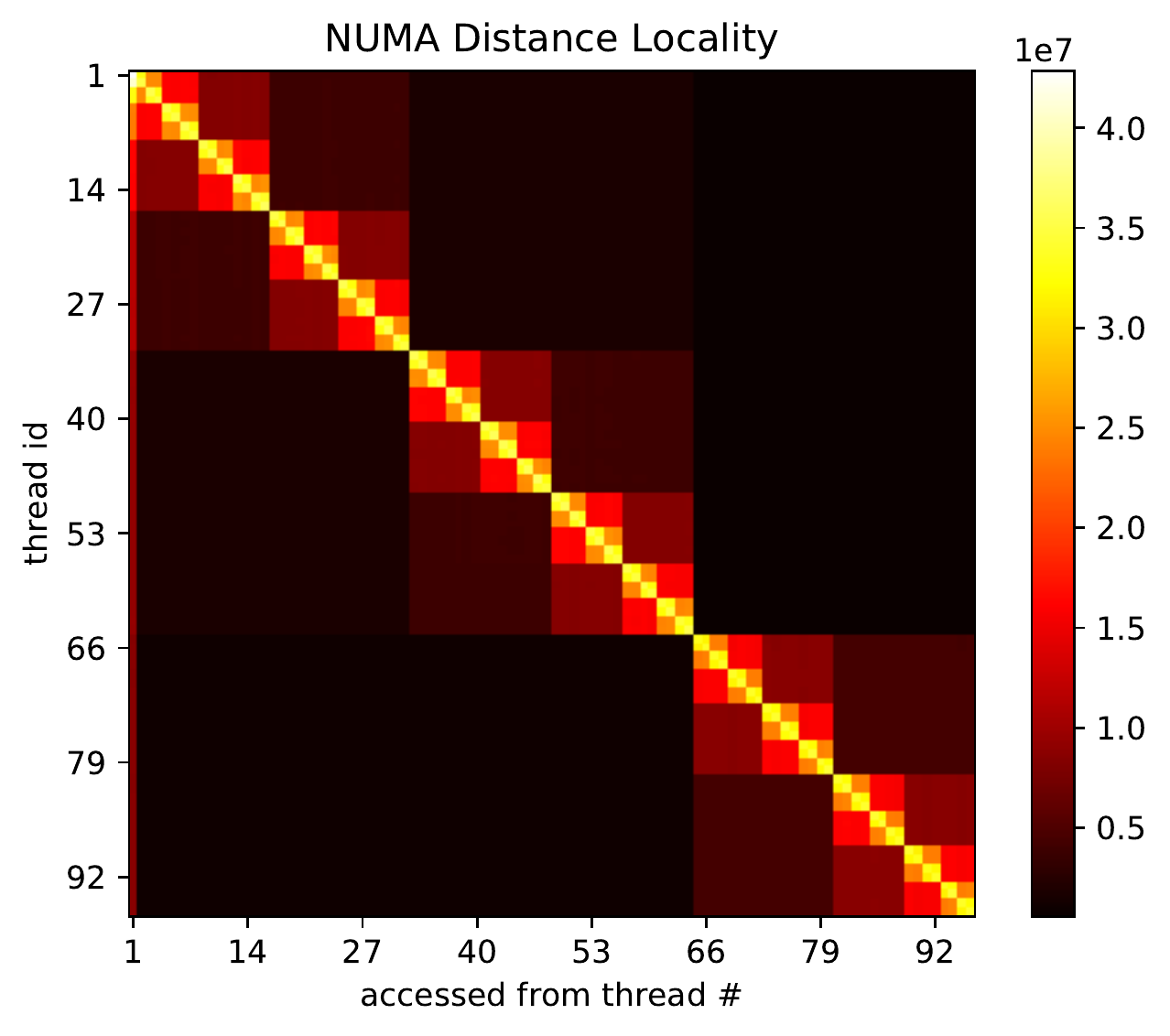}
		\caption{Read heatmap, sparse skip graph.}
		\label{fig:heatmap-read-ssg}
	\end{minipage}
\end{figure}

\end{document}